\newtheorem{theorem}{Theorem}[section]
\newtheorem{lemma}[theorem]{Lemma}
\newtheorem{corollary}[theorem]{Corollary}
\newtheorem{conj}[theorem]{Conjecture}
\newtheorem{example}{Example}
\newcommand{\tr}{{\mathrm{Tr}}}
\newcommand{\Norm}{{\mathrm{Norm}}}
\newcommand{\gf}{{\mathrm{GF}}}
\newcommand{\PG}{{\mathrm{PG}}}
\newcommand{\wt}{{\mathtt{wt}}}
\newcommand{\F}{{\mathbb{F}}}
\newcommand{\bC}{{\mathbb{C}}}
\newcommand{\C}{{\mathcal{C}}}
\newcommand{\bc}{{\mathbf{c}}}
\newcommand{\Rmnum}[1]{\expandafter\@slowromancap\romannumeral #1@}
\newcommand\myatop[2]{\genfrac{}{}{0pt}{}{#1}{#2}}
\begin{document}
%
\title{The Subfield Codes of $[q+1, 2, q]$ MDS Codes\thanks{Z. Heng's research was supported in part by the National Natural Science Foundation of China under grant number 11901049, in part by the the Natural Science Basic Research Program of Shaanxi under grant number 2020JQ-343, and in part by the Young Talent fund of University Association for Science and Technology in Shaanxi, China.
C. Ding's research was supported by the Hong Kong Research Grants Council,
Proj. No. 16300919.}
}

\author{
Ziling Heng\thanks{Z. Heng is with the School of Science, Chang'an University, Xi'an 710064, China (email: zilingheng@163.com)},
 Cunsheng Ding\thanks{C. Ding is with the Department of Computer Science and Engineering, The Hong Kong University of Science and Technology, Clear Water Bay, Kowloon, Hong Kong, China (email: cding@ust.hk)}
 }

\maketitle

\begin{abstract}
Recently, subfield codes of geometric codes over large finite fields $\gf(q)$ with dimension $3$ and $4$ were studied and distance-optimal subfield codes over $\gf(p)$ were obtained, where $q=p^m$. The key idea for obtaining very good subfield codes over small fields is to choose very good linear codes over an extension field with small dimension. This paper first presents a general construction of $[q+1, 2, q]$ MDS codes over $\gf(q)$, and then studies the subfield codes over $\gf(p)$ of some of the $[q+1, 2,q]$ MDS codes over $\gf(q)$. Two families of dimension-optimal codes over $\gf(p)$ are obtained, and several families of nearly optimal codes over $\gf(p)$ are produced. Several open problems are also proposed in this paper.

\end{abstract}

\begin{IEEEkeywords}
Linear code, \and weight distribution, \and subfield code, \and oval polynomial.
\end{IEEEkeywords}

%
\IEEEpeerreviewmaketitle

\section{Introduction}

 Let $q$ be a prime power. Let $n, k, d$ be positive integers.
 An $[n,\, k,\, d]$ \emph{code} $\C$ over $\gf(q)$ is a $k$-dimensional subspace of $\gf(q)^n$ with minimum
 (Hamming) distance $d$.
 Let $A_i$ denote the number of codewords with Hamming weight $i$ in a code
 $\C$ of length $n$. The {\em weight enumerator} of $\C$ is defined by
 $1+A_1z+A_2z^2+ \cdots + A_nz^n$, and
 the sequence $(1, A_1, A_2, \cdots, A_n)$ is called the \emph{weight distribution} of the code $\C$.
 A code $\C$ is said to be a $t$-weight code  if the number of nonzero
 $A_i$ in the sequence $(A_1, A_2, \cdots, A_n)$ is equal to $t$.
 An $[n,\, k,\, d]$ code over $\gf(q)$ is called \emph{distance-optimal} if there is no
 $[n,\, k,\, d']$ code over $\gf(q)$ with $d'>d$,  \emph{dimension-optimal} if there is no
 $[n,\, k',\, d]$ code over $\gf(q)$ with $k' >k$, and  \emph{length-optimal} if there is no
 $[n',\, k,\, d]$ code over $\gf(q)$ with $n' <n$.  A code is said to be optimal if it is
 distance-optimal, or dimension-optimal, or length-optimal, or it meets a bound for
 linear codes. An $[n, k, d]$ code over $\gf(q)$  is said to be \emph{nearly optimal} if an $[n-1, k, d]$ code over $\gf(q)$ is
 distance-optimal, or an $[n, k+1, d]$ code over $\gf(q)$ is dimension-optimal, or an $[n, k, d+1]$ code over
 $\gf(q)$ is distance-optimal.

\subsection{Subfield codes and their properties}\label{sec-subfieldcodes}

Let $\gf(p^m)$ be a finite field with $p^m$ elements, where $p$ is a prime and $m$ is a positive integer. In this section, we introduce subfield codes of linear codes and some basic results of subfield codes.

Given an $[n,k]$ code $\C$ over $\gf(p^m)$, we construct a new $[n, k']$ code $\C^{(p)}$
over $\gf(p)$ as follows. Let $G$ be a generator matrix of $\C$. Take a basis of $\gf(p^m)$
over $\gf(p)$. Represent each entry of $G$ as an $m \times 1$ column vector of $\gf(p)^m$
with respect to this basis, and replace each entry of $G$ with the corresponding $m \times 1$ column vector of $\gf(p)^m$. In this way, $G$ is modified into a $km \times n$ matrix over
$\gf(p)$, which generates the new subfield code $\C^{(p)}$ over $\gf(p)$ with length $n$.
By definition, the dimension $k'$ of $\C^{(p)}$ satisfies $k'\leq mk$. It was proved in
\cite{DH19} that
the subfield code $\C^{(p)}$ of $\C$ is independent of the choices of both $G$ and the
basis of $\gf(p^m)$ over $\gf(p)$. Note that the subfield code $\C^{(p)}$ contains the
subfield subcodes over $\gf(p)$ of $\C$ as a subset and the two codes over $\gf(q)$
are different in general. Notice that the subfield subcodes have been well studied \cite{Dels}.

The following two theorems document basic properties of subfield codes \cite{DH19}.

\begin{theorem}\label{th-tracerepresentation}
Let $\C$ be an $[n,k]$ linear code over $\gf(p^m)$. Let $G=[g_{ij}]_{1\leq i \leq k, 1\leq j \leq n}$ be a generator matrix of $\C$. Then the trace representation of $\C^{(p)}$ is given by
$$
\C^{(p)}=\left\{\left(\tr_{p^m/p}\left(\sum_{i=1}^{k}a_ig_{i1}\right),
\cdots,\tr_{p^m/p}\left(\sum_{i=1}^{k}a_ig_{in}\right)\right):a_1,\ldots,a_k\in \gf(p^m)\right\}.
$$
\end{theorem}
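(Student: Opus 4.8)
The plan is to write down both descriptions of $\C^{(p)}$ explicitly in coordinates and then match them via a change of variables coming from the trace-dual basis. First I would fix a basis $\{\alpha_1,\dots,\alpha_m\}$ of $\gf(p^m)$ over $\gf(p)$; this is legitimate because it is already known that $\C^{(p)}$ does not depend on the choice of basis (or of $G$). Writing each generator entry as $g_{ij}=\sum_{s=1}^m g_{ij,s}\alpha_s$ with $g_{ij,s}\in\gf(p)$, the construction replaces $g_{ij}$ by the column $(g_{ij,1},\dots,g_{ij,m})^{\mathsf T}$, so the modified $km\times n$ matrix $G'$ over $\gf(p)$ has rows indexed by pairs $(i,s)$, the $(i,s)$-th row being $(g_{i1,s},\dots,g_{in,s})$. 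Hence a generic codeword of $\C^{(p)}$, i.e.\ a $\gf(p)$-linear combination of the rows of $G'$, has $j$-th coordinate $\sum_{i=1}^k\sum_{s=1}^m c_{is}\,g_{ij,s}$ for scalars $c_{is}\in\gf(p)$.

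The key device is the dual basis. I would introduce the basis $\{\beta_1,\dots,\beta_m\}$ dual to $\{\alpha_1,\dots,\alpha_m\}$ with respect to the trace form, characterized by $\tr_{p^m/p}(\alpha_s\beta_t)=\delta_{st}$. Since each $g_{ij,t}$ lies in $\gf(p)$ and the trace is $\gf(p)$-linear, this gives the coordinate-extraction identity $g_{ij,s}=\tr_{p^m/p}(g_{ij}\beta_s)$. Substituting it into the $j$-th coordinate above and using additivity and $\gf(p)$-linearity of the trace yields
\[
\sum_{i=1}^k\sum_{s=1}^m c_{is}\,\tr_{p^m/p}(g_{ij}\beta_s)
=\tr_{p^m/p}\!\Big(\sum_{i=1}^k g_{ij}\,a_i\Big),
\qquad a_i:=\sum_{s=1}^m c_{is}\beta_s .
\]

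Finally I would observe that the assignment $(c_{is})_{i,s}\mapsto(a_1,\dots,a_k)$ is a bijection from $\gf(p)^{km}$ onto $\gf(p^m)^k$: for each fixed $i$, as $(c_{i1},\dots,c_{im})$ ranges over $\gf(p)^m$ the element $a_i$ ranges bijectively over $\gf(p^m)$, precisely because $\{\beta_1,\dots,\beta_m\}$ is a $\gf(p)$-basis. Consequently, as the scalars $c_{is}$ run over $\gf(p)$ the tuples $(a_1,\dots,a_k)$ run over all of $\gf(p^m)^k$, and the coordinate description of $\C^{(p)}$ becomes exactly the claimed trace representation. I do not expect a serious obstacle here; the only points requiring care are the correct placement of the dual basis---using $\{\beta_s\}$ rather than $\{\alpha_s\}$ to recover the coordinates $g_{ij,s}$---and verifying that the change of variables $a_i=\sum_s c_{is}\beta_s$ is a genuine bijection, which is what guarantees that the two sets are equal rather than one merely containing the other.
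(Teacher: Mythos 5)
The paper does not prove this theorem itself; it is quoted from \cite{DH19}, where the argument is precisely the dual-basis computation you give. Your proof is correct: the coordinate-extraction identity $g_{ij,s}=\tr_{p^m/p}(g_{ij}\beta_s)$ and the bijectivity of $(c_{is})_{i,s}\mapsto(a_1,\dots,a_k)$ are exactly the two points needed to show the two descriptions coincide as sets, and as a byproduct your computation also re-derives the independence of $\C^{(p)}$ from the choice of basis, so the appeal to that prior fact is not even necessary.
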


Denote by $\C^{\perp}$ and $\C^{(p)\perp}$ the dual codes of $\C$ and its subfield code $\C^{(p)}$, respectively. Let $\C^{\perp (p)}$ denote the subfield code of $\C^{\perp}$.
Since the dimensions of $\C^{\perp (p)}$ and $\C^{(p)\perp}$ vary from case to case,
there may not be a general relation between the two codes $\C^{\perp (p)}$ and $\C^{(p)\perp}$.

A relation between the minimal distance of $\C^{\perp}$ and that of $\C^{(p)\perp}$ is given as follows.
\begin{theorem}\label{th-dualdistance}
Let $\C$ be an $[n,k]$ linear code over $\gf(p^m)$. Then the minimal distance $d^\perp$ of $\C^{\perp}$ and the minimal distance $d^{(p)\perp}$ of $\C^{(p)\perp}$ satisfy
$d^{(p)\perp}\geq d^\perp.$
\end{theorem}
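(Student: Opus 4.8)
The plan is to exploit the classical characterization of the minimum distance of a dual code in terms of linearly dependent columns of a generator matrix, and then to transfer a linear dependence over $\gf(p)$ to one over $\gf(p^m)$. Recall that for any matrix $M$ over a field whose rows span a code $D$, the minimum distance of $D^{\perp}$ equals the smallest number of columns of $M$ that are linearly dependent: a nonzero vector lies in $D^{\perp}$ precisely when its nonzero entries furnish a linear dependence among the correspondingly indexed columns of $M$, and this holds whether or not $M$ has full row rank. Applying this to a generator matrix $G$ of $\C$ shows that $d^{\perp}$ equals the minimum number of $\gf(p^m)$-linearly dependent columns of $G$. Let $G^{(p)}$ be the $km\times n$ matrix over $\gf(p)$ obtained from $G$ by the subfield construction of Section~\ref{sec-subfieldcodes}; since its rows span $\C^{(p)}$, the same characterization shows that $d^{(p)\perp}$ equals the minimum number of $\gf(p)$-linearly dependent columns of $G^{(p)}$.

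Next I would make the column structure of $G^{(p)}$ explicit. Fix the basis of $\gf(p^m)$ over $\gf(p)$ used in the construction and let $\phi\colon\gf(p^m)\to\gf(p)^{m}$ be the associated coordinate map, which is $\gf(p)$-linear and injective. The matrices $G$ and $G^{(p)}$ share the column index set $\{1,\dots,n\}$; writing $g_{ij}$ for the $(i,j)$ entry of $G$, the $j$-th column of $G^{(p)}$ is the vertical stacking of the blocks $\phi(g_{1j}),\dots,\phi(g_{kj})$.

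The key step is to show that $\gf(p)$-dependence of columns of $G^{(p)}$ forces $\gf(p^m)$-dependence of the correspondingly indexed columns of $G$. Let $\bu_{1},\dots,\bu_{n}\in\gf(p)^{km}$ denote the columns of $G^{(p)}$ and $\bv_{1},\dots,\bv_{n}\in\gf(p^m)^{k}$ the columns of $G$, and suppose $\sum_{s=1}^{t}c_s\bu_{j_s}=\bzero$ with $c_s\in\gf(p)$ not all zero. Reading off the $i$-th block of length $m$ gives $\sum_{s=1}^{t}c_s\,\phi(g_{i,j_s})=\bzero$ for every $i$. Because each $c_s$ lies in $\gf(p)$ and $\phi$ is $\gf(p)$-linear, the left-hand side equals $\phi\!\bigl(\sum_{s=1}^{t}c_s g_{i,j_s}\bigr)$, and injectivity of $\phi$ forces $\sum_{s=1}^{t}c_s g_{i,j_s}=0$ for every $i$. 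These $k$ scalar identities say exactly that $\sum_{s=1}^{t}c_s\bv_{j_s}=\bzero$, a nontrivial $\gf(p^m)$-linear dependence among columns $j_1,\dots,j_t$ of $G$, since the coefficients $c_s\in\gf(p)\subseteq\gf(p^m)$ are not all zero.

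Consequently every $\gf(p)$-linearly dependent set of columns of $G^{(p)}$ produces a $\gf(p^m)$-linearly dependent set of columns of $G$ of the same cardinality, so the minimum size of a dependent column set cannot decrease in passing from $G$ to $G^{(p)}$. This yields $d^{(p)\perp}\geq d^{\perp}$, as claimed. I expect the one delicate point to be the first step: one must be certain that the column-dependence description of the dual distance applies to $G^{(p)}$ even when $\dim\C^{(p)}<km$, i.e.\ when $G^{(p)}$ fails to have full row rank. This causes no trouble, because that description is valid for any matrix whose rows generate the code, not merely for full-rank generator matrices.
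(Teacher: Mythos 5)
Your proof is correct. Note that the paper itself gives no proof of Theorem~\ref{th-dualdistance}: it is quoted from the reference [DH19] as a known basic property of subfield codes. Your argument — characterizing $d^{\perp}$ and $d^{(p)\perp}$ as the minimum number of linearly dependent columns of $G$ and of $G^{(p)}$ respectively, and then transferring any $\gf(p)$-linear dependence among columns of $G^{(p)}$ to a $\gf(p^m)$-linear dependence (with the same scalars, viewed in the subfield) among the corresponding columns of $G$ via the $\gf(p)$-linearity and injectivity of the coordinate map $\phi$ — is the standard route to this inequality and matches the spirit of the cited source. You are also right to flag, and correctly dispose of, the only delicate point: the column-dependence description of the dual distance applies to any spanning matrix, so the possible rank deficiency of $G^{(p)}$ when $\dim\C^{(p)}<km$ causes no trouble.
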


Two linear codes $\C_1$ and $\C_2$ are {\em permutation equivalent\index{permutation equivalent
of codes}} if there is a permutation of coordinates which sends $\C_1$ to $\C_2$. If $\C_1$ and $\C_2$
are permutation equivalent, so are $\C_1^\perp$ and $\C_2^\perp$. Two permutation equivalent
linear codes have the same dimension and weight distribution.

A \emph{monomial matrix\index{monomial matrix}} over a field $\F$ is a square matrix having exactly one
nonzero element of $\F$  in each row and column. A monomial matrix $M$ can be written either in
the form $DP$ or the form $PD_1$, where $D$ and $D_1$ are diagonal matrices and $P$ is a permutation
matrix.

Let $\C_1$ and $\C_2$ be two linear codes of the same length over $\F$. Then $\C_1$ and $\C_2$
are \emph{monomially equivalent\index{monomially equivalent}} if there is a monomial matrix over $\F$
such that $\C_2=\C_1M$. Monomial equivalence and permutation equivalence are precisely the same for
binary codes. If $\C_1$ and $\C_2$ are monomially equivalent, then they have the same weight distribution.

Let $\C$ and $\C'$ be two monomially equivalent $[n,k]$ code over $\gf(p^m)$.
Let $G=[g_{ij}]$ and $G=[g'_{ij}]$ be two generator matrices of $\C$ and $\C'$,
respectively. By definition, there exist a permutation $\sigma$ of the set
$\{1, 2, \cdots, n\}$ and elements $b_1, b_2, \cdots, b_n$ in $\gf(p^m)^*$
such that
$$
g_{ij}=b_jg'_{i\sigma(j)}
$$
for all $1 \leq i \leq k$ and $1 \leq j \leq n$. It then follows that
\begin{eqnarray*}
\lefteqn{\left(\tr_{p^m/p}\left(\sum_{i=1}^{k}a_ig_{i1}\right),
\cdots,\tr_{p^m/p}\left(\sum_{i=1}^{k}a_ig_{in}\right)\right) } \\
&=&
\left(\tr_{p^m/p}\left(b_1\left(\sum_{i=1}^{k}a_ig'_{i\sigma(1)}\right)\right),
\cdots,\tr_{p^m/p}\left(b_n\left(\sum_{i=1}^{k}a_ig'_{i\sigma{n}}\right)\right)\right).
\end{eqnarray*}
Then the following conclusions follow from Theorem \ref{th-tracerepresentation}:
\begin{itemize}
\item If $\C$ and $\C'$ are permutation equivalent, so are $\C^{(p)}$ and $\C'^{(p)}$.
\item If all $b_i \in \gf(p)^*$, then $\C^{(p)}$ and $\C'^{(p)}$ are monomially equivalent.
\end{itemize}
However, $\C^{(p)}$ and $\C'^{(p)}$ may not be monomially equivalent even if $\C$ and $\C'$
are monomially equivalent.

\subsection{The motivations and objectives of this paper}

Every linear code $\C$ over $\gf(p^m)$ has a subfield code  $\C^{(p)}$, which may have very good or bad parameters.
To obtain a subfield code $\C^{(p)}$ with desirable parameters, one has to select the code $\C$ over $\gf(p^m)$ properly.
Even if a code $\C$ over $\gf(p^m)$ looks simple, the subfield code  $\C^{(p)}$ could be very complex and it could be very difficult to determine the minimum distance of the subfield code  $\C^{(p)}$, let alone the weight distribution of the subfield code. For example, the Simplex code  $\C$ over $\gf(p^m)$ is extremely simple and a one-weight code, but it is in general very hard to determine the weight distribution of the subfield code  $\C^{(p)}$.

To be able to obtain a very good code $\C^{(p)}$ and settle its parameters, one should select an optimal code or very good code
$\C$ over $\gf(p^m)$ with small dimension. This idea was successfully used in several references for obtaining infinite families of distance-optimal codes.  Linear codes over $\gf(q)$ with parameters $[q^2+1, 4, q^2-q]$ are called \emph{ovoid codes} and optimal with respect to the Griesmer bound, as they correspond to ovoids in $\PG(3, \gf(q))$ \cite[Chapter 13]{Dingbk19}. The subfield codes of some ovoid codes are very good \cite{DH19}.
Linear code over $\gf(2^m)$ with parameters $[2^m+2, 3, 2^m]$ are called \emph{hyperoval codes}, as they correspond to hyperovals in $\PG(2, \gf(2^m))$ \cite[Chapter 12]{Dingbk19}.
Linear code over $\gf(2^m)$ with parameters $[2^m+1, 3, 2^m-1]$ are called \emph{oval codes}, as they correspond to ovals in $\PG(2, \gf(2^m))$ \cite[Chapter 12]{Dingbk19}. The subfield codes of some hyperoval and oval codes are distance-optimal \cite{HD19}. Later, the subfield codes of some hyperoval codes were extended to more general cases \cite{WZ}. Maximal arcs in
$\PG(2, \gf(2^m))$ give $[n, 3, n-h]$ two-weight codes over $\gf(2^m)$, where $h=2^s$ with $1 \leq s <m$ and $n=h2^m+h-2^m$, whose duals have parameters $[n, n-3, 4]$ if $s=1$ and $[n, n-3, 3]$ if $s>1$ \cite[Chapter 12]{Dingbk19}. The subfield codes of some maximal arc codes are distance-optimal \cite{HDW20}. In \cite{HWD}, the subfield codes of some cyclic codes and linear codes with dimension 4 were also investigated.

Motivated by the distance-optimal subfield codes obtained in \cite{DH19,HD19,HDW20, HWD}, in this paper we first present a general construction of $[q+1, 2, q]$ MDS codes over $\gf(q)$, and then study the subfield codes of some of them. Our objective is to construct infinite families of codes over $\gf(p)$ with very good parameters. Some conjectures and open problems are proposed.

\section{Auxiliary results}\label{sec-pre}

In this section, we recall characters and some character sums over finite fields which will be needed in later sections.

Let $p$ be a prime and $q=p^m$. Let $\gf(q)$ be the finite field with $q$ elements and $\alpha$ a primitive element of $\gf(q)$. Let $\tr_{q/p}$ denote the trace function from $\gf(q)$ to $\gf(p)$ given by
$$\tr_{q/p}(x)=\sum_{i=0}^{m-1}x^{p^{i}},\ x\in \gf(q).$$ Denote by $\zeta_p$ a primitive $p$-th root of complex unity.
An \emph{additive character} of $\gf(q)$ is a function $\chi: (\gf(q),+)\rightarrow \bC^{*}$ such that
$$\chi(x+y)=\chi(x)\chi(y),\ x,y\in \gf(q),$$ where $\bC^{*}$ denotes the set of all nonzero complex numbers. For any $a\in \gf(q)$, the function
$$\chi_{a}(x)=\zeta_{p}^{\tr_{q/p}(ax)},\ x\in \gf(q),$$ defines an additive character of $\gf(q)$. In addition, $\{\chi_{a}:a\in \gf(q)\}$ is a group consisting of all the additive characters of $\gf(q)$. If $a=0$, we have $\chi_0(x)=1$ for all $x\in \gf(q)$ and $\chi_0$ is referred to as the trivial additive character of $\gf(q)$. If $a=1$, we call $\chi_1$ the canonical additive character of $\gf(q)$. Clearly, $\chi_a(x)=\chi_1(ax)$.
The orthogonality  relation of additive characters is given by
$$\sum_{x\in \gf(q)}\chi_1(ax)=\left\{
\begin{array}{rl}
q    &   \mbox{ for }a=0,\\
0    &   \mbox{ for }a\in \gf(q)^*.
\end{array} \right. $$

Let $\gf(q)^*=\gf(q)\setminus \{0\}$. A \emph{character} $\psi$ of the multiplicative group $\gf(q)^*$ is a function from  $\gf(q)^*$  to $\bC^{*}$ such that $\psi(xy)=\psi(x)\psi(y)$ for all $(x,y)\in \gf(q)\times \gf(q)$. Define the multiplication of two characters $\psi,\psi'$ by $(\psi\psi')(x)=\psi(x)\psi'(x)$ for $x\in \gf(q)^*$. All the characters of $\gf(q)^*$ are given by
$$\psi_{j}(\alpha^k)=\zeta_{q-1}^{jk}\mbox{ for }k=0,1,\cdots,q-1,$$
where $0\leq j \leq q-2$. Then all these $\psi_j$, $0\leq j \leq q-2$, form a group under the multiplication of characters and are called \emph{multiplicative characters} of $\gf(q)$. In particular, $\psi_0$ is called the trivial multiplicative character and for odd $q$, $\eta:=\psi_{(q-1)/2}$ is referred to as the quadratic multiplicative character of  $\gf(q)$. The orthogonality relation of multiplicative characters is given by
$$\sum_{x\in \gf(q)^*}\psi_j(x)=\left\{
\begin{array}{rl}
q-1    &   \mbox{ for }j=0,\\
0    &   \mbox{ for }j\neq 0.
\end{array} \right. $$

For an additive character $\chi$ and a multiplicative character $\psi$ of $\gf(q)$, the \emph{Gauss sum} $G(\psi, \chi)$ over $\gf(q)$ is defined by
$$G(\psi,\chi)=\sum_{x\in \gf(q)^*}\psi(x)\chi(x).$$
We call $G(\eta,\chi)$ the quadratic Gauss sum over $\gf(q)$ for nontrivial $\chi$. The value of the quadratic Gauss sum is known as follows.

\begin{lemma}\label{quadGuasssum}\cite[Th. 5.15]{LN83}
Let $q=p^m$ with $p$ odd. Let $\chi$ be the canonical additive character of $\gf(q)$. Then
\begin{eqnarray*}G(\eta,\chi)&=&(-1)^{m-1}(\sqrt{-1})^{(\frac{p-1}{2})^2m}\sqrt{q}\\
 &=&\left\{
\begin{array}{lll}
(-1)^{m-1}\sqrt{q}    &   \mbox{ for }p\equiv 1\pmod{4},\\
(-1)^{m-1}(\sqrt{-1})^{m}\sqrt{q}    &   \mbox{ for }p\equiv 3\pmod{4}.
\end{array} \right. \end{eqnarray*}
\end{lemma}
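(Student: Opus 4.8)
The plan is to reduce the evaluation over $\gf(q)$ to the corresponding Gauss sum over the prime field $\gf(p)$ by means of the Davenport--Hasse lifting relation, and then to invoke the classical determination of the sign of the quadratic Gauss sum over $\gf(p)$. Throughout write $\sqrt{-1}$ for the imaginary unit as in the statement.

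First I would record that both characters appearing in $G(\eta,\chi)$ are lifts from $\gf(p)$. Writing $\eta_0$ for the quadratic character and $\chi_0(x)=\zeta_p^{x}$ for the canonical additive character of $\gf(p)$, one checks directly that $\chi=\chi_0\circ\tr_{q/p}$ (this is simply the definition of $\chi$) and that $\eta=\eta_0\circ\Norm_{q/p}$. The latter follows from the power description of the characters: since $\Norm_{q/p}(x)=x^{(q-1)/(p-1)}$ and $\eta_0(y)=y^{(p-1)/2}$ for $y\in\gf(p)^*$, one gets $\eta_0(\Norm_{q/p}(x))=x^{(q-1)/2}=\eta(x)$.

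Next I would apply the Davenport--Hasse theorem, which asserts that for a multiplicative character $\psi$ and a nontrivial additive character $\lambda$ of $\gf(p)$, with lifts $\psi',\lambda'$ to $\gf(q)$ through the norm and the trace respectively,
$$-G(\psi',\lambda')=\bigl(-G(\psi,\lambda)\bigr)^{m}.$$
Applied to $\psi=\eta_0$ and $\lambda=\chi_0$, and using the identifications above, this yields
$$G(\eta,\chi)=(-1)^{m-1}g^{m},\qquad g:=G(\eta_0,\chi_0),$$
so everything is reduced to evaluating the quadratic Gauss sum $g$ over the prime field.

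It then remains to pin down $g$. The absolute value $|g|=\sqrt{p}$ follows at once from the orthogonality relations (equivalently from $g\,\overline{g}=p$), and squaring gives $g^{2}=\eta_0(-1)\,p=(-1)^{(p-1)/2}p$, whence $g=\pm\sqrt{p}$ when $p\equiv 1\pmod 4$ and $g=\pm\sqrt{-1}\,\sqrt{p}$ when $p\equiv 3\pmod 4$. The genuine difficulty, and the historical heart of the matter, is resolving this ambiguous sign: one must show that the plus sign always occurs, i.e. $g=(\sqrt{-1})^{((p-1)/2)^2}\sqrt{p}$. I expect this to be the main obstacle, since it is precisely Gauss's sign theorem, which resists the elementary character-sum manipulations used in the preceding steps and requires a separate argument (for instance via Poisson summation and the theta transformation, or via computing the eigenvalues of the finite Fourier transform matrix). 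Granting this, raising to the $m$-th power gives $g^{m}=(\sqrt{-1})^{((p-1)/2)^2 m}\sqrt{q}$, and substituting into $G(\eta,\chi)=(-1)^{m-1}g^{m}$ produces the claimed closed form. The final split into the two congruence classes of $p$ modulo $4$ is then just the observation that $((p-1)/2)^2\equiv 0\pmod 4$ when $p\equiv 1\pmod 4$ and $((p-1)/2)^2\equiv 1\pmod 4$ when $p\equiv 3\pmod 4$, so that $(\sqrt{-1})^{((p-1)/2)^2 m}$ equals $1$ and $(\sqrt{-1})^{m}$ respectively.
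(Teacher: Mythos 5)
The paper offers no proof of this lemma --- it is quoted verbatim from [LN83, Th.\ 5.15] --- and your argument is precisely the proof given in that reference: identify $\eta$ and $\chi$ as the norm- and trace-lifts of the quadratic and canonical characters of $\gf(p)$, apply Davenport--Hasse to get $G(\eta,\chi)=(-1)^{m-1}G(\eta_0,\chi_0)^m$, and then invoke the classical determination of the prime-field quadratic Gauss sum. All of your reduction steps check out, and you correctly isolate the one genuinely hard ingredient (the sign in $g=(\sqrt{-1})^{((p-1)/2)^2}\sqrt{p}$, which is [LN83, Th.\ 5.12], i.e.\ Gauss's sign theorem) as a separate classical result requiring its own argument, so the proposal is complete modulo that standard citation.
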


Let $\chi$ be a nontrivial additive character of $\gf(q)$ and let $f\in \gf(q)[x]$ be a polynomial of positive degree. The character sums of the form
$$\sum_{c\in \gf(q)}\chi(f(c))$$ are referred to as \emph{Weil sums}. The problem of evaluating
such character sums explicitly is very difficult in general. In certain special cases, Weil sums can be treated (see \cite[Section 4 in Chapter 5]{LN83}). If $f$ is a quadratic polynomial, the Weil sum has an interesting relationship with quadratic Gauss sums, which is described in the
following lemma.

\begin{lemma}\label{lem-charactersum}\cite[Th. 5.33]{LN83}
Let $\chi$ be a nontrivial additive character of $\gf(q)$ with $q$ odd, and let $f(x)=a_2x^2+a_1x+a_0\in \gf(q)[x]$ with $a_2\neq 0$. Then
$$\sum_{c\in \gf(q)}\chi(f(c))=\chi(a_0-a_1^2(4a_2)^{-1})\eta(a_2)G(\eta,\chi).$$
\end{lemma}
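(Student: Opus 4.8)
The plan is to reduce the general quadratic Weil sum to the single sum $\sum_{y\in\gf(q)}\chi(a_2y^2)$ by completing the square, and then to evaluate that sum through the quadratic character $\eta$ and the Gauss sum $G(\eta,\chi)$.

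First I would complete the square. Since $q$ is odd, the elements $2$ and $4$ are invertible in $\gf(q)$, so I can write
$$f(x)=a_2\left(x+\frac{a_1}{2a_2}\right)^2+\left(a_0-\frac{a_1^2}{4a_2}\right).$$
Using the multiplicativity $\chi(u+v)=\chi(u)\chi(v)$ of the additive character, the constant term factors out of the sum, giving
$$\sum_{c\in\gf(q)}\chi(f(c))=\chi\!\left(a_0-a_1^2(4a_2)^{-1}\right)\sum_{c\in\gf(q)}\chi\!\left(a_2\left(c+a_1(2a_2)^{-1}\right)^2\right).$$
The map $c\mapsto c+a_1(2a_2)^{-1}$ is a bijection of $\gf(q)$, so after this substitution the remaining sum equals $\sum_{y\in\gf(q)}\chi(a_2y^2)$. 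The whole problem is thus reduced to showing that $\sum_{y\in\gf(q)}\chi(a_2y^2)=\eta(a_2)G(\eta,\chi)$.

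Next I would evaluate this pure quadratic sum by counting square roots. For $z\in\gf(q)^*$ the equation $y^2=z$ has $1+\eta(z)$ solutions (two if $z$ is a nonzero square, none otherwise), and for $z=0$ it has the single solution $y=0$; adopting the convention $\eta(0)=0$, both cases are covered by the expression $1+\eta(z)$. Grouping the sum over $y$ according to the value $z=y^2$ therefore gives
$$\sum_{y\in\gf(q)}\chi(a_2y^2)=\sum_{z\in\gf(q)}(1+\eta(z))\chi(a_2z)=\sum_{z\in\gf(q)}\chi(a_2z)+\sum_{z\in\gf(q)^*}\eta(z)\chi(a_2z).$$
The first sum on the right vanishes by the orthogonality relation of additive characters, since $a_2\neq 0$ and $\chi$ is nontrivial. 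In the second sum I would substitute $w=a_2z$; using $\eta(a_2^{-1})=\eta(a_2)$ (as $\eta$ takes values in $\{\pm1\}$) this becomes $\eta(a_2)\sum_{w\in\gf(q)^*}\eta(w)\chi(w)=\eta(a_2)G(\eta,\chi)$ by the definition of the Gauss sum. Combining this with the completed-square reduction yields the claimed identity.

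The routine steps — completing the square, the bijective substitution, and the final change of variable — are straightforward once $q$ is odd. The one step carrying the real content is the evaluation of $\sum_{y\in\gf(q)}\chi(a_2y^2)$: here the key idea is to convert a sum over $y$ into a sum over $z=y^2$ weighted by the number of its square roots, which is precisely what introduces the quadratic character $\eta$ and lets the Gauss sum emerge. I expect this counting-and-orthogonality step to be the crux of the argument, with everything else being bookkeeping.
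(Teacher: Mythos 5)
Your proof is correct and complete: the completion of the square, the reduction to $\sum_{y}\chi(a_2y^2)$, and the evaluation of that sum via the square-root count $1+\eta(z)$ together with orthogonality are all sound. The paper does not prove this lemma itself (it cites it as Theorem 5.33 of Lidl--Niederreiter), and your argument is essentially the standard proof given there, so there is nothing to add.
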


If $f$ is a quadratic polynomial with $q$ even, the Weil sums are evaluated explicitly as follows.

\begin{lemma}\label{lem-charactersum-evenq}\cite[Cor. 5.35]{LN83}
Let $\chi_b$ be a nontrivial additive character of $\gf(q)$ with $b\in \gf(q)^*$, and let $f(x)=a_2x^2+a_1x+a_0\in \gf(q)[x]$ with $q$ even. Then
$$\sum_{c\in \gf(q)}\chi_b(f(c))=\left\{\begin{array}{ll}
\chi_b(a_0)q    &   \mbox{ if }a_2=ba_{1}^{2},\\
0    &   \mbox{ otherwise. }
\end{array} \right.$$
\end{lemma}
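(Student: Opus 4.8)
The final statement in this excerpt is Lemma 2.4 (the even-$q$ Weil sum, \cite[Cor. 5.35]{LN83}), cited as a known result rather than proved in the paper itself. I will sketch how I would prove this standard lemma from scratch.

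The plan is to evaluate the character sum $S=\sum_{c\in\gf(q)}\chi_b(a_2c^2+a_1c+a_0)$ directly, exploiting the fact that in characteristic $2$ the squaring map $c\mapsto c^2$ is the Frobenius automorphism and hence a bijection on $\gf(q)$. First I would factor out the constant term, writing $\chi_b(a_0)$ outside the sum since $\chi_b$ is additive, reducing the problem to $T=\sum_{c\in\gf(q)}\chi_b(a_2c^2+a_1c)$. Unlike the odd-characteristic case (Lemma~\ref{lem-charactersum}), completing the square is unavailable because $2$ is not invertible; instead the key observation is that the additive character $\chi_b$ composed with the $\gf(2)$-linear map $c\mapsto \tr_{q/2}(b(a_2c^2+a_1c))$ is itself a character of the additive group $(\gf(q),+)$ evaluated at a \emph{linearized} polynomial.

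The crucial step I would carry out is to rewrite the exponent as a trace of a linear function of $c$. Using $\tr_{q/2}(x^2)=\tr_{q/2}(x)$ (since the trace is invariant under Frobenius) and the $\gf(2)$-linearity of the trace, I would manipulate $\tr_{q/2}(b a_2 c^2)$ into the form $\tr_{q/2}((ba_2)^{1/2}c)$ by applying the square-root (inverse Frobenius) automorphism, which is legitimate because Frobenius is a bijection on $\gf(q)$. This turns the whole exponent $\tr_{q/2}(ba_2c^2+ba_1c)$ into $\tr_{q/2}\!\left(\bigl((ba_2)^{1/2}+ba_1\bigr)c\right)$, a \emph{linear} expression in $c$. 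Consequently $T=\sum_{c\in\gf(q)}\zeta_2^{\tr_{q/2}((b^{1/2}a_2^{1/2}+ba_1)c)}$, which by the orthogonality relation of additive characters equals $q$ when the coefficient $b^{1/2}a_2^{1/2}+ba_1$ vanishes and $0$ otherwise.

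Finally I would translate the vanishing condition back into the stated form. The condition $(ba_2)^{1/2}=ba_1$ is equivalent, upon squaring both sides, to $ba_2=b^2a_1^2$, i.e. $a_2=ba_1^2$, matching the dichotomy in the statement. Combining this with the factored-out $\chi_b(a_0)$ yields $S=\chi_b(a_0)q$ when $a_2=ba_1^2$ and $S=0$ otherwise. The main obstacle is purely the algebraic bookkeeping of the square-root substitution: one must be careful that $(ba_2)^{1/2}$ denotes the unique square root in $\gf(q)$ (every element has a unique square root in characteristic $2$), and that the manipulation $\tr_{q/2}(u^2)=\tr_{q/2}(u)$ is applied to the correct argument so that the linear coefficient comes out exactly as $b^{1/2}a_2^{1/2}+ba_1$. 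Once the exponent is genuinely linear in $c$, the remainder is an immediate appeal to additive-character orthogonality.
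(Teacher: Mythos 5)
This lemma is quoted in the paper directly from \cite[Cor. 5.35]{LN83} without proof, so there is no in-paper argument to compare against. Your sketch is correct and is the standard derivation: in characteristic $2$ the polynomial $a_2c^2+a_1c$ is linearized, so using $\tr_{q/2}(u^2)=\tr_{q/2}(u)$ the exponent becomes $\tr_{q/2}\bigl(((ba_2)^{1/2}+ba_1)c\bigr)$, and orthogonality of additive characters together with nondegeneracy of the trace form gives $q$ or $0$ according as $(ba_2)^{1/2}+ba_1$ vanishes or not; squaring and dividing by $b\neq 0$ turns this into $a_2=ba_1^2$. Note that your argument is exactly the $p=2$, $r=1$ instance of Lemma~\ref{lem-p-polynomial}, whose criterion $ba_2+b^2a_1^2=0$ reduces to the same condition, and it correctly covers the degenerate cases $a_2=0$ or $a_1=0$ since no completion of the square or invertibility of $a_2$ is ever used.
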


The Weil sums can also be evaluated explicitly in the case that $f$ is an affine $p$-polynomial over $\gf(q)$.

\begin{lemma}\label{lem-p-polynomial}\cite[Th. 5.34]{LN83}
Let $q=p^m$ and let
$$f(x)=a_rx^{p^r}+a_{r-1}x^{p^{r-1}}+\cdots+a_1x^{p}+a_0x+a$$ be an affine $p$-polynomial over $\gf(q)$. Let $\chi_b$ be a nontrivial additive character of $\gf(q)$ with $b\in \gf(q)^*$. Then
$$\sum_{c\in \gf(q)}\chi_b(f(c))=\left\{\begin{array}{ll}
\chi_b(a)q    &   \mbox{ if }ba_r+b^pa_{r-1}^p+\cdots+b^{p^{r-1}}a_{1}^{p^{r-1}}+b^{p^{r}}a_{0}^{p^{r}}=0,\\
0    &   \mbox{ otherwise. }
\end{array} \right.$$
\end{lemma}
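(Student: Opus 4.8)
The plan is to exploit the fact that an affine $p$-polynomial differs from an $\gf(p)$-linear map only by a constant. First I would write $f(x) = L(x) + a$, where $L(x) = a_r x^{p^r} + \cdots + a_1 x^{p} + a_0 x$ is the associated $p$-polynomial. Since $(x+y)^{p^i} = x^{p^i} + y^{p^i}$ in characteristic $p$, the map $c \mapsto L(c)$ is an additive ($\gf(p)$-linear) endomorphism of $\gf(q)$, viewed as an $m$-dimensional vector space over $\gf(p)$. Pulling the constant out of the character gives
$$\sum_{c \in \gf(q)} \chi_b(f(c)) = \chi_b(a) \sum_{c \in \gf(q)} \chi_b(L(c)),$$
so that everything reduces to evaluating the sum of $\chi_b \circ L$.

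Next, because $L$ is additive and $\chi_b$ is an additive character, the composition $c \mapsto \chi_b(L(c))$ satisfies $\chi_b(L(c+c')) = \chi_b(L(c))\chi_b(L(c'))$, hence is itself an additive character of $\gf(q)$. By the orthogonality relation of additive characters recalled in Section~\ref{sec-pre}, the inner sum equals $q$ when this character is trivial and $0$ otherwise. Since $\chi_b(L(c)) = \zeta_p^{\tr_{q/p}(bL(c))}$, the character is trivial precisely when $\tr_{q/p}(bL(c)) = 0$ for every $c \in \gf(q)$, so the whole problem collapses to deciding when this linear functional in $c$ vanishes identically.

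The core computation — and the step I expect to be the main obstacle, since it is where the precise shape of the stated condition emerges — is to linearize $\tr_{q/p}(bL(c))$ in $c$. Using the Frobenius-invariance of the trace, $\tr_{q/p}(z) = \tr_{q/p}(z^{p^{m-i}})$, I would rewrite each summand as
$$\tr_{q/p}\bigl(b a_i c^{p^i}\bigr) = \tr_{q/p}\bigl((b a_i)^{p^{m-i}} c^{p^{m}}\bigr) = \tr_{q/p}\bigl((b a_i)^{p^{m-i}} c\bigr),$$
since $c^{p^m} = c$. Summing over $i$ yields $\tr_{q/p}(bL(c)) = \tr_{q/p}(Bc)$ with $B = \sum_{i=0}^{r} (b a_i)^{p^{m-i}}$. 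By nondegeneracy of the trace bilinear form $(x,y) \mapsto \tr_{q/p}(xy)$, this vanishes for all $c$ exactly when $B = 0$. Finally, as the Frobenius is a bijection, $B = 0$ is equivalent to $B^{p^r} = 0$, and a direct expansion (reducing exponents modulo $m$) gives
$$B^{p^r} = b a_r + b^{p} a_{r-1}^{p} + \cdots + b^{p^{r-1}} a_1^{p^{r-1}} + b^{p^{r}} a_0^{p^{r}},$$
which is exactly the condition in the statement. This completes the proof.
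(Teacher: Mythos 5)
Your proof is correct: the paper states this lemma as a cited result from \cite[Th. 5.34]{LN83} without reproducing a proof, and your argument (split off the constant, observe that $\chi_b\circ L$ is an additive character, use orthogonality, and transfer the Frobenius powers through the trace to identify the vanishing condition with the stated adjoint-polynomial condition on $b$) is precisely the standard proof given in that reference. The exponent bookkeeping in the final step, $B^{p^r}=\sum_{i=0}^{r}b^{p^{r-i}}a_i^{p^{r-i}}$, checks out and matches the displayed condition term by term.
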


\section{Properties of $[q+1, 2, q]$ MDS codes over $\gf(q)$}

The weight distribution of any $[n, k, n-k+1]$ MDS code over $\gf(q)$ is known and documented in the following theorem \cite[p. 321]{MS77}.

\begin{theorem}\label{thm-wteMDScode}
Let $\C$ be an $[n,\kappa]$ code over $\gf(q)$ with $d=n-\kappa+1$, and let the weight enumerator of $\C$ be
$1+\sum_{i=d}^{n} A_iz^i$. Then
$$
A_i =\binom{n}{i} (q-1) \sum_{j=0}^{i-d} (-1)^j  \binom{i-1}{j} q^{i-j-d} \mbox{ for all } d \leq i \leq n.
$$
\end{theorem}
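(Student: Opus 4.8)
The plan is to prove the formula by a direct counting argument resting on the defining property of MDS codes, followed by a binomial inversion and a short combinatorial simplification. Since $\C$ meets the Singleton bound with $d=n-\kappa+1$, its minimum distance forces that every set of $\kappa$ columns of a generator matrix $G$ is linearly independent: if some $\kappa$ columns were dependent, there would be a nonzero $x\in\gf(q)^\kappa$ with $xG$ vanishing on those $\kappa$ coordinates, producing a nonzero codeword of weight at most $n-\kappa=d-1$, a contradiction. I would record this column-independence as the key structural fact.

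First I would fix a coordinate set $T$ with $|T|=i$ and count the codewords whose support is contained in $T$. Writing $c=xG$, the condition ``support $\subseteq T$'' amounts to the $n-i$ linear equations $x\cdot G_j=0$ for $j\notin T$, whose coefficient matrix consists of the corresponding $n-i$ columns of $G$. By the structural fact these columns have rank $\min(n-i,\kappa)$, so the solution space for $x$ has dimension $\max(i-d+1,0)$ and the number of such codewords is $q^{\max(i-d+1,0)}$. In particular this count depends only on $|T|$, not on $T$ itself, and equals $1$ (the zero word alone) whenever $i\le d-1$, which reconfirms that $\C$ has no nonzero word of weight below $d$.

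Next, let $N_i$ be the number of codewords whose support is exactly a fixed set of size $i$; by the symmetry just noted $N_i$ is well defined and $A_i=\binom{n}{i}N_i$. Summing $N_r$ over all subsets of a size-$i$ set $T$ recovers the ``support $\subseteq T$'' count, so $\sum_{r=0}^{i}\binom{i}{r}N_r=q^{\max(i-d+1,0)}$, and binomial (Möbius) inversion gives $N_i=\sum_{r=0}^{i}(-1)^{i-r}\binom{i}{r}q^{\max(r-d+1,0)}$.

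It remains to reduce this alternating sum to the stated closed form, and I expect this to be the only delicate step. Using $\sum_{r=0}^{i}(-1)^{i-r}\binom{i}{r}=0$ for $i\ge1$ I would subtract off the constant $1$, so that only the terms with $r\ge d$ survive; reindexing by $s=i-r$ turns the sum into $N_i=\sum_{s=0}^{i-d}(-1)^s\binom{i}{s}\bigl(q^{\,i-s-d+1}-1\bigr)$. Applying Pascal's identity $\binom{i}{s}=\binom{i-1}{s}+\binom{i-1}{s-1}$ to the $q$-weighted part both produces the factor $q-1$ and converts $\binom{i}{s}$ into $\binom{i-1}{j}$, while the leftover constant term is absorbed using the partial alternating-sum identity $\sum_{s=0}^{m}(-1)^s\binom{i}{s}=(-1)^m\binom{i-1}{m}$ with $m=i-d$. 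These two identities combine to yield $A_i=\binom{n}{i}(q-1)\sum_{j=0}^{i-d}(-1)^j\binom{i-1}{j}q^{\,i-j-d}$, as claimed. As an alternative route I would keep in mind that $\C^\perp$ is again MDS with $d^\perp=\kappa+1$, so its low-weight coefficients are forced and the MacWilliams identities determine the $A_i$; but that trades the combinatorial identity for an equally fiddly linear inversion, so I would prefer the direct count above.
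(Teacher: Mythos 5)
Your proof is correct and complete: the column-independence fact, the count $q^{\max(i-d+1,0)}$ of codewords supported inside a fixed $i$-set, the binomial inversion, and the final reduction via Pascal's identity together with $\sum_{s=0}^{m}(-1)^s\binom{i}{s}=(-1)^m\binom{i-1}{m}$ all check out and do yield the stated formula. The paper itself offers no proof of this theorem — it simply cites it from MacWilliams and Sloane — and your argument is precisely the standard counting proof given there, so there is nothing to compare beyond noting that you have supplied the derivation the paper omits.
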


It follows from Theorem  \ref{thm-wteMDScode} that any $[q+1, 2, q]$ MDS code over $\gf(q)$ is a one-weight code
with weight enumerator
$
1+(q^2-1)z^q.
$
The dual code of any $[q+1, 2, q]$ MDS code over $\gf(q)$ has parameters $[q+1, q-1, 3]$.
It is well known that every one-weight code is monomially equivalent to the concatenation of several Simplex codes.
Consequently, every  $[q+1, 2, q]$ MDS code over $\gf(q)$ must be monomially equivalent to the Simplex code
with parameters $[q+1, 2, q]$. Hence, up to monomial equivalence, there is only one one-weight code over $\gf(q)$
with parameters $[q+1, 2, q]$. However, the subfield codes of $[q+1, 2, q]$ MDS codes over $\gf(q)$ are very
different, as they have different parameters and weight distributions. Thus, it is still very interesting to investigate
$[q+1, 2, q]$ MDS codes over $\gf(q)$, as they give different subfield codes. This will be demonstrated in subsequent sections.

\section{A general construction of $[q+1, 2, q]$ MDS codes over $\gf(q)$}

Let $q=p^m$ as before, where $m$ is a positive integer and $p$ is a prime. Let $f(x)$ be a polynomial over $\gf(q)$. Define
a $2 \times (q+1)$ matrix over $\gf(q)$ by
\begin{eqnarray}\label{matrixG}
G_{(f,q)}=
\left[
\begin{array}{cccccc}
f(\alpha^0)   & f(\alpha^1) & \cdots & f(\alpha^{q-2}) & 0 & 1 \\
\alpha^0   & \alpha^1 & \cdots & \alpha^{q-2} & 1 & 0
\end{array}
\right],
\end{eqnarray}
where $\alpha$ is a generator of $\gf(q)^*$. Let $\C_{(f, q)}$ denote the linear code over $\gf(q)$ with
generator matrix $G_{(f,q)}$.

\begin{theorem}\label{thm-mainpolycode}
$\C_{(f, q)}$ is a $[q+1, 2, q]$ MDS code if and only if
\begin{enumerate}
\item $f(x) \neq 0$ for all $x \in \gf(q)^*$, and
\item $yf(x)-xf(y) \neq 0$ for all distinct $x$ and $y$ in $\gf(q)^*$.
\end{enumerate}
\end{theorem}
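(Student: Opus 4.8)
The plan is to reduce the MDS property to the nonvanishing of the $2\times 2$ minors of $G_{(f,q)}$ and then verify this by a short case analysis over the column pairs. First I would invoke the standard characterization that a linear $[n,k]$ code with a full-rank generator matrix $G$ is MDS, i.e.\ has minimum distance $n-k+1$, if and only if every $k$ columns of $G$ are linearly independent. Since here $n=q+1$ and $k=2$, this becomes the requirement that every two columns of $G_{(f,q)}$ be linearly independent, equivalently that each determinant formed from two columns be nonzero. I would note at the outset that the last two columns $(0,1)^{\mathsf T}$ and $(1,0)^{\mathsf T}$ are already independent, so $G_{(f,q)}$ has rank $2$ for every $f$; hence the dimension is always $2$ and the length is $q+1$, and the entire content of the theorem is whether the minimum distance reaches $q$.

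Writing $P_x=(f(x),x)^{\mathsf T}$ for $x\in\gf(q)^*$ and setting $A=(0,1)^{\mathsf T}$, $B=(1,0)^{\mathsf T}$, I would then compute the determinant of each type of pair. For two distinct points $P_x,P_y$ it is $yf(x)-xf(y)$; for $P_x$ and $A$ it is $f(x)$; for $P_x$ and $B$ it is $-x$; and for $A$ and $B$ it is $-1$. The last two determinants are automatically nonzero since $x\in\gf(q)^*$ and $-1\neq 0$, so the only pairs whose independence is not automatic are $\{P_x,A\}$ and $\{P_x,P_y\}$. Their determinants are nonzero for all relevant $x,y$ precisely when conditions (1) and (2) hold, respectively, which gives both directions simultaneously: if (1) and (2) hold then every minor is nonzero and the code is $[q+1,2,q]$ MDS, whereas if either condition fails the corresponding pair of columns becomes dependent, and I would exhibit the resulting nonzero codeword vanishing on two coordinates to contradict $d=q$.

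I do not expect a genuine obstacle here; once the column-independence criterion is in place the argument is mechanical, and the only care needed is bookkeeping. In particular I must keep the two exceptional columns $(0,1)^{\mathsf T}$ and $(1,0)^{\mathsf T}$ apart from the columns $P_x$, and confirm that it is exactly the pair $\{P_x,A\}$ that encodes condition (1), namely $f(x)\neq 0$, while $\{P_x,P_y\}$ encodes condition (2). A fully self-contained variant would avoid the citation by writing the generic codeword as $\mathbf c_{a,b}=\bigl((af(x)+bx)_{x\in\gf(q)^*},\,b,\,a\bigr)$ for $(a,b)\in\gf(q)^2$ and bounding its number of zero entries; this makes the ``only if'' direction especially transparent, since a failure of (1) or (2) directly produces a nonzero codeword with two zero coordinates.
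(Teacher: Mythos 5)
Your proposal is correct and follows essentially the same route as the paper: both reduce the claim to the nonvanishing of all $2\times 2$ minors of $G_{(f,q)}$ and verify the identical four determinant computations ($-1$, $f(a)$, $a$, and $bf(a)-af(b)$). The only cosmetic difference is that the paper phrases pairwise column independence as the statement $d^\perp=3$ for the dual code, whereas you invoke the direct criterion that an $[n,k]$ code is MDS if and only if every $k$ columns of a generator matrix are linearly independent; these are equivalent formulations of the same condition.
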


\begin{proof}
Notice that the last two columns of $G_{(f,q)}$ form a submatrix of rank $2$. The code $\C_{(f, q)}$ has dimension $2$
for any polynomial $f$ over $\gf(q)$, and its dual has dimension $q-1$. By the Singleton bound, the minimum distance
$d^\perp$ of the dual code  $\C_{(f, q)}$ is at most $3$.

Assume that Conditions 1) and 2) in this theorem are satisfied. We now prove that $d^\perp =3$. Clearly, no column of
$G_{(f,q)}$ is the zero vector. As a result, we deduce that $d^\perp \geq 2$.
Note that
\begin{eqnarray*}
\left|
\begin{array}{cc}
0 & 1 \\
1 & 0
\end{array}
\right|
=-1.
\end{eqnarray*}
For each $a \in \gf(q)^*$, by Condition 1) we have
\begin{eqnarray*}
\left|
\begin{array}{cc}
f(a) & 0  \\
a &  1
\end{array}
\right|
=f(a) \neq 0.
\end{eqnarray*}
For each $a \in \gf(q)^*$,  we have
\begin{eqnarray*}
\left|
\begin{array}{cc}
f(a) & 1  \\
a & 0
\end{array}
\right|
= a \neq 0.
\end{eqnarray*}
For any pair of distinct elements $a$ and $b$ in $\gf(q)^*$, by Condition 2) we have
\begin{eqnarray*}
\left|
\begin{array}{cc}
f(a) & f(b)  \\
a & b
\end{array}
\right|
= bf(a)-af(b) \neq 0.
\end{eqnarray*}
We then arrived at the conclusion that any two columns of $G_{(f,q)}$ are linearly independent over $\gf(q)$.
Consequently, $d^\perp=3$ and $\C_{(f, q)}^\perp$ is a $[q+1, q-1, 3]$ MDS code over $\gf(q)$. Hence,
$\C_{(f, q)}$ is a $[q+1, 2, q]$ MDS code.

Assume that $\C_{(f, q)}$ is a $[q+1, 2, q]$ MDS code. Then $\C_{(f, q)}^\perp$ is a $[q+1, q-1, 3]$ MDS code over $\gf(q)$.
Suppose that $f(a)=0$ for some $a \in \gf(q)^*$. Then the two columns $(0,1)^T$ and $(f(a), a)^T$ in $G_{(f,q)}$ would be
linearly dependent over $\gf(q)$ and thus $\C_{(f, q)}^\perp$ would have a codeword of Hamming weight $2$. This would be
contrary to the fact that $d^\perp=3$. Hence, Condition 1) must be satisfied.
Suppose that $bf(a)-af(b) =0$ for a pair of distinct elements $a$ and $b$ in $\gf(q)^*$.
Then the two columns $(f(a),a)^T$ and $(f(b), b)^T$ in $G_{(f,q)}$ would be
linearly dependent over $\gf(q)$ and thus $\C_{(f, q)}^\perp$ would have a codeword of Hamming weight $2$. This would be
contrary to the fact that $d^\perp=3$. Hence, Condition 2) must be satisfied.  This completes the proof.
\end{proof}

\begin{theorem}\label{thm-tracerepresent}
The trace representation of the $p$-ary subfield code $\C_{(f, q)}^{(p)}$ of $\C_{(f, q)}$ in Theorem \ref{thm-mainpolycode} is given by
$$\C_{(f,q)}^{(p)}=\left\{{\bc_{(f,q)}}^{(p)}=\left(\left(\tr_{q/p}(af(x)+bx)\right)_{x\in \gf(q)^*},\tr_{q/p}(a),\tr_{q/p}(b)\right):a,b\in \gf(q)\right\}.$$
\end{theorem}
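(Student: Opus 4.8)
The plan is to obtain the representation as an immediate specialization of the general trace formula in Theorem~\ref{th-tracerepresentation}. For $\C_{(f,q)}$ we have $k=2$ and $n=q+1$, and the entries $g_{1j},g_{2j}$ are read off from the two rows of the generator matrix $G_{(f,q)}$ in \eqref{matrixG}. Setting $a_1=a$ and $a_2=b$ in the notation of that theorem, every codeword of $\C_{(f,q)}^{(p)}$ has $j$-th coordinate $\tr_{q/p}(a\,g_{1j}+b\,g_{2j})$, so the whole argument reduces to evaluating the inner combination $a\,g_{1j}+b\,g_{2j}$ column by column and then letting $a,b$ range over $\gf(q)$.

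First I would dispose of the first $q-1$ columns. For $1\le j\le q-1$ the $j$-th column of $G_{(f,q)}$ is $(f(\alpha^{j-1}),\alpha^{j-1})^{T}$, whence $a\,g_{1j}+b\,g_{2j}=af(\alpha^{j-1})+b\alpha^{j-1}$. Because $\alpha$ generates $\gf(q)^{*}$, the map $j\mapsto \alpha^{j-1}$ is a bijection from $\{1,\dots,q-1\}$ onto $\gf(q)^{*}$, so these coordinates are exactly $\bigl(\tr_{q/p}(af(x)+bx)\bigr)_{x\in\gf(q)^{*}}$, the first block of the claimed set. The two remaining columns are $(0,1)^{T}$ and $(1,0)^{T}$, contributing $b$ and $a$ respectively, hence the coordinates $\tr_{q/p}(b)$ and $\tr_{q/p}(a)$.

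The only point that needs care is the ordering of these last two coordinates: verbatim application of Theorem~\ref{th-tracerepresentation} to \eqref{matrixG} produces the tail $\bigl(\tr_{q/p}(b),\tr_{q/p}(a)\bigr)$, whereas the statement lists $\bigl(\tr_{q/p}(a),\tr_{q/p}(b)\bigr)$. I would resolve this by transposing the last two columns of $G_{(f,q)}$ before applying the formula; this is a coordinate permutation, so by the permutation-equivalence property recorded earlier (if $\C$ and $\C'$ are permutation equivalent, so are $\C^{(p)}$ and $\C'^{(p)}$) it leaves the subfield code unchanged up to permutation of the final two positions and yields precisely the stated order. No genuine obstacle arises, since the result is a direct substitution once this indexing bookkeeping is fixed.
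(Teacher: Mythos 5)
Your proposal is correct and follows exactly the paper's route: the paper's entire proof is the one-line observation that the result follows from the generator matrix in \eqref{matrixG} together with Theorem~\ref{th-tracerepresentation}, which is precisely the substitution you carry out. Your extra remark about the last two coordinates appearing as $(\tr_{q/p}(b),\tr_{q/p}(a))$ under a verbatim application of the formula is a valid observation that the paper glosses over, and your resolution via permutation equivalence is the right fix (and is immaterial for all subsequent weight computations).
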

\begin{proof}
The desired conclusion follows from Equation (\ref{matrixG}) and Theorem \ref{th-tracerepresentation}.
\end{proof}

If $p=2$, the weight distribution of the binary subfield code $\C_{(f, q)}^{(2)}$ can be depicted by the  Walsh spectrum of $f(x)$. For a function $f(x)$ from $\gf(q)$ to $\gf(q)$ with $f(0)=0$, its Walsh transform is defined as
$$W_{f}(a,b)=\sum_{x\in \gf(q)}(-1)^{\tr_{q/2}(af(x)+bx)},\ a,b\in \gf(q).$$
\begin{corollary}
Let $p=2$ and $f(x)$ is a function from $\gf(q)$ to $\gf(q)$ with $f(0)=0$. For any codeword ${\bc_{(f,q)}}^{(2)}\in \C_{(f,q)}^{(2)}$ in Theorem \ref{thm-tracerepresent}, its Hamming weight
\begin{eqnarray*}
\wt({\bc_{(f,q)}}^{(2)})=\left\{
\begin{array}{lll}
\frac{q}{2}-\frac{1}{2}W_{f}(a,b)    &   \mbox{ if }\tr_{q/2}(a)=\tr_{q/2}(b)=0,\\
\frac{q}{2}-\frac{1}{2}W_{f}(a,b)+1    &   \substack{\mbox{ if }\tr_{q/2}(a)=0,\tr_{q/2}(b)\neq0 \mbox{ or }\\ \tr_{q/2}(a)\neq0,\tr_{q/2}(b)=0,}\\
\frac{q}{2}-\frac{1}{2}W_{f}(a,b)+2    &   \mbox{ if }\tr_{q/2}(a)\neq 0,\tr_{q/2}(b)\neq 0.\\
\end{array} \right.
\end{eqnarray*}
\end{corollary}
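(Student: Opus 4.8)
The plan is to reduce the weight computation to a single evaluation of the Walsh transform $W_f(a,b)$ by a standard character-sum counting argument. First I would split the weight of the codeword ${\bc_{(f,q)}}^{(2)}$ into two pieces: the contribution from the first $q-1$ coordinates indexed by $x\in\gf(q)^*$, and the contribution from the last two coordinates $\tr_{q/2}(a)$ and $\tr_{q/2}(b)$. For the first piece, let $N_1$ denote the number of $x\in\gf(q)^*$ with $\tr_{q/2}(af(x)+bx)=1$; since we work over $\gf(2)$, this is precisely the number of nonzero entries among those $q-1$ coordinates, i.e. their contribution to $\wt({\bc_{(f,q)}}^{(2)})$.

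To count $N_1$, I would relate it to $W_f(a,b)$. Writing $N_0$ for the number of $x\in\gf(q)^*$ with $\tr_{q/2}(af(x)+bx)=0$, we have the two relations $N_0+N_1=q-1$ and $N_0-N_1=\sum_{x\in\gf(q)^*}(-1)^{\tr_{q/2}(af(x)+bx)}$. The key step is to observe that $W_f(a,b)$ is defined as a sum over all of $\gf(q)$, so we may isolate the term at $x=0$: because $f(0)=0$, that term contributes $(-1)^{\tr_{q/2}(0)}=1$, whence $\sum_{x\in\gf(q)^*}(-1)^{\tr_{q/2}(af(x)+bx)}=W_f(a,b)-1$. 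Solving the resulting $2\times 2$ linear system yields $N_1=\frac{q}{2}-\frac{1}{2}W_f(a,b)$.

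Finally, I would add the contribution of the last two coordinates. Over $\gf(2)$ each of $\tr_{q/2}(a)$ and $\tr_{q/2}(b)$ contributes $1$ to the weight exactly when it is nonzero, and $0$ otherwise. Distinguishing the three cases — both trace values zero, exactly one of them nonzero, and both nonzero — adds $0$, $1$, and $2$ respectively to $N_1$, which gives the three displayed formulas and completes the proof.

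The computation presents no genuine obstacle; the only point requiring care is the bookkeeping for the $x=0$ term of the Walsh transform, where the hypothesis $f(0)=0$ is exactly what forces that term to equal $1$ and thereby connects $W_f(a,b)$ cleanly to the restricted sum over $\gf(q)^*$. Everything else is the orthogonality relation for the additive character $(-1)^{\tr_{q/2}(\cdot)}$ packaged into the elementary system for $N_0$ and $N_1$.
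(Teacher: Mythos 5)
Your proposal is correct and follows essentially the same route as the paper: the paper also counts the nonzero entries among the $q-1$ coordinates indexed by $\gf(q)^*$ via the orthogonality relation for $(-1)^{\tr_{q/2}(\cdot)}$ (your $2\times 2$ system for $N_0,N_1$ is just that relation unpacked), uses $f(0)=0$ to replace the restricted sum by $W_f(a,b)-1$, and then adds $0$, $1$, or $2$ for the last two coordinates according to the vanishing of $\tr_{q/2}(a)$ and $\tr_{q/2}(b)$. No gaps.
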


\begin{proof}
By the orthogonality relation of additive characters,
\begin{eqnarray*}
\lefteqn{ \sharp \{x\in \gf(q)^*:\tr_{q/2}(af(x)+bx)\neq 0\} }\\
&= q-1-\frac{1}{2}\sum_{x\in \gf(q)^*}\sum_{y\in \gf(2)}(-1)^{y\left(\tr_{q/2}(af(x)+bx)\right)}\\
&= q-1-\left(\frac{q-1}{2}+\frac{1}{2}\sum_{x\in \gf(q)^*}(-1)^{\tr_{q/2}(af(x)+bx)}\right)\\
&= \frac{q}{2}-\frac{1}{2}W_{f}(a,b).
\end{eqnarray*}
For any codeword ${\bc_{(f,q)}}^{(2)}=\left(\left(\tr_{q/p}(af(x)+bx)\right)_{x\in \gf(q)^*},\tr_{q/2}(a),\tr_{q/2}(b)\right)\in \C_{(f,q)}^{(2)}$ by Theorem \ref{thm-tracerepresent}, we directly obtain its Hamming weight
\begin{eqnarray*}
\wt({\bc_{(f,q)}}^{(2)})=\left\{
\begin{array}{lll}
\frac{q}{2}-\frac{1}{2}W_{f}(a,b)    &   \mbox{ if }\tr_{q/2}(a)=\tr_{q/2}(b)=0,\\
\frac{q}{2}-\frac{1}{2}W_{f}(a,b)+1    &   \substack{\mbox{ if }\tr_{q/2}(a)=0,\tr_{q/2}(b)\neq0 \mbox{ or }\\ \tr_{q/2}(a)\neq0,\tr_{q/2}(b)=0,}\\
\frac{q}{2}-\frac{1}{2}W_{f}(a,b)+2    &   \mbox{ if }\tr_{q/2}(a)\neq 0,\tr_{q/2}(b)\neq 0.\\
\end{array} \right.
\end{eqnarray*}
\end{proof}

In order to obtain $p$-ary subfield code $\C_{(f, q)}^{(p)}$  with good parameters, one should   properly select $f$. In the following sections, we investigate the parameters of
 $\C_{(f, q)}^{(p)}$  with some special polynomials $f$ satisfying the two conditions in Theorem \ref{thm-mainpolycode}.
 As will be seen, there exist many infinite families of polynomials $f$ satisfying the conditions.

\section{The subfield code of $\C_{(f, q)}$  when $f(x)=1$}

Let $f(x)=1$. It is easily checked that the two conditions in Theorem \ref{thm-mainpolycode} are satisfied. Hence the code $\C_{(1, q)}$
is a $[q+1, 2, q]$ MDS code by Theorem \ref{thm-mainpolycode}. By Theorem \ref{thm-tracerepresent},
the $p$-ary subfield code of $\C_{(1, q)}$ is given by
$$\C_{(1,q)}^{(p)}=\left\{{\bc_{(1,q)}}^{(p)}=\left(\left(a+\tr_{q/p}(bx)\right)_{x\in \gf(q)^*},a,\tr_{q/p}(b)\right):\substack{a\in \gf(p)\\b\in \gf(q)}\right\}.$$

 \begin{theorem}\label{th-f=1}
Let $m \geq 2$. Then $\C_{(1,q)}^{(p)}$ is a nearly optimal $[p^m+1, m+1, (p-1)p^{m-1}]$ code with respect to the Griesmer bound, and has  weight enumerator
$$1+p(p^{m-1}-1)z^{(p-1)p^{m-1}}+p^m(p-1)z^{(p-1)p^{m-1}+1}+(p-1)z^{p^m}.$$
The dual code $(\C_{(1,q)}^{(p)})^\perp$ has parameters $[p^m+1, p^m-m, 3]$ and is dimension-optimal with respect to the sphere-packing bound.
\end{theorem}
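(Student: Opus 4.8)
The plan is to work directly from the trace representation of $\C_{(1,q)}^{(p)}$ displayed just above the theorem, viewing each codeword as a $\gf(p)$-linear function of the parameters $a\in\gf(p)$ and $b\in\gf(q)$. First I would pin down the dimension. The parameter space $\gf(p)\times\gf(q)$ has $\gf(p)$-dimension $1+m$, so it suffices to show the parametrization is injective. If a codeword is zero, the penultimate coordinate forces $a=0$, and then $\tr_{q/p}(bx)=0$ for every $x\in\gf(q)^*$ forces $b=0$ by non-degeneracy of the trace form. Hence $\dim_{\gf(p)}\C_{(1,q)}^{(p)}=m+1$.

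The computational core is the weight distribution. For fixed $(a,b)$ the Hamming weight equals the number of $x\in\gf(q)^*$ with $a+\tr_{q/p}(bx)\neq0$, plus the indicator that $a\neq0$, plus the indicator that $\tr_{q/p}(b)\neq0$. I would count the first quantity using that every fibre of the surjective $\gf(p)$-linear map $\tr_{q/p}:\gf(q)\to\gf(p)$ has exactly $p^{m-1}$ elements: when $b\neq0$ the substitution $y=bx$ is a bijection of $\gf(q)$, so $\#\{x\in\gf(q):\tr_{q/p}(bx)=-a\}=p^{m-1}$, and one removes the point $x=0$ only when $a=0$. Splitting into $b=0$ versus $b\neq0$ against $a=0$ versus $a\neq0$, and within the last two cases further according to whether $\tr_{q/p}(b)$ vanishes (the two subcounts being $p^{m-1}-1$ and $(p-1)p^{m-1}$), I obtain exactly the three nonzero weights $(p-1)p^{m-1}$, $(p-1)p^{m-1}+1$, and $p^m$ with the multiplicities in the claimed enumerator; a check that these sum to $p^{m+1}-1$ confirms the bookkeeping. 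The minimum distance is then the smallest weight $(p-1)p^{m-1}$.

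For near-optimality I would evaluate the Griesmer bound at $k=m+1$, $d=(p-1)p^{m-1}$. Since $(p-1)p^{m-1}/p^{i}=(p-1)p^{m-1-i}$ is an integer for $0\leq i\leq m-1$ and lies strictly in $(0,1)$ for $i=m$, the sum collapses to $(p-1)(p^{m-1}+\cdots+1)+1=p^m$. Thus $n=p^m+1$ exceeds the bound by exactly one. Puncturing $\C_{(1,q)}^{(p)}$ at its last coordinate yields a $[p^m,m+1,(p-1)p^{m-1}]$ code whose distance is unchanged (from the case analysis, exactly the codewords with $\tr_{q/p}(b)\neq0$ lose one contributing coordinate, dropping from $(p-1)p^{m-1}+1$ back to $(p-1)p^{m-1}$); this punctured code meets the Griesmer bound and is therefore distance-optimal, which is precisely the near-optimality asserted.

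Finally, for the dual I would fix a basis $\beta_1,\ldots,\beta_m$ of $\gf(q)/\gf(p)$ and the associated generator matrix $G$ of $\C_{(1,q)}^{(p)}$, whose column at $x\in\gf(q)^*$ is $(1,\tr_{q/p}(\beta_1 x),\ldots,\tr_{q/p}(\beta_m x))^{T}$, whose penultimate column is $(1,0,\ldots,0)^{T}$, and whose last column is $(0,\tr_{q/p}(\beta_1),\ldots,\tr_{q/p}(\beta_m))^{T}$. Since $d^{(p)\perp}$ is the least number of linearly dependent columns of $G$, I would first rule out two dependent columns: matching first coordinates forces the scalar to be $1$, and non-degeneracy of the trace form then forces the indices to coincide, giving $d^{(p)\perp}\geq3$; then the single relation (column at $1$)$=$(penultimate column)$+$(last column) shows $d^{(p)\perp}=3$, so the dual has parameters $[p^m+1,p^m-m,3]$. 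Dimension-optimality is the sphere-packing bound for a one-error-correcting $p$-ary code of length $p^m+1$: a $[p^m+1,k',3]$ code satisfies $p^{k'}\bigl(1+(p^m+1)(p-1)\bigr)\leq p^{p^m+1}$, and since $1+(p^m+1)(p-1)=p^{m+1}-p^m+p$ lies strictly between $p^m$ and $p^{m+1}$ when $m\geq2$, this forces $k'\leq p^m-m$, which our dual attains. I expect the case-by-case weight count to be the longest step, but the genuinely delicate point is the chain of ceiling estimates in the two bounds, where the strict inequality $p^m<p^{m+1}-p^m+p<p^{m+1}$, valid precisely because $m\geq2$, is what pins down $k'=p^m-m$.
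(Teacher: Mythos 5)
Your proof is correct, and in two places it takes a genuinely different route from the paper's. For the weight distribution the paper expands the indicator of $a+\tr_{q/p}(bx)=0$ via additive characters and the orthogonality relation, whereas you count fibres of the trace map directly after the substitution $y=bx$; the two computations are equivalent, but yours avoids character sums entirely. The more substantial divergence is the dual distance: the paper invokes Theorem \ref{th-dualdistance} to get $d^{(p)\perp}\geq 3$ and then runs the first four Pless power moments on the weight distribution to conclude $A_3^{(p)\perp}>0$, while you read $d^{(p)\perp}=3$ off the generator matrix --- pairwise independence of columns from non-degeneracy of the trace form, plus the explicit relation (column at $x=1$) $=$ (penultimate column) $+$ (last column). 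Your argument is shorter, self-contained, and produces an explicit weight-$3$ dual codeword; the paper's power-moment argument is the one that transfers to the later theorems, where no such relation is visible. You also supply two refinements the paper leaves implicit: the punctured $[p^m,m+1,(p-1)p^{m-1}]$ code meeting the Griesmer bound, which is exactly what ``nearly optimal'' requires, and the chain $p^m<1+(p^m+1)(p-1)<p^{m+1}$ behind the sphere-packing step, whose upper half is what certifies that $k'=p^m-m$ is attainable and hence that the dual is dimension-optimal. One cosmetic caveat: in the two-column independence check, the case of a column at $x\in\gf(q)^*$ against the last column is settled not by ``the scalar is $1$'' but by the first coordinates being $1$ and $0$, which forces the scalar to be $0$ and hence the last column to vanish --- impossible since $\tr_{q/p}$ is not identically zero on a basis.
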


\begin{proof}
Let $\bc_{(1,q)}^{(p)}=\left(\left(a+\tr_{q/p}(bx)\right)_{x\in \gf(q)^*},a,\tr_{q/p}(b)\right)$ be any codeword in $\C_{(1,q)}^{(p)}$. By the orthogonality relation of additive characters,
\begin{eqnarray*}
\lefteqn{ \sharp \{x\in \gf(q)^*:a+\tr_{q/p}(bx)\neq 0\} } \\
&=& q-1-\sharp \{x\in \gf(q)^*:a+\tr_{q/p}(bx)=0\}\\
&=& q-1-\frac{1}{p}\sum_{x\in \gf(q)^*}\sum_{y\in \gf(p)}\zeta_{p}^{y(a+\tr_{q/p}(bx))}\\
&=& \frac{(p-1)(q-1)}{p}-\frac{1}{p}\sum_{y\in \gf(p)^*}\zeta_{p}^{ya}\sum_{x\in \gf(q)^*}\chi(ybx)\\
&=& \left\{
\begin{array}{lll}
\frac{(p-1)(q-1)}{p}-\frac{q-1}{p}\sum_{y\in \gf(p)^*}\zeta_{p}^{ya}    &   \mbox{ if }b=0\\
\frac{(p-1)(q-1)}{p}+\frac{1}{p}\sum_{y\in \gf(p)^*}\zeta_{p}^{ya}    &   \mbox{ if }b\neq 0\\
\end{array} \right.\\
&=& \left\{
\begin{array}{lll}
0    &   \mbox{ if }a=b=0,\\
p^m-1 & \mbox{ if }a\neq0,b=0,\\
(p-1)p^{m-1}    &   \mbox{ if }a=0,b\neq 0,\\
(p-1)p^{m-1}-1    &   \mbox{ if }a\neq 0,b\neq 0.
\end{array} \right.
\end{eqnarray*}
By definition, we then have
\begin{eqnarray*}
\wt(\bc_{(1,q)}^{(p)})=\left\{
\begin{array}{lll}
0    &   \mbox{ if }a=b=0,\\
p^m & \mbox{ if }a\neq0,b=0,\\
(p-1)p^{m-1}    &   \mbox{ if }a=0,b\neq 0,\tr_{q/p}(b)=0,\\
(p-1)p^{m-1}+1    &   \mbox{ if }a=0,\tr_{q/p}(b)\neq 0,\\
(p-1)p^{m-1}    &   \mbox{ if }a\neq 0,b\neq 0,\tr_{q/p}(b)=0,\\
(p-1)p^{m-1} +1   &   \mbox{ if }a\neq 0,\tr_{q/p}(b)\neq0.
\end{array} \right.
\end{eqnarray*}
Then the weight distribution of $\C_{(1,q)}^{(p)}$ follows. The dimension is $m+1$ as $\wt(\bc_{(1,q)}^{(p)})=0$ if and only if $a=b=0$. Then $\C_{(1,q)}^{(p)}$ has parameters $[p^m+1, m+1, (p-1)p^{m-1}]$. Note that
$$\sum_{i=0}^{m}\left\lceil\frac{(p-1)p^{m-1}}{p^i}\right\rceil=p^m.$$ Hence $\C_{(1,q)}^{(p)}$ is nearly optimal with respect to the Griesmer bound.

By Theorem \ref{th-dualdistance}, the minimal distance $d^{(p)\perp}$ of $\C_{(1, q)}^{(p)\perp}$ satisfies $d^{(p)\perp}\geq 3$ as the dual of $\C_{(1, q)}$ has minimal distance 3.  From the weight distribution of $\C_{(1,q)}^{(p)}$ and the first four Pless power moments in \cite[Page 131]{HP}, we can prove that $A_3^{(p)\perp}>0$, where $A_3^{(p)\perp}$ denotes the number of the codewords with weight 3 in $\C_{(1, q)}^{(p)\perp}$. Then the parameters of $\C_{(1, q)}^{(p)\perp}$ follow. By the sphere-packing bound, one can deduce that  the dual of $\C_{(1, q)}^{(p)}$ is dimension-optimal.
\end{proof}

The following example shows  that $\C_{(1, q)}^{(p)}$ is very attractive.

\begin{example} Let  $\C_{(1,q)}^{(p)}$  be the code in Theorem \ref{th-f=1}.
\begin{enumerate}
\item Let $p=2$ and $m=2$. Then the set $\C_{(1, q)}^{(p)}$ is a $[5,3,2]$ binary code and its dual is a $[5,2,3]$ binary code. Both codes have the best known parameters according to the Code Tables at http://www.codetables.de/. Note that $\C_{(1, q)}^{(p)}$ is a near MDS code in this case.
\item Let $p=2$ and $m=3$. Then the set $\C_{(1, q)}^{(p)}$ is a $[9,4,4]$ binary code and its dual is a $[9,5,3]$ binary code.  Both codes have best known parameters according to the Code Tables at http://www.codetables.de/.
\item Let $p=3$ and $m=2$. Then the set $\C_{(1, q)}^{(p)}$ is a $[10,3,6]$ ternary code and its dual is a $[10,7,3]$ almost MDS ternary code. Both codes have the best known parameters according to the Code Tables at http://www.codetables.de/.
\end{enumerate}
\end{example}

\section{The subfield code of $\C_{(f, q)}$ when $f$ is a monomial}
In this section, let $f$ be a monomial over $\gf(q)$ with $q=p^m$, i.e. $f(x)=x^t$ with $t$ a positive integer.
\begin{lemma}\label{lem-monomial}
Let $f(x)=x^t$ with $t$ a positive integer. Then $\C_{(f, q)}$ is a $[q+1, 2, q]$ MDS code if and only if
$\gcd(q-1,t-1)=1$.
\end{lemma}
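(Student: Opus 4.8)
The plan is to invoke Theorem \ref{thm-mainpolycode} and check its two conditions for the specific polynomial $f(x)=x^t$. First I would observe that Condition 1) holds trivially: since $\gf(q)$ is a field, any $x\in\gf(q)^*$ forces $x^t\neq 0$, so $f(x)=x^t$ never vanishes on $\gf(q)^*$ regardless of $t$. Thus the entire content of the lemma is carried by Condition 2), and the claimed criterion $\gcd(q-1,t-1)=1$ should be exactly the condition under which Condition 2) holds.

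Next I would rewrite Condition 2) for $f(x)=x^t$. For distinct $x,y\in\gf(q)^*$ one has
$$ yf(x)-xf(y)=yx^t-xy^t=xy\left(x^{t-1}-y^{t-1}\right). $$
Since $xy\neq 0$, this quantity is nonzero if and only if $x^{t-1}\neq y^{t-1}$. Hence Condition 2) is equivalent to the assertion that the power map $x\mapsto x^{t-1}$ is injective on $\gf(q)^*$.

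The key step is to translate this injectivity into the gcd condition. Setting $z=xy^{-1}$, the equality $x^{t-1}=y^{t-1}$ with $x\neq y$ is equivalent to the existence of some $z\in\gf(q)^*\setminus\{1\}$ with $z^{t-1}=1$. Since $\gf(q)^*$ is cyclic of order $q-1$, the number of solutions of $z^{t-1}=1$ in $\gf(q)^*$ equals $\gcd(t-1,q-1)$. Therefore a nontrivial solution $z\neq 1$ exists if and only if $\gcd(t-1,q-1)>1$. Consequently, Condition 2) fails precisely when $\gcd(t-1,q-1)>1$ and holds precisely when $\gcd(t-1,q-1)=1$, which together with the automatic Condition 1) and Theorem \ref{thm-mainpolycode} yields the lemma.

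I do not expect a serious obstacle here; the only point requiring a little care is the counting of roots of $z^{t-1}=1$, which rests on the cyclicity of $\gf(q)^*$ (writing each element as a power of the primitive element $\alpha$ and reducing to the linear congruence $k(t-1)\equiv 0 \pmod{q-1}$). The degenerate case $t=1$ is consistent with the criterion: then $z^{t-1}=1$ holds for every $z$, the root count is $q-1=\gcd(0,q-1)$, and the condition correctly forces $q=2$, where $\gf(q)^*=\{1\}$ makes Condition 2) vacuous.
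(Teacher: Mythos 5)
Your proposal is correct and follows essentially the same route as the paper: both reduce, via Theorem \ref{thm-mainpolycode}, to the condition that $z^{t-1}\neq 1$ for all $z\in\gf(q)^*\setminus\{1\}$ and then invoke the cyclicity of $\gf(q)^*$ to identify this with $\gcd(q-1,t-1)=1$. Your write-up merely spells out the factorization $yx^t-xy^t=xy(x^{t-1}-y^{t-1})$ and the root count, which the paper leaves implicit.
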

\begin{proof}
Due to Theorem \ref{thm-mainpolycode}, the code $\C_{(f, q)}$
is a $[q+1, 2, q]$ MDS code if and only if $z^{t-1} \neq 1$ for all $z$ in $\gf(q)^*\setminus \{1\}$, which is true if and only if
$\gcd(q-1, t-1)=1$. This completes the proof.
\end{proof}

There  obviously  exist infinitely many monomials satisfying the condition in Lemma \ref{lem-monomial}. In the following, we select some special monomials $f$ and investigate the parameters of $\C_{(f, q)}^{(p)}$ and its dual.

\subsection{The subfield code $\C_{(f,q)}^{(p)}$ for $f(x)=x^{p^l+1}$ and $m=2l$}
Let $f(x)=x^{p^l+1}$ and $m=2l$.  Since $\gcd(q-1,t-1)=\gcd(p^{m}-1,p^l)=1$, $\C_{(f, q)}$ is a $[q+1, 2, q]$ MDS code by Lemma \ref{lem-monomial}. By Theorem \ref{thm-tracerepresent},
the $p$-ary subfield code of $\C_{(f, q)}$ is given by
$$\C_{(f,q)}^{(p)}=\left\{{\bc_{(f,q)}}^{(p)}=\left(\left(\tr_{p^l/p}(ax^{p^l+1})+\tr_{q/p}(bx)\right)_{x\in \gf(q)^*},\tr_{p^l/p}(a),\tr_{q/p}(b)\right):\substack{a\in \gf(p^l)\\b\in \gf(q)}\right\}.$$
Before investigating the weight distribution of $\C_{(f, q)}$, we present a few lemmas.

\begin{lemma}\label{lem-N1}
Let $m=2l$ with $l\geq 2$. Denote by \begin{eqnarray*}
N_1=\sharp \left\{(a,b)\in \gf(p^l)^*\times \gf(q):\tr_{p^l/p}\left(\frac{b^{p^l+1}}{a}\right)=0,\tr_{p^l/p}(a)=0\mbox{ and }\tr_{q/p}(b)=0\right\}.
\end{eqnarray*}
Then $$N_1=(p^{l-1}-1)(p^{2l-2}-p^l+p^{l-1}).$$
\end{lemma}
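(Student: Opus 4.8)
The plan is to turn the three trace conditions into additive-character sums and then evaluate the resulting exponential sum. Write $\chi'$ for the canonical additive character of $\gf(p^l)$. Since $b^{p^l+1}=\Norm_{q/p^l}(b)$ and $b+b^{p^l}=\tr_{q/p^l}(b)$ both lie in $\gf(p^l)$, and since $\tr_{q/p}(b)=\tr_{p^l/p}(b+b^{p^l})$ by transitivity of the trace, the orthogonality relation (which replaces each constraint $\tr_{p^l/p}(\cdot)=0$ by $\frac1p\sum_{y\in\gf(p)}\chi'(y\,\cdot)$) gives
$$
p^3N_1=\sum_{y_1,y_2,y_3\in\gf(p)}\ \sum_{a\in\gf(p^l)^*}\ \sum_{b\in\gf(q)}\chi'\!\left(\frac{y_1 b^{p^l+1}}{a}+y_2 a+y_3(b+b^{p^l})\right).
$$

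Next I would split on whether $y_1=0$. When $y_1=0$ the inner sum factorizes into a $b$-sum and an $a$-sum; orthogonality collapses the $b$-sum to $q$ (only $y_3=0$ survives) and, after summing over $y_2$, the $a$-sum to $p^l-p$, giving the contribution $q(p^l-p)$. The substantive case is $y_1\neq0$. Putting $c=y_1/a\in\gf(p^l)^*$ and using that $y_3/c\in\gf(p^l)$, I would complete the square through the identity
$$
c\,b^{p^l+1}+y_3(b+b^{p^l})=c\,\Norm_{q/p^l}\!\left(b+\frac{y_3}{c}\right)-\frac{y_3^2}{c},
$$
which holds because $(b+y_3/c)^{p^l}=b^{p^l}+y_3/c$. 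After the bijective shift $b\mapsto b+y_3/c$, the inner sum reduces to $\chi'(-y_3^2/c)\sum_{b\in\gf(q)}\chi'(c\,\Norm_{q/p^l}(b))$.

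The key evaluation is then $\sum_{b\in\gf(q)}\chi'(c\,\Norm_{q/p^l}(b))=-p^l$ for every $c\in\gf(p^l)^*$, which follows from the fact that the relative norm maps $\gf(q)^*$ onto $\gf(p^l)^*$ with every fiber of size $p^l+1$, so the sum equals $1+(p^l+1)\sum_{w\in\gf(p^l)^*}\chi'(w)=1-(p^l+1)$. Substituting back, the $y_1\neq0$ part becomes $-p^l\sum_{y_1\in\gf(p)^*}\sum_{y_2,y_3}\sum_{a\in\gf(p^l)^*}\chi'\!\big(a(y_2-y_3^2/y_1)\big)$; since $y_2-y_3^2/y_1\in\gf(p)$, the $a$-sum equals $p^l-1$ precisely when $y_1y_2=y_3^2$ and $-1$ otherwise. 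Counting the triples $(y_1,y_2,y_3)\in\gf(p)^*\times\gf(p)\times\gf(p)$ with $y_1y_2=y_3^2$ (there are $(p-1)p$ of them, one $y_2$ per pair $(y_1,y_3)$) turns this part into $-p^{l+1}(p-1)(p^l-p)$. Adding the two cases and dividing by $p^3$ yields $N_1=p^{l-1}(p^{l-1}-1)(p^{l-1}-p+1)$, which is exactly $(p^{l-1}-1)(p^{2l-2}-p^l+p^{l-1})$.

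The only step requiring real insight is recognizing that the mixed norm-plus-trace form reduces to a pure relative-norm character sum after completing the square, combined with the clean fiber-counting evaluation of that norm sum; the rest is orthogonality bookkeeping. An advantage of this route over invoking Lemma~\ref{lem-charactersum} is that it is characteristic-free, working uniformly for $p=2$ and for odd $p$, since the factorization above never divides by $2$.
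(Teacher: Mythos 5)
Your proposal is correct and follows essentially the same route as the paper: orthogonality of additive characters, the identical completing-the-square identity $c\,b^{p^l+1}+y_3(b+b^{p^l})=c\,(b+y_3/c)^{p^l+1}-y_3^2/c$, and the evaluation $\sum_{b\in\gf(q)}\chi'(c\,\Norm_{q/p^l}(b))=-p^l$ by counting norm fibers. The only difference is organizational: the paper fixes $a$ with $\tr_{p^l/p}(a)=0$ and splits off $b=0$ separately, whereas you carry all three conditions in one triple character sum and resolve the $a$-sum at the end via the condition $y_1y_2=y_3^2$; both computations agree with the stated value of $N_1$.
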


\begin{proof} By definition,
\begin{eqnarray}\label{number-1}
\nonumber N_1&=&\sharp \left\{(a,b)\in \gf(p^l)^*\times \gf(q):\tr_{p^l/p}\left(\frac{b^{p^l+1}}{a}\right)=0,\tr_{p^l/p}(a)=0\mbox{ and }\tr_{q/p}(b)=0\right\}\\
\nonumber&=&\sharp \left\{(a,b)\in \gf(p^l)^*\times \gf(q)^*:\tr_{p^l/p}\left(\frac{b^{p^l+1}}{a}\right)=0,\tr_{p^l/p}(a)=0\mbox{ and }\tr_{q/p}(b)=0\right\}\\
& &+(p^{l-1}-1).
\end{eqnarray}
Denote by $\chi$ and $\phi$ the canonical additive characters of $\gf(p^l)$ and $\gf(q)$, respectively. For fixed $a\in \gf(p^l)^*$ satisfying $\tr_{p^l/p}(a)=0$,
\begin{eqnarray}\label{number-1-1}
\nonumber
\lefteqn{ \sharp \left\{b\in \gf(q)^*:\tr_{p^l/p}\left(\frac{b^{p^l+1}}{a}\right)=0\mbox{ and }\tr_{q/p}(b)=0\right\} } \\
\nonumber&=&\frac{1}{p^2}\sum_{b\in \gf(q)^*}\sum_{y\in \gf(p)}\zeta_{p}^{y\tr_{p^l/p}\left(\frac{b^{p^l+1}}{a}\right)}\sum_{z\in \gf(p)}\zeta_{p}^{z\tr_{q/p}(b)}\\
\nonumber &=&\frac{1}{p^2}\sum_{b\in \gf(q)^*}\sum_{y\in \gf(p)}\chi\left(\frac{yb^{p^l+1}}{a}\right)\sum_{z\in \gf(p)}\phi(bz)\\
\nonumber &=&\frac{q-1}{p^2}+\frac{1}{p^2}\sum_{b\in \gf(q)^*}\sum_{y\in \gf(p)^*}\chi\left(\frac{yb^{p^l+1}}{a}\right)+\frac{1}{p^2}\sum_{b\in \gf(q)^*}\sum_{z\in \gf(p)^*}\phi(bz)\\
 & &+\frac{1}{p^2}\sum_{b\in \gf(q)^*}\sum_{y\in \gf(p)^*}\chi\left(\frac{yb^{p^l+1}}{a}\right)\sum_{z\in \gf(p)^*}\phi(bz).
\end{eqnarray}
Since $\Norm(b)=b^{p^l+1}$ is the norm function from $\gf(q)^*$ onto $\gf(p^l)^*$,
\begin{eqnarray*}
\sum_{b\in \gf(q)^*}\sum_{y\in \gf(p)^*}\chi\left(\frac{yb^{p^l+1}}{a}\right)&=&\sum_{y\in \gf(p)^*}\sum_{b\in \gf(q)^*}\chi\left(\frac{yb^{p^l+1}}{a}\right)\\
&=&(p^l+1)\sum_{y\in \gf(p)^*}\sum_{b'\in \gf(p^l)^*}\chi\left(\frac{yb'}{a}\right)\\
&=&-(p^l+1)(p-1).
\end{eqnarray*}
By the orthogonality relation of additive characters,
 $$\sum_{b\in \gf(q)^*}\sum_{z\in \gf(p)^*}\phi(bz)=\sum_{z\in \gf(p)^*}\sum_{b\in \gf(q)^*}\phi(bz)=-(p-1).$$
 Therefore, Equation (\ref{number-1-1}) yields
 \begin{eqnarray}\label{number-1-2}
\nonumber
\lefteqn{ \sharp \left\{b\in \gf(q)^*:\tr_{p^l/p}\left(\frac{b^{p^l+1}}{a}\right)=0\mbox{ and }\tr_{q/p}(b)=0\right\} } \\
&=&\frac{q-1-(p-1)(p^l+2)}{p^2}+\frac{1}{p^2}\sum_{b\in \gf(q)^*}\sum_{y\in \gf(p)^*}\chi\left(\frac{yb^{p^l+1}}{a}\right)\sum_{z\in \gf(p)^*}\phi(bz).
\end{eqnarray}
Note that
\begin{eqnarray*}
\lefteqn{ \sum_{b\in \gf(q)^*}\sum_{y\in \gf(p)^*}\chi\left(\frac{yb^{p^l+1}}{a}\right)\sum_{z\in \gf(p)^*}\phi(bz) } \\
&=&\sum_{y,z\in \gf(p)^*}\sum_{b\in \gf(q)^*}\chi\left(\frac{yb^{p^l+1}}{a}\right)\phi(bz)\\
&=&\sum_{y,z\in \gf(p)^*}\sum_{b\in \gf(q)^*}\chi\left(\frac{yb^{p^l+1}}{a}+bz+b^{p^l}z\right)\\
&=&\sum_{y,z\in \gf(p)^*}\chi\left(-\frac{az^2}{y}\right)\sum_{b\in \gf(q)^*}\chi\left(\frac{y}{a}\left(b+\frac{az}{y}\right)^{p^l+1}\right)\\
&=&\sum_{y,z\in \gf(p)^*}\sum_{b\in \gf(q)}\chi\left(\frac{y}{a}\left(b+\frac{az}{y}\right)^{p^l+1}\right)-\sum_{y,z\in \gf(p)^*}\chi\left(\frac{az^2}{y}\right)\\
&=&-(p-1)^2+\sum_{y,z\in \gf(p)^*}\sum_{b'\in \gf(q)}\chi\left(\frac{y}{a}b'^{p^l+1}\right)\\
&=&\sum_{y,z\in \gf(p)^*}\sum_{b'\in \gf(q)^*}\chi\left(\frac{y}{a}b'^{p^l+1}\right)\\
&=&(p^l+1)\sum_{y,z\in \gf(p)^*}\sum_{b''\in \gf(p^l)^*}\chi\left(\frac{y}{a}b''\right)=-(p^l+1)(p-1)^2,
\end{eqnarray*}
where
$$\chi\left(-\frac{az^2}{y}\right)=\zeta_{p}^{\tr_{p^l/p}\left(-\frac{az^2}{y}\right)}=\zeta_{p}^{-\frac{z^2}{y}\tr_{p^l/p}(a)}=1\mbox{ as }\tr_{p^l/p}(a)=0,$$
and we made the substitution $b+\frac{az}{y}\mapsto b'$ in the fifth equality.
By Equation (\ref{number-1-2}), we then have
\begin{eqnarray}\label{number-1-3}
\sharp \left\{b\in \gf(q)^*:\tr_{p^l/p}\left(\frac{b^{p^l+1}}{a}\right)=0\mbox{ and }\tr_{q/p}(b)=0\right\}=p^{2l-2}-p^l+p^{l-1}-1.
\end{eqnarray}
Note that Equation (\ref{number-1-3}) holds for any $a\in \gf(p^l)^*$ such that $\tr_{p^l/p}(a)=0$. Combining Equations (\ref{number-1}) and (\ref{number-1-3}) yields
$$N_1=(p^{l-1}-1)(p^{2l-2}-p^l+p^{l-1}-1)+(p^{l-1}-1)=(p^{l-1}-1)(p^{2l-2}-p^l+p^{l-1}).$$
The proof is now completed.
\end{proof}

\begin{lemma}\label{lem-N2-N6}
Let $m=2l$ with $l\geq 2$. Denote by
\begin{enumerate}
\item $N_2=\sharp \left\{(a,b)\in \gf(p^l)^*\times \gf(q):\tr_{p^l/p}\left(\frac{b^{p^l+1}}{a}\right)\neq 0,\tr_{p^l/p}(a)=0\mbox{ and }\tr_{q/p}(b)=0\right\}$,
\item $N_3=\sharp \left\{(a,b)\in \gf(p^l)^*\times \gf(q):\substack{\tr_{p^l/p}\left(\frac{b^{p^l+1}}{a}\right)=0\mbox{ and exactly one of }\\\tr_{p^l/p}(a)\mbox{ and }\tr_{q/p}(b)\mbox{ equals }0}\right\}$,
\item $N_4=\sharp \left\{(a,b)\in \gf(p^l)^*\times \gf(q):\substack{\tr_{p^l/p}\left(\frac{b^{p^l+1}}{a}\right)\neq 0\mbox{ and exactly one of }\\\tr_{p^l/p}(a)\mbox{ and }\tr_{q/p}(b)\mbox{ equals }0}\right\}$,
\item $N_5=\sharp \left\{(a,b)\in \gf(p^l)^*\times \gf(q):\tr_{p^l/p}\left(\frac{b^{p^l+1}}{a}\right)= 0,\tr_{p^l/p}(a)\neq 0\mbox{ and }\tr_{q/p}(b)\neq 0\right\}$,
\item $N_6=\sharp \left\{(a,b)\in \gf(p^l)^*\times \gf(q):\tr_{p^l/p}\left(\frac{b^{p^l+1}}{a}\right)\neq 0,\tr_{p^l/p}(a)\neq 0\mbox{ and }\tr_{q/p}(b)\neq 0\right\}$.
\end{enumerate}
Then
\begin{eqnarray*}\left\{
\begin{array}{l}
N_2=(p-1)(p^{l-1}-1)(p^{2l-2}+p^{l-1}),\\
N_3=p^{2l-2}(p-1)(2p^{l-1}-1),\\
N_4=p^{2l-2}(p-1)^2(2p^{l-1}-1),\\
N_5=p^{2l-2}(p-1)^2(p^{l-1}-1),\\
N_6=p^{2l-2}(p-1)^2(p^{l}-p^{l-1}+1).
\end{array} \right.
\end{eqnarray*}
\end{lemma}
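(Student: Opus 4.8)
The plan is to treat $N_2,\dots,N_6$ together with $N_1$ from Lemma~\ref{lem-N1} as a single object: the joint distribution of the three $\gf(p)$-valued quantities
$$T:=\tr_{p^l/p}\!\left(b^{p^l+1}/a\right),\quad A:=\tr_{p^l/p}(a),\quad B:=\tr_{q/p}(b)$$
over $(a,b)\in\gf(p^l)^*\times\gf(q)$. Writing $c(\epsilon_T,\epsilon_A,\epsilon_B)$ for the number of pairs for which $T,A,B$ are zero ($\epsilon=0$) or nonzero ($\epsilon=*$) as prescribed, we have $N_1=c(0,0,0)$, $N_2=c(*,0,0)$, $N_5=c(0,*,*)$, $N_6=c(*,*,*)$, while $N_3=c(0,0,*)+c(0,*,0)$ and $N_4=c(*,0,*)+c(*,*,0)$. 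Three binary conditions give eight cells, so the whole table is pinned down by the total count, the three single marginals, the three pairwise counts and the triple count $N_1$. I would evaluate the cheap members of this list directly and recover every cell by inclusion--exclusion, so that only one new character sum has to be computed.

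The cheap inputs come in two flavours. The marginals in $A$ and $B$ and their joint count are pure products: there are $p^{l-1}-1$ elements $a$ with $A=0$ and $p^{2l-1}$ elements $b$ with $B=0$, giving the $A=0$ count $(p^{l-1}-1)p^{2l}$, the $B=0$ count $(p^l-1)p^{2l-1}$, and the count with $A=B=0$ equal to $(p^{l-1}-1)p^{2l-1}$. For the conditions involving $T$, the point is that $\Norm(b)=b^{p^l+1}$ maps $\gf(q)^*$ onto $\gf(p^l)^*$ exactly $(p^l+1)$-to-one, so for every fixed $a$ the number of $b$ with $T=0$ is $1+(p^l+1)(p^{l-1}-1)=p^{2l-1}-p^l+p^{l-1}$, independent of $a$; this yields the $T=0$ marginal $(p^l-1)(p^{2l-1}-p^l+p^{l-1})$ and, restricting $a$ to $A=0$, the count with $A=T=0$ equal to $(p^{l-1}-1)(p^{2l-1}-p^l+p^{l-1})$. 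Subtracting $N_1$ from the $A=B=0$ and $A=T=0$ counts gives $N_2=c(*,0,0)$ and $c(0,0,*)$, and the $A=0$ marginal then produces $c(*,0,*)$. At this point every cell with $A=0$ is known, and all four remaining cells (those with $A\neq0$) depend only on the single pairwise count with $B=T=0$.

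That pairwise count is the one genuinely new sum. Fixing $a$ and expanding the two trace conditions by additive characters, the inner sum over $b$ is exactly of the type treated in Lemma~\ref{lem-N1}: completing the square in $\frac{y}{a}b^{p^l+1}+z(b+b^{p^l})$ produces a constant factor $\chi(-az^2/y)=\zeta_p^{-(z^2/y)\tr_{p^l/p}(a)}$ times a norm sum of value $-p^l$. The difference from Lemma~\ref{lem-N1} is that I do not assume $\tr_{p^l/p}(a)=0$, so this factor is no longer trivial. Here is the crux: for $\tr_{p^l/p}(a)\neq0$ the sum of $\chi(-az^2/y)$ over $z\in\gf(p)^*$ is a quadratic Gauss sum over $\gf(p)$, proportional by Lemma~\ref{lem-charactersum} to $\eta_p(-\tr_{p^l/p}(a)/y)$ with $\eta_p$ the quadratic character of $\gf(p)$, and summing this over $y\in\gf(p)^*$ gives $0$ by orthogonality of $\eta_p$. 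Carrying this through, the count of $b$ with $T=B=0$ equals $p^{2l-2}-p^l+p^{l-1}$ when $\tr_{p^l/p}(a)=0$ (recovering Lemma~\ref{lem-N1}) and exactly $p^{2l-2}$ when $\tr_{p^l/p}(a)\neq0$. Summing over the $p^{l-1}-1$ and $(p-1)p^{l-1}$ respective values of $a$ gives the pairwise count $N_1+(p-1)p^{3l-3}$, whence $c(0,*,0)=(p-1)p^{3l-3}$.

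It remains to assemble. Adding $c(0,0,*)$ and $c(0,*,0)$ gives $N_3=(p-1)p^{2l-2}(2p^{l-1}-1)$; the $B=0$ marginal then yields $c(*,*,0)$ and hence $N_4=c(*,0,*)+c(*,*,0)=(p-1)^2p^{2l-2}(2p^{l-1}-1)$; the $T=0$ marginal yields $N_5=c(0,*,*)=(p-1)^2p^{2l-2}(p^{l-1}-1)$; and the total count $(p^l-1)p^{2l}$ yields $N_6=c(*,*,*)=(p-1)^2p^{2l-2}(p^l-p^{l-1}+1)$. The main obstacle is the single pairwise $B=T=0$ count, and within it the step of showing that the quadratic Gauss sum over $\gf(p)$ coming from the $\tr_{p^l/p}(a)\neq0$ terms washes out after summation over $y$; everything else is inclusion--exclusion with elementary marginals.
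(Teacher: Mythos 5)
Your proposal is correct and follows essentially the same route as the paper: both reduce everything to inclusion--exclusion against $N_1$, the elementary marginals (using that the norm $b\mapsto b^{p^l+1}$ is $(p^l+1)$-to-one onto $\gf(p^l)^*$), and one new character-sum evaluation of the $T=B=0$ count for $\tr_{p^l/p}(a)\neq 0$ (the paper's $N_3^{(1)}$, which it dispatches with ``similarly to Lemma \ref{lem-N1}'' and which you carry out explicitly, correctly obtaining $p^{2l-2}$ per such $a$). One sentence of yours is loosely phrased --- the sum of $\chi(-az^2/y)$ over $z\in\gf(p)^*$ is $\eta'(-\tr_{p^l/p}(a)/y)G(\eta',\chi')-1$, so the double sum over $y,z$ is $-(p-1)$ rather than $0$ --- but your stated count $p^{2l-2}$ already incorporates this term, so the argument and all final values are right.
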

\begin{proof}
Let $N_1$ be defined as that in Lemma \ref{lem-N1}. By definition, we have
$$N_1+N_2=\sharp \left\{(a,b)\in \gf(p^l)^*\times \gf(q):\tr_{p^l/p}(a)=0\mbox{ and }\tr_{q/p}(b)=0\right\}=p^{2l-1}(p^{l-1}-1)$$
$$\Rightarrow N_2=(p-1)(p^{l-1}-1)(p^{2l-2}+p^{l-1}).$$
Denote by $N_3=N_3^{(1)}+N_3^{(2)}$, where
$$N_3^{(1)}=\sharp \left\{(a,b)\in \gf(p^l)^*\times \gf(q):\tr_{p^l/p}\left(\frac{b^{p^l+1}}{a}\right)=0,\tr_{p^l/p}(a)\neq0\mbox{ and }\tr_{q/p}(b)=0\right\},$$ and
$$N_3^{(2)}=\sharp \left\{(a,b)\in \gf(p^l)^*\times \gf(q):\tr_{p^l/p}\left(\frac{b^{p^l+1}}{a}\right)=0,\tr_{p^l/p}(a)=0\mbox{ and }\tr_{q/p}(b)\neq0\right\}.$$
Similarly to the proof of Lemma \ref{lem-N1}, we can derive that
$$N_3^{(1)}=p^{2l-2}(p^{l}-p^{l-1}).$$
Observe that
\begin{eqnarray}\label{eqn-N3(2)}
\nonumber \lefteqn{ N_3^{(2)}+N_1 } \\
\nonumber &=&\sharp \left\{(a,b)\in \gf(p^l)^*\times \gf(q):\tr_{p^l/p}\left(\frac{b^{p^l+1}}{a}\right)=0\mbox{ and }\tr_{p^l/p}(a)=0\right\}\\
\nonumber &=&\sharp \left\{(a,b)\in \gf(p^l)^*\times \gf(q)^*:\tr_{p^l/p}\left(\frac{b^{p^l+1}}{a}\right)=0\mbox{ and }\tr_{p^l/p}(a)=0\right\}\\
& &+(p^{l-1}-1).
\end{eqnarray}
For fixed $a\in \gf(p^l)^*$ such that $\tr_{p^l/p}(a)=0$,
$$\tr_{p^l/p}\left(\frac{b^{p^l+1}}{a}\right)=0\Leftrightarrow b^{p^l+1} \in a\ker(\tr_{p^l/p})\setminus \{0\}.$$
 Since the norm function is a surjective homomorphism,
 $$\sharp \left\{(a,b)\in \gf(p^l)^*\times \gf(q)^*:\tr_{p^l/p}\left(\frac{b^{p^l+1}}{a}\right)=0\mbox{ and }\tr_{p^l/p}(a)=0\right\}=(p^l+1)(p^{l-1}-1)^2.$$
By Equation (\ref{eqn-N3(2)}), we have $N_3^{(2)}+N_1=(p^l+1)(p^{l-1}-1)^2+(p^{l-1}-1)$. Then
$$N_3^{(2)}=(p^l+1)(p^{l-1}-1)^2+(p^{l-1}-1)-N_1=(p^{l-1}-1)(p^{2l-1}-p^{2l-2})$$  by Lemma \ref{lem-N1}.
We then have $$N_3=N_3^{(1)}+N_3^{(2)}=p^{2l-2}(p-1)(2p^{l-1}-1).$$
It is easy to deduce that $N_3+N_4=(p^{l-1}-1)(p^{2l}-p^{2l-1})+p^{2l-1}(p^{l}-p^{l-1})$. We directly have
$$N_4=p^{2l-2}(p-1)^2(2p^{l-1}-1).$$
It is observed that
\begin{eqnarray*}
N_1+N_3+N_5&=&\sharp \left\{(a,b)\in \gf(p^l)^*\times \gf(q):\tr_{p^l/p}\left(\frac{b^{p^l+1}}{a}\right)= 0\right\}\\
&=&(p^l-1)+\sharp \left\{(a,b)\in \gf(p^l)^*\times \gf(q)^*:\tr_{p^l/p}\left(\frac{b^{p^l+1}}{a}\right)= 0\right\}\\
&=&(p^l-1)+(p^l-1)(p^l+1)(p^{l-1}-1),
\end{eqnarray*}
which implies
$$N_5=(p^l-1)+(p^l-1)(p^l+1)(p^{l-1}-1)-(N_1+N_3)=p^{2l-2}(p-1)^2(p^{l-1}-1).$$
Note that
\begin{eqnarray*}
N_5+N_6&=&\sharp \left\{(a,b)\in \gf(p^l)^*\times \gf(q):\tr_{p^l/p}(a)\neq 0\mbox{ and }\tr_{q/p}(b)\neq 0\right\}\\
&=&p^{3l-2}(p-1)^2.
\end{eqnarray*}
Then $N_6=p^{3l-2}(p-1)^2-N_5=p^{2l-2}(p-1)^2(p^{l}-p^{l-1}+1)$. The proof is completed.
\end{proof}

\begin{theorem}\label{th-code1}
Let $m=2l$ and $f(x)=x^{p^l+1}$ with $l\geq 2$. Then the $p$-ary subfield code $\C_{(f,q)}^{(p)}$  has parameters $[p^{2l}+1,3l,p^{l-1}(p^{l+1}-p^{l}-1)]$ and weight enumerator
\begin{eqnarray*}
1+(p-1)(p^{l-1}-1)(p^{2l-2}+p^{l-1})z^{p^{l-1}(p^{l+1}-p^{l}-1)} + \\
p^{2l-2}(p-1)^2(2p^{l-1}-1)  z^{p^{l-1}(p^{l+1}-p^{l}-1)+1} +\\
p^{2l-2}(p-1)^2(p^{l}-p^{l-1}+1)z^{p^{l-1}(p^{l+1}-p^{l}-1)+2}+ \\
(p^{2l-1}-1)z^{p^{2l-1}(p-1)} +  (p^{2l}-p^{2l-1}) z^{p^{2l-1}(p-1)+1} +  \\
(p^{l-1}-1)(p^{2l-2}-p^l+p^{l-1}) z^{(p-1)(p^{2l-1}+p^{l-1})}  + \\
p^{2l-2}(p-1)(2p^{l-1}-1) z^{(p-1)(p^{2l-1}+p^{l-1})+1}  + \\
p^{2l-2}(p-1)^2(p^{l-1}-1) z^{(p-1)(p^{2l-1}+p^{l-1})+2}.
\end{eqnarray*}
Its dual is nearly optimal with respect to the sphere-packing bound, and has parameters $[p^{2l}+1,p^{2l}+1-3l,3]$.
\end{theorem}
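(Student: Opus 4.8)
The plan is to compute the weight distribution of $\C_{(f,q)}^{(p)}$ directly from the trace representation supplied by Theorem \ref{thm-tracerepresent}, then read off the parameters and dimension, and finally derive the dual's parameters from the sphere-packing bound together with Theorem \ref{th-dualdistance}. A generic codeword is
\[
\bc_{(f,q)}^{(p)}=\left(\left(\tr_{p^l/p}(ax^{p^l+1})+\tr_{q/p}(bx)\right)_{x\in\gf(q)^*},\ \tr_{p^l/p}(a),\ \tr_{q/p}(b)\right),
\qquad a\in\gf(p^l),\ b\in\gf(q).
\]
First I would count the zeros of the first $q-1$ coordinates, i.e.\ evaluate
\[
N(a,b)=\sharp\{x\in\gf(q)^*:\tr_{p^l/p}(ax^{p^l+1})+\tr_{q/p}(bx)=0\}
\]
by the orthogonality relation, writing $N(a,b)=\tfrac1p\sum_{y\in\gf(p)}\sum_{x\in\gf(q)^*}\chi(y(ax^{p^l+1}+\mathrm{junk}))$ and reducing the inner sum to a quadratic Weil sum in $x$. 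Since $x^{p^l+1}=\Norm(x)$ and $\tr_{q/p}=\tr_{p^l/p}\circ\tr_{q/p^l}$, completing the square in the $\gf(q)$-quadratic form $\tfrac{y}{a}(x+\tfrac{az}{y})^{p^l+1}$ turns each slice into a norm-character sum; this is exactly the mechanism already rehearsed in the proofs of Lemma \ref{lem-N1} and Lemma \ref{lem-N2-N6}, so I would reuse those computations rather than redo them.

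The weight of $\bc_{(f,q)}^{(p)}$ is $(q-1)-N(a,b)$ corrected by $+1$ or $+2$ according to whether $\tr_{p^l/p}(a)$ and $\tr_{q/p}(b)$ are nonzero, exactly as in the $f(x)=1$ case treated in Theorem \ref{th-f=1}. Partitioning the pairs $(a,b)$ by the three quantities $\tr_{p^l/p}(b^{p^l+1}/a)$, $\tr_{p^l/p}(a)$, $\tr_{q/p}(b)$ (together with the degenerate rows $a=0$ and $b=0$) gives precisely the frequency classes counted by $N_1,\dots,N_6$ in Lemmas \ref{lem-N1}--\ref{lem-N2-N6}; substituting those six cardinalities produces the eight weights and their multiplicities listed in the statement. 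The degenerate cases $a=0$ (so $f$ contributes nothing and the first block is $\tr_{q/p}(bx)$, recovering the weights $p^{2l-1}(p-1)$ and $p^{2l-1}(p-1)+1$) and $b=0$ must be handled separately; these account for the two terms with exponent on $z$ equal to $p^{2l-1}(p-1)$ and $p^{2l-1}(p-1)+1$. The dimension $3l$ follows because $\wt(\bc_{(f,q)}^{(p)})=0$ forces $a=0$ and $b=0$, so the $|\gf(p^l)|\cdot|\gf(q)|=p^{3l}$ messages are distinct codewords and the length-$n$ generator matrix has $\gf(p)$-rank $3l$.

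For the dual, Theorem \ref{th-dualdistance} gives $d^{(p)\perp}\ge d^\perp=3$ since $\C_{(f,q)}^\perp$ is the $[q+1,q-1,3]$ MDS code from Theorem \ref{thm-mainpolycode}. To pin $d^{(p)\perp}=3$ I would show $A_3^{(p)\perp}>0$ via the first few Pless power moments applied to the weight distribution just obtained, as was done in Theorem \ref{th-f=1}; then the dual has parameters $[p^{2l}+1,\ p^{2l}+1-3l,\ 3]$, and a direct check against the sphere-packing bound for minimum distance $3$ yields near-optimality. The main obstacle is the bookkeeping in the Weil-sum evaluation of $N(a,b)$: correctly completing the square over $\gf(q)$, keeping track of the $\gf(p)$-scalars $y,z$, and verifying that the vanishing condition $\tr_{p^l/p}(a)=0$ is exactly what collapses the cross term $\chi(-az^2/y)$ to $1$. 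Once that reduction is in place, everything else is the arithmetic of summing the $N_i$ and checking the Griesmer/sphere-packing inequalities.
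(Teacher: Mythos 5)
Your proposal follows essentially the same route as the paper's proof: evaluating the zero-count of the trace form by orthogonality and a completed-square norm-character sum, splitting off the degenerate $a=0$ case, reading the frequencies from the counts $N_1,\dots,N_6$ of Lemmas \ref{lem-N1} and \ref{lem-N2-N6}, and settling the dual via Theorem \ref{th-dualdistance}, the Pless power moments, and the sphere-packing bound. The plan is correct and complete in outline; the only remaining work is the bookkeeping you already identify.
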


\begin{proof}
Let ${\bc_{(f,q)}}^{(p)}=\left(\left(\tr_{p^l/p}(ax^{p^l+1})+\tr_{q/p}(bx)\right)_{x\in \gf(q)^*},\tr_{p^l/p}(a),\tr_{q/p}(b)\right)$ be any codeword in $\C_{(f,q)}^{(p)}$.
By the orthogonality relation of additive characters,
\begin{eqnarray}\label{eqn-1}
\nonumber
\lefteqn{ \sharp \{x\in \gf(q)^*:\tr_{p^l/p}(ax^{p^l+1})+\tr_{q/p}(bx)\neq 0\} }\\
\nonumber&=&\sharp \{x\in \gf(q):\tr_{p^l/p}(ax^{p^l+1})+\tr_{q/p}(bx)\neq 0\}\\
\nonumber&=&q-\frac{1}{p}\sum_{x\in \gf(q)}\sum_{y\in \gf(p)}\zeta_{p}^{y\left(\tr_{p^l/p}(ax^{p^l+1})+\tr_{q/p}(bx)\right)}\\
\nonumber&=&q-\frac{1}{p}\sum_{x\in \gf(q)}\sum_{y\in \gf(p)}\zeta_{p}^{y\left(\tr_{p^l/p}(ax^{p^l+1}+bx+b^{p^l}x^{p^l})\right)}\\
&=&\frac{(p-1)q}{p}-\frac{1}{p}\sum_{y\in \gf(p)^*}\sum_{x\in \gf(q)}\chi(yax^{p^l+1}+ybx+yb^{p^l}x^{p^l}),
\end{eqnarray}
where $\chi$ denotes the canonical additive character of $\gf(p^l)$.

If $a\neq 0$, then
$$yax^{p^l+1}+ybx+yb^{p^l}x^{p^l}=ya\left(x^{p^l+1}+\frac{b}{a}x+\frac{b^{p^l}x^{p^l}}{a}\right)=ya\left(\left(x+\frac{b^{p^l}}{a}\right)^{p^l+1}-\frac{b^{p^l+1}}{a^2}\right)$$
for $y\in \gf(p)^*$. By Equation (\ref{eqn-1}),
\begin{eqnarray*}
\lefteqn{ \sharp \{x\in \gf(q)^*:\tr_{p^l/p}(ax^{p^l+1})+\tr_{q/p}(bx)\neq 0\} } \\
&=&\frac{(p-1)q}{p}-\frac{1}{p}\sum_{y\in \gf(p)^*}\sum_{x\in \gf(q)}\chi\left(ya\left(\left(x+\frac{b^{p^l}}{a}\right)^{p^l+1}-\frac{b^{p^l+1}}{a^2}\right)\right)\\
&=&\frac{(p-1)q}{p}-\frac{1}{p}\sum_{y\in \gf(p)^*}\chi\left(-\frac{b^{p^l+1}}{a}y\right)\sum_{x\in \gf(q)}\chi\left(ya\left(x+\frac{b^{p^l}}{a}\right)^{p^l+1}\right)\\
&=&\frac{(p-1)q}{p}-\frac{1}{p}\sum_{y\in \gf(p)^*}\chi\left(-\frac{b^{p^l+1}}{a}y\right)\sum_{x'\in \gf(q)}\chi\left(yax'^{p^l+1}\right),
\end{eqnarray*}
where we made the  substitution $x+\frac{b^{p^l}}{a}\mapsto x'$ in the last equality. Note that $\Norm(x')=x'^{p^l+1}$ is the norm function from $\gf(q)^*$ onto $\gf(p^l)^*$. Therefore,
$$\sum_{x'\in \gf(q)}\chi\left(yax'^{p^l+1}\right)=1+\sum_{x'\in \gf(q)^*}\chi\left(yax'^{p^l+1}\right)=1+(p^l+1)\sum_{z\in \gf(p^l)^*}\chi(yaz)=-p^l.$$
Then we have
\begin{eqnarray*}
\lefteqn{ \sharp \{x\in \gf(q)^*:\tr_{p^l/p}(ax^{p^l+1})+\tr_{q/p}(bx)\neq 0\}  } \\
&=&\frac{(p-1)q}{p}+p^{l-1}\sum_{y\in \gf(p)^*}\chi\left(-\frac{b^{p^l+1}}{a}y\right)\\
&=&(p-1)p^{2l-1}+p^{l-1}\sum_{y\in \gf(p)^*}\zeta_{p}^{-y\tr_{p^l/p}\left(\frac{b^{p^l+1}}{a}\right)}\\
&=&\left\{
\begin{array}{lll}
(p-1)(p^{2l-1}+p^{l-1})    &   \mbox{ if }\tr_{p^l/p}\left(\frac{b^{p^l+1}}{a}\right)=0,\\
p^{l-1}(p^{l+1}-p^{l}-1)   &   \mbox{ if }\tr_{p^l/p}\left(\frac{b^{p^l+1}}{a}\right)\neq 0,\\
\end{array} \right.
\end{eqnarray*}
by the orthogonality relation of additive characters. By definition, we have
\begin{eqnarray}\label{eqn-weights}
\wt(\bc_{(f,q)}^{(p)})&=&
\left\{
\begin{array}{lll}
(p-1)(p^{2l-1}+p^{l-1})    &   \substack{\mbox{ if }\tr_{p^l/p}\left(\frac{b^{p^l+1}}{a}\right)=0\mbox{ and }\tr_{p^l/p}(a)=0,\\\tr_{q/p}(b)=0,}\\
p^{l-1}(p^{l+1}-p^{l}-1)   &   \substack{\mbox{ if }\tr_{p^l/p}\left(\frac{b^{p^l+1}}{a}\right)\neq 0\mbox{ and }\tr_{p^l/p}(a)=0,\\\tr_{q/p}(b)=0,}\\
(p-1)(p^{2l-1}+p^{l-1})+1    &   \substack{\mbox{ if }\tr_{p^l/p}\left(\frac{b^{p^l+1}}{a}\right)=0\mbox{ and exactly one of }\\\tr_{p^l/p}(a)\mbox{ and }\tr_{q/p}(b)\mbox{ equals }0,}\\
p^{l-1}(p^{l+1}-p^{l}-1)+1   &   \substack{\mbox{ if }\tr_{p^l/p}\left(\frac{b^{p^l+1}}{a}\right)\neq0\mbox{ and exactly one of }\\\tr_{p^l/p}(a)\mbox{ and }\tr_{q/p}(b)\mbox{ equals }0,}\\
(p-1)(p^{2l-1}+p^{l-1})+2    &   \substack{\mbox{ if }\tr_{p^l/p}\left(\frac{b^{p^l+1}}{a}\right)=0\mbox{ and }\tr_{p^l/p}(a)\neq 0,\\\tr_{q/p}(b)\neq 0,}\\
p^{l-1}(p^{l+1}-p^{l}-1)+2   &   \substack{\mbox{ if }\tr_{p^l/p}\left(\frac{b^{p^l+1}}{a}\right)\neq 0\mbox{ and }\tr_{p^l/p}(a)\neq 0,\\ \tr_{q/p}(b)\neq 0.}
\end{array} \right.
\end{eqnarray}

If $a=0$, then the codeword ${\bc_{(f,q)}}^{(p)}=\left(\left(\tr_{q/p}(bx)\right)_{x\in \gf(q)^*},0,\tr_{q/p}(b)\right)$. It is easy to deduce that
\begin{eqnarray}\label{eqn-weights-2}
\wt(\bc_{(f,q)}^{(p)})&=&
\left\{
\begin{array}{lll}
0    &   \mbox{ if }a=b=0,\\
p^{2l-1}(p-1)   &  \mbox{ if }a=0,b\neq 0,\tr_{q/p}(b)=0,\\
p^{2l-1}(p-1)+1   &  \mbox{ if }a=0,\tr_{q/p}(b)\neq 0.
\end{array} \right.
\end{eqnarray}

Due to Equations (\ref{eqn-weights}) and (\ref{eqn-weights-2}), we deduce that the minimal distance $d^{(p)}$ of  $\C_{(f,q)}^{(p)}$ satisfies $d^{(p)}\geq p^{l-1}(p^{l+1}-p^{l}-1)$. By Lemma \ref{lem-N2-N6} and Equation (\ref{eqn-weights}),
$$A_{p^{l-1}(p^{l+1}-p^{l}-1) }=N_2=(p-1)(p^{l-1}-1)(p^{2l-2}+p^{l-1})>0\mbox{ for }l\geq2.$$
Therefore, the minimal distance $d^{(p)}=p^{l-1}(p^{l+1}-p^{l}-1)$. The dimension of $\C_{(f, q)}^{(p)}$ is $3l$ as $\wt(\bc_{(f,q)}^{(p)})=0$ if and only if $a=b=0$ for $l\geq 2$ by Equations (\ref{eqn-weights}) and (\ref{eqn-weights-2}). The parameters of $\C_{(f, q)}^{(p)}$ follow. Note that the frequency of each weight in Equation (\ref{eqn-weights-2}) is easy to derive. Then the weight distribution of  $\C_{(f,q)}^{(p)}$ follows from Lemmas \ref{lem-N1} and \ref{lem-N2-N6}.

By Theorem \ref{th-dualdistance}, the minimal distance $d^{(p)\perp}$ of $\C_{(f, q)}^{(p)\perp}$ satisfies $d^{(p)\perp}\geq 3$ as the dual of $\C_{(f, q)}$ has minimal distance 3.  From the weight distribution of $\C_{(f,q)}^{(p)}$ and the first four Pless power moments in \cite[Page 131]{HP}, we can prove that $A_3^{(p)\perp}>0$, where $A_3^{(p)\perp}$ denotes the number of the codewords with weight 3 in $\C_{(f, q)}^{(p)\perp}$. Then the parameters of $\C_{(f, q)}^{(p)\perp}$ follow. By the sphere-packing bound, one can deduce that $d^{(p)\perp}\leq 4$. Hence the dual of $\C_{(f, q)}^{(p)}$ is nearly optimal with respect to the sphere-packing bound.
\end{proof}

Theorem \ref{th-code1} shows that the code $\C_{(f, q)}^{(p)}$ is projective as its dual has minimal distance 3. The following example shows that $\C_{(f, q)}^{(p)}$ has very good parameters.

\begin{example} Let $m=2l$ and $f(x)=x^{p^l+1}$ with $l\geq 2$.
\begin{enumerate}
\item Let $p=2$ and $l=2$. Then the set $\C_{(f, q)}^{(p)}$ in Theorem \ref{th-code1} is a $[17,6,6]$ binary code whose dual is a $[17,11,3]$ binary code, while the corresponding best known parameters are $[17,6,7]$ and $[17,11,4]$ according to the Code Tables at http://www.codetables.de/.
\item Let $p=2$ and $l=3$. Then the set $\C_{(f, q)}^{(p)}$ in Theorem \ref{th-code1} is a $[65,9,28]$ binary code whose dual is a $[65,56,3]$ binary code, while the corresponding best known parameters are $[65,9,28]$ and $[65,56,4]$ according to the Code Tables at http://www.codetables.de/.
\item Let $p=3$ and $l=2$. Then the set $\C_{(f, q)}^{(p)}$ in Theorem \ref{th-code1} is a $[82,6,51]$ ternary code, while the corresponding best known parameters are $[82,6,52]$ according to the Code Tables at http://www.codetables.de/. Its dual is a $[82,76,3]$ ternary code which has the best known parameters according to the Code Tables at http://www.codetables.de/.
\end{enumerate}
\end{example}

\subsection{The subfield code $\C_{(f,q)}^{(p)}$ for $f(x)=x^2$ and odd $p$}
Let $f(x)=x^2$ and $p$ be odd. Then $\gcd(q-1,2-1)=1$ and $\C_{(f, q)}$ is a $[q+1, 2, q]$ MDS code  by Lemma \ref{lem-monomial}. It is known that $f(x)=x^2$ is a planar function over $\gf(q)$.  By Theorem \ref{thm-tracerepresent},
the $p$-ary subfield code of $\C_{(f, q)}$ is given by
$$\C_{(f,q)}^{(p)}=\left\{{\bc_{(f,q)}}^{(p)}=\left(\left(\tr_{q/p}(ax^2+bx)\right)_{x\in \gf(q)^*},\tr_{q/p}(a),\tr_{q/p}(b)\right):\substack{a\in \gf(q)\\b\in \gf(q)}\right\}.$$

\begin{lemma}\label{lem-N1-N4} Let $m$ be odd. The followings hold.
\begin{enumerate}
\item \begin{eqnarray*}&N_1:=\sharp \left\{(a,b)\in \gf(q)\times \gf(q):a\neq 0, \tr_{q/p}(\frac{b^2}{4a})=0, \tr_{q/p}(a)=0, \tr_{q/p}(b)= 0\right\}\\
    &=(p^{m-1}-1)p^{m-2}.\end{eqnarray*}
\item \begin{eqnarray*}&N_2:=\sharp \left\{(a,b)\in \gf(q)\times \gf(q):a\neq 0, \tr_{q/p}(\frac{b^2}{4a})=0, \tr_{q/p}(a)=0, \tr_{q/p}(b)\neq  0\right\}\\
    &=(p^{m-1}-1)(p^{m-1}-p^{m-2}).\end{eqnarray*}
\item \begin{eqnarray*}&N_3:=\sharp \left\{(a,b)\in \gf(q)\times \gf(q):a\neq 0, \tr_{q/p}(\frac{b^2}{4a})=0, \tr_{q/p}(a)\neq 0, \tr_{q/p}(b)= 0\right\}\\
    &=p^{m-2}(p-1)(p^{m-1}+p-1).\end{eqnarray*}
\item \begin{eqnarray*}&N_4:=\sharp \left\{(a,b)\in \gf(q)\times \gf(q):a\neq 0, \tr_{q/p}(\frac{b^2}{4a})=0, \tr_{q/p}(a)\neq 0, \tr_{q/p}(b)\neq  0\right\}\\
    &=p^{m-2}(p-1)^2(p^{m-1}-1).\end{eqnarray*}
\end{enumerate}
\end{lemma}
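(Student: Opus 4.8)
The plan is to evaluate all four counts by the orthogonality relation of additive characters, writing each defining condition $\tr_{q/p}(\cdot)=0$ as $\frac1p\sum_{y\in\gf(p)}\zeta_p^{y\tr_{q/p}(\cdot)}$ and reducing the inner sums over $b$ to quadratic Weil sums controlled by Lemma~\ref{lem-charactersum}. The single fact that makes everything collapse is that for any $\lambda\in\gf(q)^*$ one has $\sharp\{b\in\gf(q):\tr_{q/p}(\lambda b^2)=0\}=p^{m-1}$. Indeed, writing this count as $\frac1p\big(q+\sum_{y\in\gf(p)^*}\sum_{b}\chi(y\lambda b^2)\big)$ and applying Lemma~\ref{lem-charactersum} turns the second term into $G(\eta,\chi)\eta(\lambda)\sum_{y\in\gf(p)^*}\eta(y)$. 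Here is where the hypothesis that $m$ is odd enters: for $y\in\gf(p)^*$ one checks $\eta(y)=\eta'(y)$, where $\eta'$ is the quadratic character of $\gf(p)$, because the exponent $(q-1)/(p-1)=1+p+\cdots+p^{m-1}$ is odd. Hence $\sum_{y\in\gf(p)^*}\eta(y)=\sum_{y\in\gf(p)^*}\eta'(y)=0$, and the count is exactly $p^{m-1}$. Summing this over the $p^{m-1}-1$ values of $a\neq0$ with $\tr_{q/p}(a)=0$ gives $N_1+N_2=(p^{m-1}-1)p^{m-1}$, and over the $p^{m-1}(p-1)$ values with $\tr_{q/p}(a)\neq0$ gives $N_3+N_4=p^{2m-2}(p-1)$. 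These two identities reduce the problem to computing $N_1$ and $N_3$.

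To obtain $N_1$, I would fix $a\neq0$ with $\tr_{q/p}(a)=0$ and count the $b$ with $\tr_{q/p}(b^2/(4a))=0$ and $\tr_{q/p}(b)=0$. Introducing two characters, this count is $\frac{1}{p^2}\sum_{y,w\in\gf(p)}\sum_b\chi\big(\frac{y}{4a}b^2+wb\big)$. The terms with $y=0$ contribute $q$ (for $w=0$) and $0$ (for $w\neq0$); for $y\neq0$, Lemma~\ref{lem-charactersum} produces a factor $\eta(y/(4a))G(\eta,\chi)$ together with the completed-square phase $\chi(-w^2a/y)=\zeta_p^{-(w^2/y)\tr_{q/p}(a)}$, which equals $1$ precisely because $\tr_{q/p}(a)=0$. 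The surviving sum is then proportional to $\sum_{y\in\gf(p)^*}\eta(y)=0$, so only the $q$ survives and each admissible $a$ contributes $q/p^2=p^{m-2}$. Multiplying by $p^{m-1}-1$ gives $N_1=(p^{m-1}-1)p^{m-2}$, and then $N_2=(N_1+N_2)-N_1=(p^{m-1}-1)(p^{m-1}-p^{m-2})$.

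The main obstacle is $N_3$, the case $\tr_{q/p}(a)\neq0$, because the completed-square phase no longer trivializes. For fixed $a$ with $\tr_{q/p}(a)=s\in\gf(p)^*$ the same computation leaves the cross term $G(\eta,\chi)\eta(a)\sum_{y\in\gf(p)^*}\eta(y)\sum_{w\in\gf(p)^*}\zeta_p^{-sw^2/y}$, using $\eta(1/(4a))=\eta(a)$. The inner sum over $w$ is now a genuine quadratic Gauss sum over $\gf(p)$, equal to $\eta'(-s/y)G(\eta',\chi')-1$, where $\chi'$ and $G(\eta',\chi')$ are the canonical additive character and quadratic Gauss sum of $\gf(p)$; summing against $\eta(y)=\eta'(y)$ over $y\in\gf(p)^*$ collapses it to $\eta'(-s)(p-1)G(\eta',\chi')$. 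Thus each $a$ contributes $\frac{1}{p^2}\big(q+(p-1)G(\eta,\chi)G(\eta',\chi')\,\eta(a)\eta'(-s)\big)$. I would then sum over all $a$ with $\tr_{q/p}(a)=s$ and over $s\in\gf(p)^*$: the leading term gives $p^{2m-3}(p-1)$, while the cross term requires the hyperplane character sum $\sum_{\tr_{q/p}(a)=s}\eta(a)=\frac1p G(\eta,\chi)\eta'(-s)G(\eta',\chi')$, again evaluated by orthogonality and Lemma~\ref{lem-charactersum}. Feeding this back and using $G(\eta,\chi)^2=\eta(-1)q$ and $G(\eta',\chi')^2=\eta'(-1)p$ together with $\eta(-1)=\eta'(-1)$ (once more by $m$ odd, so the two Gauss-sum squares multiply to $qp=p^{m+1}$), the cross contribution telescopes to $p^{m-2}(p-1)^2$. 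Hence $N_3=p^{2m-3}(p-1)+p^{m-2}(p-1)^2=p^{m-2}(p-1)(p^{m-1}+p-1)$, and finally $N_4=(N_3+N_4)-N_3=p^{m-2}(p-1)^2(p^{m-1}-1)$. The delicate points throughout are the bookkeeping of which Gauss-sum factors live over $\gf(q)$ versus $\gf(p)$ and the repeated use of the parity of $m$ to identify $\eta$ with $\eta'$ on $\gf(p)^*$ and to evaluate $\eta(-1)$.
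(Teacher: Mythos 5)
Your proof is correct, and for $N_1$, $N_2$, $N_4$ it coincides with the paper's: the same orthogonality expansion, the same use of Lemma~\ref{lem-charactersum}, the same observation that $\eta|_{\gf(p)^*}=\eta'$ for odd $m$ kills the sums $\sum_{y\in\gf(p)^*}\eta(y)$ (and that $\chi(-az^2/y)=1$ when $\tr_{q/p}(a)=0$), and the same complementation identities $N_1+N_2=(p^{m-1}-1)p^{m-1}$ and $N_3+N_4=p^{2m-2}(p-1)$. Where you genuinely diverge is $N_3$. The paper never touches the hyperplane $\tr_{q/p}(a)=s$ directly: it computes $N_1+N_3$ by summing over \emph{all} $a\in\gf(q)^*$ at once, which lets it replace the inner sum $\sum_{a\in\gf(q)^*}\chi\bigl(\tfrac{yb^2}{4a}\bigr)$ by $-1$ via plain orthogonality (since $a\mapsto yb^2/(4a)$ permutes $\gf(q)^*$), so no Gauss sum ever needs an explicit value and the quadruple sum collapses to $p^m(p-1)^2$; then $N_3=(N_1+N_3)-N_1$. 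You instead fix $\tr_{q/p}(a)=s\neq 0$, carry the nontrivial completed-square phase as a genuine Gauss sum over $\gf(p)$, and then need two extra evaluations the paper avoids: the hyperplane character sum $\sum_{\tr_{q/p}(a)=s}\eta(a)=\tfrac{1}{p}G(\eta,\chi)\eta'(-s)G(\eta',\chi')$ and the product $G(\eta,\chi)^2G(\eta',\chi')^2=\eta(-1)\eta'(-1)qp=p^{m+1}$ (using $m$ odd once more). Both computations land on $N_3=p^{m-2}(p-1)(p^{m-1}+p-1)$; the paper's route is shorter because the Gauss sums cancel before they must be evaluated, while yours yields the finer per-hyperplane counts $\frac{1}{p^2}\bigl(q+(p-1)G(\eta,\chi)G(\eta',\chi')\eta(a)\eta'(-s)\bigr)$ as a byproduct, which is more information than the lemma asks for.
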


\begin{proof}
Firstly, we compute
 $$N_1=\sharp \left\{(a,b)\in \gf(q)\times \gf(q):a\neq 0, \tr_{q/p}(\frac{b^2}{4a})=0, \tr_{q/p}(a)=0, \tr_{q/p}(b)= 0\right\}.$$
For fixed nonzero $a$ satisfying $\tr_{q/p}(a)=0$, we have
\begin{eqnarray*}
\lefteqn{ \sharp \left\{b\in \gf(q):\tr_{q/p}(\frac{b^2}{4a})=0\mbox{ and }\tr_{q/p}(b)= 0\right\} } \\
&=&\frac{1}{p^2}\sum_{b\in \gf(q)}\sum_{y\in \gf(p)}\chi(\frac{yb^2}{4a})\sum_{z\in \gf(p)}\chi(zb)\\
&=&p^{m-2}+\frac{1}{p^2}\sum_{z\in \gf(p)^*}\sum_{b\in \gf(q)}\chi(zb)+\frac{1}{p^2}\sum_{b\in \gf(q)}\sum_{y\in \gf(p)^*}\chi(\frac{yb^2}{4a})\\
& &+\frac{1}{p^2}\sum_{b\in \gf(q)}\sum_{y\in \gf(p)^*}\chi(\frac{yb^2}{4a})\sum_{z\in \gf(p)^*}\chi(zb)\\
&=&p^{m-2}+0+\frac{1}{p^2}\sum_{y\in \gf(p)^*}\sum_{b\in \gf(q)}\chi(\frac{yb^2}{4a})+\frac{1}{p^2}\sum_{y\in \gf(p)^*}\sum_{z\in \gf(p)^*}\sum_{b\in \gf(q)}\chi(\frac{yb^2}{4a}+zb),
\end{eqnarray*}
where $\chi$ denotes the canonical additive character of $\gf(q)$. Let $\eta,\eta'$ be the quadratic multiplicative  character of $\gf(q)^*$ and $\gf(p)^*$, respectively.
By Lemma \ref{lem-charactersum} and the orthogonality relation of  multiplicative  characters,
\begin{eqnarray}\label{eqn-part}
\sum_{y\in \gf(p)^*}\sum_{b\in \gf(q)}\chi(\frac{yb^2}{4a})=G(\eta,\chi)\sum_{y\in \gf(p)^*}\eta(\frac{y}{a})=G(\eta,\chi)\eta(a^{-1})\sum_{y\in \gf(p)^*}\eta'(y)=0,
\end{eqnarray}
\begin{eqnarray*}
\sum_{y\in \gf(p)^*}\sum_{z\in \gf(p)^*}\sum_{b\in \gf(q)}\chi(\frac{yb^2}{4a}+zb)&=&G(\eta,\chi)\sum_{y\in \gf(p)^*}\eta(\frac{y}{a})\sum_{z\in \gf(p)^*}\chi(-\frac{a}{y}z^2)\\
&=&G(\eta,\chi)\eta(a^{-1})\sum_{y\in \gf(p)^*}\eta(y)\sum_{z\in \gf(p)^*}\zeta_{p}^{-\frac{z^2}{y}\tr_{q/p}(a)}\\
&=&(p-1)G(\eta,\chi)\eta(a^{-1})\sum_{y\in \gf(p)^*}\eta'(y)=0
\end{eqnarray*}
as $\eta(y)=\eta'(y)$ for $y\in \gf(p)^*$ and $\tr_{q/p}(a)=0$. Hence
$$\left\{b\in \gf(q):\tr_{q/p}(\frac{b^2}{4a})=0\mbox{ and }\tr_{q/p}(b)= 0\right\}=p^{m-2}$$ \mbox{ for any fixed nonzero $a$ satisfying }$\tr_{q/p}(a)=0$ and
$$N_1=(p^{m-1}-1)p^{m-2}.$$

Secondly, we compute
$$N_3=\sharp \left\{(a,b)\in \gf(q)\times \gf(q):a\neq 0, \tr_{q/p}(\frac{b^2}{4a})=0, \tr_{q/p}(a)\neq 0, \tr_{q/p}(b)= 0\right\}.$$
By definition,
\begin{eqnarray*}
N_1+N_3&=&\sharp \left\{(a,b)\in \gf(q)\times \gf(q):a\neq 0, \tr_{q/p}(\frac{b^2}{4a})=0, \tr_{q/p}(b)= 0\right\}\\
&=&\frac{1}{p^2}\sum_{a\in \gf(q)^*}\sum_{b\in \gf(q)}\sum_{y\in \gf(p)}\chi(\frac{b^2y}{4a})\sum_{z\in \gf(p)}\chi(bz)\\
&=&p^{m-2}(p^m-1)+\frac{1}{p^2}\sum_{a\in \gf(q)^*}\sum_{z\in \gf(p)^*}\sum_{b\in \gf(q)}\chi(zb)+\frac{1}{p^2}\sum_{a\in \gf(q)^*}\sum_{b\in \gf(q)}\sum_{y\in \gf(p)^*}\chi(\frac{yb^2}{4a})\\
& &+\frac{1}{p^2}\sum_{a\in \gf(q)^*}\sum_{b\in \gf(q)}\sum_{y\in \gf(p)^*}\chi(\frac{yb^2}{4a})\sum_{z\in \gf(p)^*}\chi(zb)\\
&=&p^{m-2}(p^m-1)+0+\frac{1}{p^2}\sum_{a\in \gf(q)^*}\sum_{y\in \gf(p)^*}\sum_{b\in \gf(q)}\chi(\frac{yb^2}{4a})\\
& &+\frac{1}{p^2}\sum_{y\in \gf(p)^*}\sum_{z\in \gf(p)^*}\sum_{a\in \gf(q)^*}\sum_{b\in \gf(q)}\chi(\frac{yb^2}{4a}+zb).
\end{eqnarray*}
By Equation (\ref{eqn-part}), we have  $$\sum_{a\in \gf(q)^*}\sum_{y\in \gf(p)^*}\sum_{b\in \gf(q)}\chi(\frac{yb^2}{4a})=0.$$
By the orthogonality relation of additive characters,
\begin{eqnarray*}
\lefteqn{ \sum_{y\in \gf(p)^*}\sum_{z\in \gf(p)^*}\sum_{a\in \gf(q)^*}\sum_{b\in \gf(q)}\chi(\frac{yb^2}{4a}+zb) } \\
&=&(p-1)^2(q-1)+\sum_{y\in \gf(p)^*}\sum_{z\in \gf(p)^*}\sum_{a\in \gf(q)^*}\sum_{b\in \gf(q)^*}\chi(\frac{yb^2}{4a}+zb)\\
&=&(p-1)^2(q-1)+\sum_{y\in \gf(p)^*}\sum_{z\in \gf(p)^*}\sum_{b\in \gf(q)^*}\chi(zb)\sum_{a\in \gf(q)^*}\chi(\frac{yb^2}{4a})\\
&=&(p-1)^2(q-1)-\sum_{y\in \gf(p)^*}\sum_{z\in \gf(p)^*}\sum_{b\in \gf(q)^*}\chi(zb)\\
&=&p^m(p-1)^2.
\end{eqnarray*}
Hence $N_1+N_3=p^{m-2}(p^m+p^2-2p)$ and then $N_3=p^{m-2}(p-1)(p^{m-1}+p-1)$.

The values of $N_2$ and $N_4$ can be easily determined by their connections with $N_1$ and $N_3$.
\end{proof}

\begin{lemma}\label{lem-N5-N12}
 Let $m$ be an odd positive integer and $q=p^m$.
\begin{enumerate}
\item Denote by $N_5$ the number of the solutions $(a,b)\in \gf(q)\times \gf(q)$ of
$$\left\{\begin{array}{l}
a\neq 0,\\
 \tr_{q/p}(\frac{b^2}{4a})\neq 0,\\
 \eta(a)\eta'\left(-\tr_{q/p}(\frac{b^2}{4a})\right)=1,\\
\tr_{q/p}(a)= 0,\\
\tr_{q/p}(b)= 0,\\
\end{array} \right.$$ then $N_5=\frac{(p-1)(p^{m-1}-1)(p^{m-2}+(-1)^{\frac{(p-1)(m+1)}{4}}p^{\frac{m-1}{2}})}{2}$.
\item Denote by $N_6$ the number of the solutions $(a,b)\in \gf(q)\times \gf(q)$ of
$$\left\{\begin{array}{l}
a\neq 0,\\
 \tr_{q/p}(\frac{b^2}{4a})\neq 0,\\
 \eta(a)\eta'\left(-\tr_{q/p}(\frac{b^2}{4a})\right)=-1,\\
\tr_{q/p}(a)= 0,\\
\tr_{q/p}(b)= 0,\\
\end{array} \right.$$ then $N_6=\frac{(p-1)(p^{m-1}-1)(p^{m-2}+(-1)^{\frac{(p-1)(m+1)+4}{4}}p^{\frac{m-1}{2}})}{2}$.
\item Denote by $N_7$ the number of the solutions $(a,b)\in \gf(q)\times \gf(q)$ of
$$\left\{\begin{array}{l}
a\neq 0,\\
 \tr_{q/p}(\frac{b^2}{4a})\neq 0,\\
 \eta(a)\eta'\left(-\tr_{q/p}(\frac{b^2}{4a})\right)=1,\\
\tr_{q/p}(a)= 0,\\
\tr_{q/p}(b)\neq  0,\\
\end{array} \right.$$ then $N_7=\frac{p^{m-2}(p-1)^{2}(p^{m-1}-1)}{2}$.
\item Denote by $N_8$ the number of the solutions $(a,b)\in \gf(q)\times \gf(q)$ of
$$\left\{\begin{array}{l}
a\neq 0,\\
 \tr_{q/p}(\frac{b^2}{4a})\neq 0,\\
 \eta(a)\eta'\left(-\tr_{q/p}(\frac{b^2}{4a})\right)=1,\\
\tr_{q/p}(a)\neq 0,\\
\tr_{q/p}(b)=  0,\\
\end{array} \right.$$ then $N_8=\frac{p^{m-2}(p-1)^{2}(p^{m-1}-1)}{2}$.
\item Denote by $N_9$ the number of the solutions $(a,b)\in \gf(q)\times \gf(q)$ of
$$\left\{\begin{array}{l}
a\neq 0,\\
 \tr_{q/p}(\frac{b^2}{4a})\neq 0,\\
 \eta(a)\eta'\left(-\tr_{q/p}(\frac{b^2}{4a})\right)=1,\\
\tr_{q/p}(a)\neq 0,\\
\tr_{q/p}(b)\neq 0,\\
\end{array} \right.$$ then $N_9=\frac{p^{m-2}(p-1)^{2}(p^m-p^{m-1}+1)+(-1)^{\frac{(p-1)(m+1)}{4}}p^{\frac{3(m-1)}{2}}(p-1)^{2}}{2}$.
\item Denote by $N_{10}$ the number of the solutions $(a,b)\in \gf(q)\times \gf(q)$ of
$$\left\{\begin{array}{l}
a\neq 0,\\
 \tr_{q/p}(\frac{b^2}{4a})\neq 0,\\
 \eta(a)\eta'\left(-\tr_{q/p}(\frac{b^2}{4a})\right)=-1,\\
\tr_{q/p}(a)= 0,\\
\tr_{q/p}(b)\neq  0,\\
\end{array} \right.$$ then $N_{10}=\frac{p^{m-2}(p-1)^{2}(p^{m-1}-1)}{2}$.
\item Denote by $N_{11}$ the number of the solutions $(a,b)\in \gf(q)\times \gf(q)$ of
$$\left\{\begin{array}{l}
a\neq 0,\\
 \tr_{q/p}(\frac{b^2}{4a})\neq 0,\\
 \eta(a)\eta'\left(-\tr_{q/p}(\frac{b^2}{4a})\right)=-1,\\
\tr_{q/p}(a)\neq 0,\\
\tr_{q/p}(b)=  0,\\
\end{array} \right.$$ then $N_{11}=\frac{p^{m-2}(p-1)^{2}(p^{m-1}-1)}{2}$.
\item Denote by $N_{12}$ the number of the solutions $(a,b)\in \gf(q)\times \gf(q)$ of
$$\left\{\begin{array}{l}
a\neq 0,\\
 \tr_{q/p}(\frac{b^2}{4a})\neq 0,\\
 \eta(a)\eta'\left(-\tr_{q/p}(\frac{b^2}{4a})\right)=-1,\\
\tr_{q/p}(a)\neq 0,\\
\tr_{q/p}(b)\neq  0,\\
\end{array} \right.$$ then $N_{12}=\frac{p^{m-2}(p-1)^{2}(p^m-p^{m-1}+1)+(-1)^{\frac{(p-1)(m+1)+4}{4}}p^{\frac{3(m-1)}{2}}(p-1)^{2}}{2}$.
\end{enumerate}
\end{lemma}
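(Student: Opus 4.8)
The plan is to exploit the fact that the eight quantities split into four sign-pairs sharing identical trace conditions, namely $(N_5,N_6)$, $(N_7,N_{10})$, $(N_8,N_{11})$ and $(N_9,N_{12})$, each pair distinguished only by whether $\eta(a)\eta'(-T)=+1$ or $-1$, where I abbreviate $T=\tr_{q/p}(b^2/(4a))$. For each pair I would compute the \emph{sum} and the \emph{difference} separately and then recover the two members via $N_{\pm}=\tfrac12[(N_++N_-)\pm(N_+-N_-)]$. Writing the sign indicator as $\tfrac12(1+s\,\eta(a)\eta'(-T))$ with $s\in\{\pm1\}$ and noting $\eta'(-T)=0$ when $T=0$, one sees that $N_++N_-$ equals the number of admissible pairs with $T\neq0$ under the given trace conditions. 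This count is elementary: it is the total number of pairs with the prescribed trace conditions (which is immediate) minus the corresponding $T=0$ count, and the latter is exactly one of $N_1,\dots,N_4$ supplied by Lemma~\ref{lem-N1-N4}. Running through the four cases reproduces the combined values $N_5+N_6$, $N_7+N_{10}$, $N_8+N_{11}$ and $N_9+N_{12}$.

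The substantive work is the difference $N_+-N_-=\sum\eta(a)\eta'(-T)$, summed over $a\ne0$ and $b$ obeying the trace conditions. First I would expand $\eta'(-T)=\frac{1}{G(\eta',\chi')}\sum_{w\in\gf(p)^*}\eta'(w)\,\chi(-wb^2/(4a))$, where $\chi$ is the canonical additive character of $\gf(q)$ and $\chi'$ that of $\gf(p)$ (this identity holds for all $T$, so the restriction $T\ne0$ is automatic), and encode the trace conditions on $a$ and $b$ as additive character sums $\frac1p\sum_{u}\chi(ua)$ and $\frac1p\sum_v\chi(vb)$. Interchanging summation, the innermost sum over $b$ is a quadratic Weil sum evaluated by Lemma~\ref{lem-charactersum}; it produces the factor $\eta(-w/(4a))G(\eta,\chi)=\eta'(-1)\eta'(w)\eta(a)G(\eta,\chi)$, where I use that $m$ odd forces $\eta|_{\gf(p)^*}=\eta'$, hence $\eta(-1)=\eta'(-1)$ and $\eta(w)=\eta'(w)$, together with $\eta(4)=1$. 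Using $\eta(a)^2=1$ and $\eta'(w)^2=1$ then collapses the remaining character sums to simple counting over $\gf(p)$ and $\gf(q)$.

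Rather than attack each trace case from scratch, I would build all four from three base sums: the unrestricted $\tilde D=\sum_{a\ne0}\eta(a)\sum_{b}\eta'(-T)$ and its restrictions to $\tr_{q/p}(a)=0$ and to $\tr_{q/p}(b)=0$. A pleasant simplification emerges: the $b$-sum of $\eta'(-T)$ over \emph{all} $b$ already equals its restriction to $\tr_{q/p}(b)=0$ (the part with $\tr_{q/p}(b)\neq0$ cancels), and symmetrically on the $a$-side. Consequently the two mixed differences vanish, giving $N_7-N_{10}=N_8-N_{11}=0$, hence $N_7=N_{10}$ and $N_8=N_{11}$ at once, while $N_9-N_{12}$ is recovered as $\tilde D$ minus the $\tr_{q/p}(a)=0$ restriction. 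Each surviving base sum reduces to the single scalar $\dfrac{\eta'(-1)G(\eta,\chi)}{G(\eta',\chi')}$ times an explicit integer count, so that $N_5-N_6$ and $N_9-N_{12}$ acquire their stated Gauss-sum terms $p^{(m-1)/2}$ and $p^{3(m-1)/2}$ respectively.

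The main obstacle is the final sign evaluation: proving $\dfrac{\eta'(-1)G(\eta,\chi)}{G(\eta',\chi')}=(-1)^{\frac{(p-1)(m+1)}{4}}\,p^{\frac{m-1}{2}}$. Substituting the explicit values from Lemma~\ref{quadGuasssum}, the ratio $G(\eta,\chi)/G(\eta',\chi')$ equals $(-1)^{m-1}(\sqrt{-1})^{((p-1)/2)^2(m-1)}p^{(m-1)/2}$; since $m$ is odd, $(-1)^{m-1}=1$ and the power of $\sqrt{-1}$ collapses to $(-1)^{\frac{m-1}{2}((p-1)/2)^2}$. Multiplying by $\eta'(-1)=(-1)^{(p-1)/2}$ and setting $\epsilon=(p-1)/2$, $k=(m-1)/2$, the exponent of $-1$ becomes $\epsilon+k\epsilon^2$, which I must match modulo $2$ against the target $\frac{\epsilon(m+1)}{2}=\epsilon(k+1)$. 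Their difference is $k\epsilon(\epsilon-1)$, which is even because $\epsilon(\epsilon-1)$ is a product of consecutive integers; this parity argument is the delicate step. With the scalar pinned down, assembling the sums with the differences yields all eight values, the exponent $\frac{(p-1)(m+1)+4}{4}=\frac{(p-1)(m+1)}{4}+1$ in $N_6$ and $N_{12}$ simply encoding the opposite sign for $s=-1$.
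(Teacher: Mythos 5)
Your proposal is correct, but it takes a genuinely different route from the paper's proof. The paper fixes the sign data first: it computes $N_5$ and $N_6$ directly by fixing $a$ with $\tr_{q/p}(a)=0$ and $\eta(a)=\pm 1$, counting for each target value $\beta^{2t}$ (resp.\ $\beta^{2t+1}$) the solutions $b$ of $\tr_{q/p}(b^2/(4a))+\beta^{2t}=0$ and $\tr_{q/p}(b)=0$ via Weil and Gauss sums, and invoking \cite[Lemma 14]{HD19} for the number of admissible $a$; it then reaches $N_7$, $N_8$, $N_9$ through the aggregates $N_5+N_7$, $N_5+N_8$ and $N_5+N_7+N_8+N_9$ obtained by successively dropping trace conditions. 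You instead pair the eight counts by the sign of $\eta(a)\eta'(-T)$, obtain the four pair-sums essentially for free from Lemma~\ref{lem-N1-N4} together with trivial counting, and concentrate all the analytic work into the signed sums $\sum\eta(a)\eta'(-T)$, evaluated through the orthogonality expansion of $\eta'$ and the single scalar $\eta'(-1)G(\eta,\chi)/G(\eta',\chi')$, whose value you pin down by a correct parity argument (your exponent check $k\epsilon(\epsilon-1)\equiv 0\pmod 2$ is right, and your scalar agrees with the paper's $G(\eta,\chi)G(\eta',\chi')/p$). This buys a cleaner structure: the coincidences $N_7=N_{10}$ and $N_8=N_{11}$ drop out as vanishing differences rather than from separate computations, the Gauss-sum bookkeeping is done once, and \cite[Lemma 14]{HD19} is not needed. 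One caution: your blanket claim that the $b$-sum of $\eta'(-T)$ over all $b$ equals its restriction to $\tr_{q/p}(b)=0$ holds pointwise in $a$ only when $\tr_{q/p}(a)=0$; for $\tr_{q/p}(a)\neq 0$ the restricted sum is $0$ while the unrestricted one is $(p-1)\eta(a)\,\eta'(-1)G(\eta,\chi)/G(\eta',\chi')$. The conclusions you draw ($N_7-N_{10}=N_8-N_{11}=0$ and the value of $N_9-N_{12}$) are still correct, but in the write-up you should split the computation of $\sum_{b:\tr_{q/p}(b)=0}\eta'(-T)$ into the two cases $\tr_{q/p}(a)=0$ and $\tr_{q/p}(a)\neq 0$ rather than asserting the identity uniformly.
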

\begin{proof}
Suppose that $\gf(q)^*=\langle\alpha\rangle$ and $\gf(p)^*=\langle\beta\rangle$. Let $C_0,C_0'$ be the cyclic groups generated by $\alpha^2$ and $\beta^2$, respectively. Denote by $\chi$ and $\chi'$  the canonical additive characters of $\gf(q)$ and $\gf(p)$, respectively. Let $\eta,\eta'$ be the quadratic multiplicative characters of $\gf(q)$ and $\gf(p)$, respectively.

Firstly, we determine $N_5$.
\begin{enumerate}
\item \label{item111} If $\eta(a)=\eta'\left(-\tr_{q/p}(\frac{b^2}{4a})\right)=1$, then $a\in C_0$ and $\tr_{q/p}(\frac{b^2}{4a})\in -C_0'$. Now fix nonzero $a$ such that $\tr_{q/p}(a)= 0$ and $\eta(a)=1$. Let $\tr_{q/p}(\frac{b^2}{4a})+\beta^{2t}=0$ for some $0\leq t\leq \frac{p-3}{2}$. Then
\begin{eqnarray*}
\lefteqn{ \sharp \{b\in \gf(q):\tr_{q/p}(\frac{b^2}{4a})+\beta^{2t}=0 \mbox{ and }\tr_{q/p}(b)= 0\} } \\
&=&\frac{1}{p^2}\sum_{b\in \gf(q)}\sum_{y\in \gf(p)}\zeta_{p}^{y\tr_{q/p}(\frac{b^2}{4a})+y\beta^{2t}}\sum_{z\in \gf(p)}\chi(zb)\\
&=&p^{m-2}+0+\frac{1}{p^2}\sum_{y\in \gf(p)^*}\chi'(y\beta^{2t})\sum_{b\in \gf(q)}\chi(\frac{yb^2}{4a})\\
& &+\frac{1}{p^2}\sum_{y\in \gf(p)^*}\chi'(y\beta^{2t})\sum_{z\in \gf(p)^*}\sum_{b\in \gf(q)}\chi(\frac{yb^2}{4a}+zb).
\end{eqnarray*}
By Lemmas \ref{lem-charactersum} and \ref{quadGuasssum},
\begin{eqnarray*}
\lefteqn{ \sum_{y\in \gf(p)^*}\chi'(y\beta^{2t})\sum_{b\in \gf(q)}\chi(\frac{yb^2}{4a}) } \\
&=& G(\eta,\chi)\sum_{y\in \gf(p)^*}\chi'(y\beta^{2t})\eta'(y\beta^{2t})  \\
&=&G(\eta,\chi)G(\eta',\chi') \\
&=& (-1)^{\frac{(p-1)(m+1)}{4}}p^{\frac{m+1}{2}},
\end{eqnarray*}
\begin{eqnarray*}
\lefteqn{ \sum_{y\in \gf(p)^*}\chi'(y\beta^{2t})\sum_{z\in \gf(p)^*}\sum_{b\in \gf(q)}\chi(\frac{yb^2}{4a}+zb) } \\
&=&G(\eta,\chi)\sum_{y\in \gf(p)^*}\chi'(y\beta^{2t})\eta'(y\beta^{2t})\sum_{z\in \gf(p)^*}\zeta_{p}^{-\frac{z^2}{y}\tr_{q/p}(a)}\\
&=&(p-1)G(\eta,\chi)G(\eta',\chi')=(-1)^{\frac{(p-1)(m+1)}{4}}p^{\frac{m+1}{2}}(p-1),
\end{eqnarray*}
due to $\tr_{q/p}(a)= 0$ and $\eta(a)=1$. Then
$$\sharp \{b\in \gf(q):\tr_{q/p}(\frac{b^2}{4a})+\beta^{2t}=0 \mbox{ and }\tr_{q/p}(b)= 0\}=p^{m-2}+(-1)^{\frac{(p-1)(m+1)}{4}}p^{\frac{m-1}{2}}$$
for all $0\leq t\leq \frac{p-3}{2}$ and any fixed nonzero $a$ such that $\tr_{q/p}(a)= 0$ and $\eta(a)=1$.
\item If $\eta(a)=\eta'\left(-\tr_{q/p}(\frac{b^2}{4a})\right)=-1$, then $a\in \alpha C_0$ and $\tr_{q/p}(\frac{b^2}{4a})\in -\beta C_0'$. Now fix nonzero $a$ such that $\tr_{q/p}(a)= 0$ and $\eta(a)=-1$. Let $\tr_{q/p}(\frac{b^2}{4a})+\beta^{2t+1}=0$ for some $0\leq t\leq \frac{p-3}{2}$. Similarly to \ref{item111}), we can obtain that
    $$\sharp \{b\in \gf(q):\tr_{q/p}(\frac{b^2}{4a})+\beta^{2t}=0 \mbox{ and }\tr_{q/p}(b)= 0\}=p^{m-2}+(-1)^{\frac{(p-1)(m+1)}{4}}p^{\frac{m-1}{2}}$$
for all $0\leq t\leq \frac{p-3}{2}$ and any fixed nonzero $a$ such that $\tr_{q/p}(a)= 0$ and $\eta(a)=-1$.
\end{enumerate}
By \cite[Lemma 14]{HD19},
\begin{eqnarray*}
& \sharp \{a\in \gf(q^*):\eta(a)=1\mbox{ and }\tr_{q/p}(a)=0\}=\sharp \{a\in \gf(q^*):\eta(a)=-1\mbox{ and }\tr_{q/p}(a)=0\}\\
&=\frac{p^{m-1}-1}{2}.
\end{eqnarray*}
Therefore,
$$N_5=\frac{(p-1)(p^{m-1}-1)(p^{m-2}+(-1)^{\frac{(p-1)(m+1)}{4}}p^{\frac{m-1}{2}})}{2}.$$
Similarly, it can be proved that
$$N_6=\frac{(p-1)(p^{m-1}-1)(p^{m-2}+(-1)^{\frac{(p-1)(m+1)+4}{4}}p^{\frac{m-1}{2}})}{2}.$$

Secondly, we compute $N_7$. Note that $N_5+N_7$ equals the number of the solutions $(a,b)\in \gf(q)\times \gf(q)$ of
$$\left\{\begin{array}{l}
a\neq 0,\\
 \tr_{q/p}(\frac{b^2}{4a})\neq 0,\\
 \eta(a)\eta'\left(-\tr_{q/p}(\frac{b^2}{4a})\right)=1,\\
\tr_{q/p}(a)= 0.
\end{array} \right.$$
\begin{enumerate}
\item If $\eta(a)=\eta'\left(-\tr_{q/p}(\frac{b^2}{4a})\right)=1$, then $\tr_{q/p}(\frac{b^2}{4a})\in -C_0'$. Now fix nonzero $a$ such that $\tr_{q/p}(a)= 0$ and $\eta(a)=1$. Let $\tr_{q/p}(\frac{b^2}{4a})+\beta^{2t}=0$ for some $0\leq t\leq \frac{p-3}{2}$. By Lemmas \ref{lem-charactersum} and \ref{quadGuasssum},
    \begin{eqnarray*}
\lefteqn{ \sharp \{b\in \gf(q):\tr_{q/p}(\frac{b^2}{4a})+\beta^{2t}=0\} } \\
&=&\frac{1}{p}\sum_{b\in \gf(q)}\sum_{y\in \gf(p)}\zeta_{p}^{y\tr_{q/p}(\frac{b^2}{4a})+y\beta^{2t}}\\
&=&p^{m-1}+\frac{1}{p}\sum_{y\in \gf(p)^*}\chi'(y\beta^{2t})\sum_{b\in \gf(q)}\chi(\frac{yb^2}{4a})\\
&=&p^{m-1}+\frac{1}{p}G(\eta,\chi)G(\eta',\chi')\\
&=&p^{m-1}+(-1)^{\frac{(p-1)(m+1)}{4}}p^{\frac{m-1}{2}}
\end{eqnarray*}
for all $0\leq t\leq \frac{p-3}{2}$ and any fixed nonzero $a$ such that $\tr_{q/p}(a)= 0$ and $\eta(a)=1$.
\item If $\eta(a)=\eta'\left(-\tr_{q/p}(\frac{b^2}{4a})\right)=-1$, then $\tr_{q/p}(\frac{b^2}{4a})\in -\beta C_0'$. Now fix nonzero $a$ such that $\tr_{q/p}(a)= 0$ and $\eta(a)=-1$. Let $\tr_{q/p}(\frac{b^2}{4a})+\beta^{2t+1}=0$ for some $0\leq t\leq \frac{p-3}{2}$. By Lemmas \ref{lem-charactersum} and \ref{quadGuasssum},
    \begin{eqnarray*}
\lefteqn{ \sharp \{b\in \gf(q):\tr_{q/p}(\frac{b^2}{4a})+\beta^{2t+1}=0\} } \\
&=&\frac{1}{p}\sum_{b\in \gf(q)}\sum_{y\in \gf(p)}\zeta_{p}^{y\tr_{q/p}(\frac{b^2}{4a})+y\beta^{2t+1}}\\
&=&p^{m-1}+\frac{1}{p}\sum_{y\in \gf(p)^*}\chi'(y\beta^{2t+1})\sum_{b\in \gf(q)}\chi(\frac{yb^2}{4a})\\
&=&p^{m-1}+\frac{1}{p}G(\eta,\chi)G(\eta',\chi')\\
&=&p^{m-1}+(-1)^{\frac{(p-1)(m+1)}{4}}p^{\frac{m-1}{2}}
\end{eqnarray*}
for all $0\leq t\leq \frac{p-3}{2}$ and any fixed nonzero $a$ such that $\tr_{q/p}(a)= 0$ and $\eta(a)=-1$.
\end{enumerate}
By \cite[Lemma 14]{HD19} and the preceding discussions, we then deduce that
$$N_5+N_7=\frac{(p-1)(p^{m-1}-1)(p^{m-1}+(-1)^{\frac{(p-1)(m+1)}{4}}p^{\frac{m-1}{2}})}{2}$$ and
$$N_7=\frac{p^{m-2}(p-1)^{2}(p^{m-1}-1)}{2}.$$

Thirdly, we compute $N_8$.  Note that $N_5+N_8$ equals the number of the solutions $(a,b)\in \gf(q)\times \gf(q)$ of
$$\left\{\begin{array}{l}
a\neq 0,\\
 \tr_{q/p}(\frac{b^2}{4a})\neq 0,\\
 \eta(a)\eta'\left(-\tr_{q/p}(\frac{b^2}{4a})\right)=1,\\
\tr_{q/p}(b)= 0.
\end{array} \right.$$
\begin{enumerate}
\item \label{item11} If $\eta(a)=\eta'\left(-\tr_{q/p}(\frac{b^2}{4a})\right)=1$, then $a\in C_0$ and $\tr_{q/p}(\frac{b^2}{4a})\in -C_0'$. Let $\tr_{q/p}(\frac{b^2}{4a})+\beta^{2t}=0$ for some $0\leq t\leq \frac{p-3}{2}$. Then
\begin{eqnarray*}
\lefteqn{ \sharp \{(a,b)\in C_0\times  \gf(q):\tr_{q/p}(\frac{b^2}{4a})+\beta^{2t}=0 \mbox{ and }\tr_{q/p}(b)= 0\} } \\
&=&\frac{1}{p^2}\sum_{a\in C_0}\sum_{b\in \gf(q)}\sum_{y\in \gf(p)}\zeta_{p}^{y\tr_{q/p}(\frac{b^2}{4a})+y\beta^{2t}}\sum_{z\in \gf(p)}\chi(zb)\\
&=&\frac{p^{m-2}(p^m-1)}{2}+0+\frac{1}{p^2}\sum_{y\in \gf(p)^*}\chi'(y\beta^{2t})\sum_{a\in C_0}\sum_{b\in \gf(q)}\chi(\frac{yb^2}{4a})\\
& &+\frac{1}{p^2}\sum_{y\in \gf(p)^*}\chi'(y\beta^{2t})\sum_{z\in \gf(p)^*}\sum_{a\in C_0}\sum_{b\in \gf(q)}\chi(\frac{yb^2}{4a}+zb).
\end{eqnarray*}
By Lemmas \ref{lem-charactersum} and \ref{quadGuasssum},
\begin{eqnarray*}
\lefteqn{ \sum_{y\in \gf(p)^*}\chi'(y\beta^{2t})\sum_{a\in C_0}\sum_{b\in \gf(q)}\chi(\frac{yb^2}{4a}) } \\
&=& \frac{(p^m-1)G(\eta,\chi)}{2}\sum_{y\in \gf(p)^*}\chi'(y\beta^{2t})\eta'(y\beta^{2t})\\
&=& \frac{G(\eta,\chi)G(\eta',\chi')(p^m-1)}{2}=\frac{(-1)^{\frac{(p-1)(m+1)}{4}}p^{\frac{m+1}{2}}(p^m-1)}{2},
\end{eqnarray*}
\begin{eqnarray*}
\lefteqn{ \sum_{y\in \gf(p)^*}\chi'(y\beta^{2t})\sum_{z\in \gf(p)^*}\sum_{a\in C_0}\sum_{b\in \gf(q)}\chi(\frac{yb^2}{4a}+zb) } \\
&=&G(\eta,\chi)\sum_{y\in \gf(p)^*}\chi'(y\beta^{2t})\eta'(y\beta^{2t})\sum_{z\in \gf(p)^*}\sum_{a\in C_0}\chi(-\frac{z^2}{y}a)\\
&=&\frac{1}{2}G(\eta,\chi)\sum_{y\in \gf(p)^*}\chi'(y\beta^{2t})\eta'(y\beta^{2t})\sum_{z\in \gf(p)^*}\sum_{a\in \gf(q)^*}\chi(-\frac{z^2}{y}a^2)\\
&=&\frac{1}{2}(-1)^{\frac{p+1}{2}}G(\eta,\chi)^2(p-1)-\frac{1}{2}(p-1)G(\eta,\chi)G(\eta',\chi')\\
&=&-\frac{p^m(p-1)}{2}-\frac{(-1)^{\frac{(p-1)(m+1)}{4}}(p-1)p^{\frac{m+1}{2}}}{2},
\end{eqnarray*}
due to $\eta(a)=1$. Then
\begin{eqnarray*}
\lefteqn{ \sharp \{(a,b)\in C_0\times  \gf(q):\tr_{q/p}(\frac{b^2}{4a})+\beta^{2t}=0 \mbox{ and }\tr_{q/p}(b)= 0\} }\\
&=&\frac{p^{m-1}(p^{m-1}-1)}{2}+\frac{(-1)^{\frac{(p-1)(m+1)}{4}}p^{\frac{m-1}{2}}(p^{m-1}-1)}{2},
\end{eqnarray*} for all $0\leq t\leq \frac{p-3}{2}$.
\item If $\eta(a)=\eta'\left(-\tr_{q/p}(\frac{b^2}{4a})\right)=-1$, then $a\in \alpha C_0$ and $\tr_{q/p}(\frac{b^2}{4a})\in -\beta C_0'$. Let $\tr_{q/p}(\frac{b^2}{4a})+\beta^{2t+1}=0$ for some $0\leq t\leq \frac{p-3}{2}$. Similarly, we can prove that
    \begin{eqnarray*}
\lefteqn{ \sharp \{(a,b)\in C_0\times  \gf(q):\tr_{q/p}(\frac{b^2}{4a})+\beta^{2t+1}=0 \mbox{ and }\tr_{q/p}(b)= 0\} } \\
&=&\frac{p^{m-1}(p^{m-1}-1)}{2}+\frac{(-1)^{\frac{(p-1)(m+1)}{4}}p^{\frac{m-1}{2}}(p^{m-1}-1)}{2},
\end{eqnarray*} for all $0\leq t\leq \frac{p-3}{2}$.
\end{enumerate}
Hence
$$N_5+N_8=\frac{(p-1)\left(p^{m-1}(p^{m-1}-1)+(-1)^{\frac{(p-1)(m+1)}{4}}p^{\frac{m-1}{2}}(p^{m-1}-1)\right)}{2}$$
and
$$N_8=\frac{p^{m-2}(p-1)^{2}(p^{m-1}-1)}{2}.$$

Fourthly, we determine $N_9$. Note that $N_5+N_7+N_8+N_9$  is equal to the number of the solutions $(a,b)\in \gf(q)\times \gf(q)$ of
$$\left\{\begin{array}{l}
a\neq 0,\\
 \tr_{q/p}(\frac{b^2}{4a})\neq 0,\\
 \eta(a)\eta'\left(-\tr_{q/p}(\frac{b^2}{4a})\right)=1.
\end{array} \right.$$
Similarly to the proof of $N_5+N_7$, we can prove that
$$N_5+N_7+N_8+N_9=\frac{(p-1)(p^m-1)(p^{m-1}+(-1)^{\frac{(p-1)(m+1)}{4}}p^{\frac{m-1}{2}})}{2}$$ implying
$$N_9=\frac{p^{m-2}(p-1)^{2}(p^m-p^{m-1}+1)+(-1)^{\frac{(p-1)(m+1)}{4}}p^{\frac{3(m-1)}{2}}(p-1)^{2}}{2}.$$

The determinations of $N_{10},N_{11},N_{12}$ are similar to those of $N_7,N_8,N_9$ and are omitted here. The proof is completed.
\end{proof}

\begin{theorem}\label{th-x2}
Let $m$ be an odd positive integer, $p$ an odd prime  and $q=p^m$. The subfield code $\C_{(x^2,q)}^{(p)}$ is a $[p^m+1,2m,p^{m-1}(p-1)-p^{\frac{m-1}{2}}]$ $p$-ary code, and has weight enumerator
\begin{eqnarray*}
1+(p^{m-1}-1)(p^{m-2}+1)z^{p^{m-1}(p-1)} +\\
p^{m-2}(p-1)(2p^{m-1}+2p-2)z^{p^{m-1}(p-1)+1}+
p^{m-2}(p-1)^2(p^{m-1}-1)z^{p^{m-1}(p-1)+2} +\\
\frac{(p-1)(p^{m-1}-1)(p^{m-2}+(-1)^{\frac{(p-1)(m+1)}{4}}p^{\frac{m-1}{2}})}{2}  z^{p^{m-1}(p-1)-p^{\frac{m-1}{2}}(-1)^{\frac{(p-1)(m+1)}{4}}} +\\
p^{m-2}(p-1)^{2}(p^{m-1}-1)z^{p^{m-1}(p-1)-p^{\frac{m-1}{2}}(-1)^{\frac{(p-1)(m+1)}{4}}+1}+ \\
\frac{p^{m-2}(p-1)^{2}(p^m-p^{m-1}+1)+(-1)^{\frac{(p-1)(m+1)}{4}}p^{\frac{3(m-1)}{2}}(p-1)^{2}}{2}z^{p^{m-1}(p-1)-p^{\frac{m-1}{2}}(-1)^{\frac{(p-1)(m+1)}{4}}+2} +  \\
\frac{(p-1)(p^{m-1}-1)(p^{m-2}+(-1)^{\frac{(p-1)(m+1)+4}{4}}p^{\frac{m-1}{2}})}{2} z^{p^{m-1}(p-1)+p^{\frac{m-1}{2}}(-1)^{\frac{(p-1)(m+1)}{4}}}  + \\
p^{m-2}(p-1)^{2}(p^{m-1}-1)z^{p^{m-1}(p-1)+p^{\frac{m-1}{2}}(-1)^{\frac{(p-1)(m+1)}{4}}+1}  + \\
\frac{p^{m-2}(p-1)^{2}(p^m-p^{m-1}+1)+(-1)^{\frac{(p-1)(m+1)+4}{4}}p^{\frac{3(m-1)}{2}}(p-1)^{2}}{2} z^{p^{m-1}(p-1)+p^{\frac{m-1}{2}}(-1)^{\frac{(p-1)(m+1)}{4}}+2}.
\end{eqnarray*}
Its dual is nearly optimal with respect to the sphere-packing bound, and has parameters $[p^{m}+1,p^{m}+1-2m,3]$.
\end{theorem}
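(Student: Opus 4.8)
The plan is to follow the same template as the proof of Theorem \ref{th-code1}: fix a codeword $\bc_{(x^2,q)}^{(p)}$ indexed by $(a,b)\in\gf(q)\times\gf(q)$ through the trace representation of Theorem \ref{thm-tracerepresent}, evaluate $\sharp\{x\in\gf(q)^*:\tr_{q/p}(ax^2+bx)\neq0\}$ exactly, and then add the contributions of the two extra coordinates $\tr_{q/p}(a)$ and $\tr_{q/p}(b)$ to recover $\wt(\bc_{(x^2,q)}^{(p)})$.

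First I would treat the generic case $a\neq0$. Expressing the indicator of $\tr_{q/p}(ax^2+bx)=0$ through additive characters gives
$$\sharp\{x\in\gf(q):\tr_{q/p}(ax^2+bx)=0\}=p^{m-1}+\frac1p\sum_{y\in\gf(p)^*}\sum_{x\in\gf(q)}\chi(y(ax^2+bx)),$$
where $\chi$ is the canonical additive character of $\gf(q)$. The inner sum is the Weil sum of the quadratic $yax^2+ybx$, so Lemma \ref{lem-charactersum} evaluates it as $\chi(-yb^2/(4a))\,\eta(ya)\,G(\eta,\chi)$. Using $\eta(y)=\eta'(y)$ for $y\in\gf(p)^*$ (valid since $m$ is odd) and summing over $y$ reduces the result to a one-dimensional Gauss sum over $\gf(p)$, governed by $T:=\tr_{q/p}(b^2/(4a))$: the $y$-sum vanishes when $T=0$ and collapses to $\eta(a)\eta'(-T)\,G(\eta,\chi)G(\eta',\chi')$ when $T\neq0$. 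By Lemma \ref{quadGuasssum} the Gauss-sum product equals $\epsilon\,p^{(m+1)/2}$ with $\epsilon:=(-1)^{(p-1)(m+1)/4}$, and hence, for $a\neq0$,
$$\sharp\{x\in\gf(q)^*:\tr_{q/p}(ax^2+bx)\neq0\}=\begin{cases}p^{m-1}(p-1),& T=0,\\ p^{m-1}(p-1)-\eta(a)\eta'(-T)\,\epsilon\,p^{(m-1)/2},& T\neq0.\end{cases}$$
Adding $1$ for each of $\tr_{q/p}(a)\neq0$ and $\tr_{q/p}(b)\neq0$ yields the six weight values attached to $a\neq0$. The degenerate case $a=0$ is exactly the linear situation already handled in Theorem \ref{th-f=1}: it contributes the zero codeword together with the weights $p^{m-1}(p-1)$ and $p^{m-1}(p-1)+1$, according to whether $\tr_{q/p}(b)$ vanishes.

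Next I would assemble the weight enumerator. Each of the six ``$a\neq0$'' weight values splits by the constraints on $\tr_{q/p}(a)$, $\tr_{q/p}(b)$ and on the sign $\eta(a)\eta'(-T)$, and its frequency is precisely one of the counts $N_1,\dots,N_{12}$ supplied by Lemmas \ref{lem-N1-N4} and \ref{lem-N5-N12}; the two ``$a=0$'' weights merge with the $T=0$ frequencies. For instance, the weight $p^{m-1}(p-1)$ has frequency $N_1+(p^{m-1}-1)=(p^{m-1}-1)(p^{m-2}+1)$, while the weights differing by $1$ arise by pooling paired counts such as $N_2+N_3+p^{m-1}(p-1)$ and $N_7+N_8$; matching all terms reproduces the stated enumerator. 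The dimension is $2m$ because $\wt(\bc_{(x^2,q)}^{(p)})=0$ forces $a=b=0$, so the $q^2=p^{2m}$ pairs give $p^{2m}$ distinct codewords. For the minimum distance one must watch the sign: the two extremal weights are $p^{m-1}(p-1)\mp\epsilon p^{(m-1)/2}$, and for either value of $\epsilon=\pm1$ the smaller one equals $p^{m-1}(p-1)-p^{(m-1)/2}$, occurring with positive frequency via $N_5$ when $\epsilon=1$ and via $N_6$ when $\epsilon=-1$; since $p^{(m-1)/2}>2$ for $m\geq3$, this lies strictly below all other weights, so $d=p^{m-1}(p-1)-p^{(m-1)/2}$.

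Finally, for the dual I would argue as in Theorem \ref{th-code1}. Since $\C_{(x^2,q)}$ is MDS with $d^\perp=3$, Theorem \ref{th-dualdistance} gives $d^{(p)\perp}\geq3$; substituting the weight distribution just obtained into the first four Pless power moments \cite[p.~131]{HP} yields $A_3^{(p)\perp}>0$, whence $d^{(p)\perp}=3$ and the dual has parameters $[p^m+1,\,p^m+1-2m,\,3]$. The sphere-packing bound then forces $d^{(p)\perp}\leq4$, so the dual is nearly optimal. The main obstacle throughout is the careful tracking of the quadratic-character sign $\epsilon=(-1)^{(p-1)(m+1)/4}$ through the Gauss-sum product, and the ensuing case analysis showing that, independently of $\epsilon$, the minimum weight is always $p^{m-1}(p-1)-p^{(m-1)/2}$; the remaining bookkeeping is routine once $N_1,\dots,N_{12}$ are in hand.
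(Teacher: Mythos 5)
Your proposal is correct and follows essentially the same route as the paper: the character-sum evaluation of $N_0(a,b)$ via Lemma \ref{lem-charactersum} and the Gauss-sum product $G(\eta,\chi)G(\eta',\chi')=\epsilon p^{(m+1)/2}$, the case split on $\tr_{q/p}(b^2/4a)$ and the sign $\eta(a)\eta'(-\tr_{q/p}(b^2/4a))$, the frequencies drawn from Lemmas \ref{lem-N1-N4} and \ref{lem-N5-N12}, and the Pless-power-moment plus sphere-packing argument for the dual. Your explicit check that the minimum weight equals $p^{m-1}(p-1)-p^{(m-1)/2}$ for either value of $\epsilon$ (via $N_5$ or $N_6$, whichever carries the factor $p^{m-2}+p^{(m-1)/2}$) is a detail the paper leaves implicit, and is a welcome addition.
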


\begin{proof}
Let $\chi$ and $\chi'$ be the canonical additive characters of $\gf(q)$ and $\gf(p)$, respectively. Let $\eta,\eta'$ be the quadratic multiplicative characters of $\gf(q)^*$ and $\gf(p)^*$, respectively. Let ${\bc_{(f,q)}}^{(p)}=\left(\left(\tr_{q/p}(ax^2+bx)\right)_{x\in \gf(q)^*},\tr_{q/p}(a),\tr_{q/p}(b)\right)$ be any codeword in $\C_{(f,q)}^{(p)}$.

Denote by $N_0(a,b,c)=\sharp\{x\in \gf(q):\tr_{q/p}(ax^2+bx)=0\}$. By the orthogonality relation of additive characters,
\begin{eqnarray}\label{eqn-N0}
\nonumber N_0(a,b)&=&\frac{1}{p}\sum_{x\in \gf(q)}\sum_{y\in \gf(p)}\zeta_{p}^{y\tr_{q/p}(ax^2+bx)}\\
\nonumber &=&\frac{q}{p}+\frac{1}{p}\sum_{y\in \gf(p)^*}\sum_{x\in \gf(q)}\chi(yax^2+ybx)\\
&=&p^{m-1}+\frac{1}{p}\Delta(a,b),
\end{eqnarray}
where $\Delta(a,b):=\sum_{y\in \gf(p)^*}\sum_{x\in \gf(q)}\chi(yax^2+ybx)$. We discuss the value of $\Delta(a,b)$ in the following cases.
\begin{enumerate}
\item Let $(a,b,c)=(0,0)$. Then $\Delta(a,b)=q(p-1)$.
\item Let $a=0$ and $b\neq 0$. Then
\begin{eqnarray*}
\Delta(a,b)&=&\sum_{y\in \gf(p)^*}\sum_{x\in \gf(q)}\chi(ybx)=0.
\end{eqnarray*}
\item Let $a\neq 0$. By Lemma \ref{lem-charactersum} we have
\begin{eqnarray*}
\Delta(a,b)&=&\sum_{y\in \gf(p)^*}\chi(-y^2b^2(4ya)^{-1})\eta(ya)G(\eta,\chi)\\
&=&G(\eta,\chi)\eta(a)\sum_{y\in \gf(p)^*}\eta(y)\chi\left(-\frac{b^2}{4a}y\right)\\
&=&G(\eta,\chi)\eta(a)\sum_{y\in \gf(p)^*}\zeta_{p}^{-\tr_{q/p}\left(\frac{b^2}{4a}\right)y}\eta(y)\\
&=&G(\eta,\chi)\eta(a)\sum_{y\in \gf(p)^*}\chi'\left(-\tr_{q/p}\left(\frac{b^2}{4a}\right)y\right)\eta(y).
\end{eqnarray*}
Since $m$ is odd, we have $\eta(y)=\eta'(y)$ for $y\in \gf(p)^*$. Then by Lemma \ref{quadGuasssum} we have
\begin{eqnarray*}
\Delta(a,b)&=&\left\{\begin{array}{ll}
G(\eta,\chi)\eta(a)\sum_{y\in \gf(p)^*}\eta'(y) & \mbox{ if }\tr_{q/p}(\frac{b^2}{4a})=0 \\
\myatop{\mbox{$G(\eta,\chi)\eta(a)\eta'\left(-\tr_{q/p}(\frac{b^2}{4a})\right)\times$}}
{\mbox{$\sum\limits_{y\in \gf(p)^*}\chi'\left(-\tr_{q/p}(\frac{b^2}{4a})y\right)\eta'\left(-\tr_{q/p}(\frac{b^2}{4a})y\right)$}} & \mbox{ if }\tr_{q/p}(\frac{b^2}{4a})\neq 0 \\
\end{array} \right.\\
&=&\left\{\begin{array}{ll}
0 & \mbox{ if }\tr_{q/p}(\frac{b^2}{4a})=0 \\
G(\eta,\chi)G(\eta',\chi')\eta(a)\eta'\left(-\tr_{q/p}(\frac{b^2}{4a}\right) &
\mbox{ if }\tr_{q/p}(\frac{b^2}{4a})\neq 0 \\
\end{array} \right.\\
&=&\left\{\begin{array}{ll}
0 & \mbox{ if }\tr_{q/p}(\frac{b^2}{4a})=0,\\
p^{\frac{m+1}{2}}(-1)^{\frac{(p-1)(m+1)}{4}} &
\mbox{ if }\tr_{q/p}(\frac{b^2}{4a})\neq 0,\ \eta(a)\eta'\left(-\tr_{q/p}(\frac{b^2}{4a})\right)=1,\\
p^{\frac{m+1}{2}}(-1)^{\frac{(p-1)(m+1)+4}{4}} &
\mbox{ if }\tr_{q/p}(\frac{b^2}{4a})\neq 0,\ \eta(a)\eta'\left(-\tr_{q/p}(\frac{b^2}{4a})\right)=-1.
\end{array} \right.\\
\end{eqnarray*}
\end{enumerate}
Equation (\ref{eqn-N0}) and the preceding discussions yield
\begin{eqnarray}\label{N0-oddm}
N_0(a,b)=
\left\{\begin{array}{ll}
p^m & \mbox{ for }(a,b)=(0,0),\\
p^{m-1} & \mbox{ for }a=0,\ b\neq 0,\mbox{ or }a\neq 0,\ \tr_{q/p}(\frac{b^2}{4a})=0,\\
p^{m-1}+p^{\frac{m-1}{2}}(-1)^{\frac{(p-1)(m+1)}{4}} &
\mbox{ if }a\neq 0,\ \tr_{q/p}(\frac{b^2}{4a})\neq 0,\ \eta(a)\eta'\left(-\tr_{q/p}(\frac{b^2}{4a})\right)=1,\\
p^{m-1}+p^{\frac{m-1}{2}}(-1)^{\frac{(p-1)(m+1)+4}{4}} &
\mbox{ if }a\neq 0,\ \tr_{q/p}(\frac{b^2}{4a})\neq 0,\ \eta(a)\eta'\left(-\tr_{q/p}(\frac{b^2}{4a})\right)=-1.
\end{array} \right.
\end{eqnarray}

For any codeword $\bc(a,b)=\left((\tr_{q/p}(ax^2+bx))_{x\in \gf(q)},\tr_{q/p}(a),\tr_{q/p}(b)\right)$, by Equation (\ref{N0-oddm}) we deduce the following:
\begin{enumerate}
\item If $(a,b)=(0,0), \mbox{ then }\wt(\bc(a,b))=0$.
\item $\mbox{ If }\left\{\begin{array}{l}
a=0,\\
b\neq 0,\\
\tr_{q/p}(b)=0,
\end{array} \right.$
$\mbox{ or }\left\{\begin{array}{l}
a\neq 0,\\
 \tr_{q/p}(\frac{b^2}{4a})=0,\\
\tr_{q/p}(a)=0,\\
\tr_{q/p}(b)=0,\\
\end{array} \right.$ then $\wt(\bc(a,b))=p^{m-1}(p-1)$. Its frequency is $p^{m-1}-1+N_1=(p^{m-1}-1)(p^{m-2}+1)$ by Lemma \ref{lem-N1-N4}.
\item $\mbox{ If }\left\{\begin{array}{l}
a=0,\\
b\neq 0,\\
\tr_{q/p}(b)\neq 0,
\end{array} \right.$
$\mbox{ or }\left\{\begin{array}{l}
a\neq 0,\\
 \tr_{q/p}(\frac{b^2}{4a})=0,\\
\tr_{q/p}(a)=0,\\
\tr_{q/p}(b)\neq 0,\\
\end{array} \right.$
$\mbox{ or }\left\{\begin{array}{l}
a\neq 0,\\
 \tr_{q/p}(\frac{b^2}{4a})=0,\\
\tr_{q/p}(a)\neq 0,\\
\tr_{q/p}(b)= 0,\\
\end{array} \right.$
 then $\wt(\bc(a,b))=p^{m-1}(p-1)+1$. Its frequency is $p^{m-1}(p-1)+N_2+N_3=p^{m-2}(p-1)(2p^{m-1}+2p-2)$ by Lemma \ref{lem-N1-N4}.
 \item $\mbox{ If }\left\{\begin{array}{l}
a\neq 0,\\
 \tr_{q/p}(\frac{b^2}{4a})=0,\\
\tr_{q/p}(a)\neq 0,\\
\tr_{q/p}(b)\neq  0,\\
\end{array} \right.$
 then $\wt(\bc(a,b))=p^{m-1}(p-1)+2$. Its frequency is $N_4=p^{m-2}(p-1)^2(p^{m-1}-1)$ by Lemma \ref{lem-N1-N4}.
 \item $\mbox{ If }\left\{\begin{array}{l}
a\neq 0,\\
 \tr_{q/p}(\frac{b^2}{4a})\neq 0,\\
 \eta(a)\eta'\left(-\tr_{q/p}(\frac{b^2}{4a})\right)=1,\\
\tr_{q/p}(a)= 0,\\
\tr_{q/p}(b)= 0,\\
\end{array} \right.$
 then $\wt(\bc(a,b))=p^{m-1}(p-1)-p^{\frac{m-1}{2}}(-1)^{\frac{(p-1)(m+1)}{4}}$. Its frequency is $N_5=\frac{(p-1)(p^{m-1}-1)(p^{m-2}+(-1)^{\frac{(p-1)(m+1)}{4}}p^{\frac{m-1}{2}})}{2}$ by Lemma \ref{lem-N5-N12}.
 \item $\mbox{ If }\left\{\begin{array}{l}
a\neq 0,\\
 \tr_{q/p}(\frac{b^2}{4a})\neq 0,\\
 \eta(a)\eta'\left(-\tr_{q/p}(\frac{b^2}{4a})\right)=1,\\
\tr_{q/p}(a)= 0,\\
\tr_{q/p}(b)\neq  0,\\
\end{array} \right.$
$\mbox{ or }\left\{\begin{array}{l}
a\neq 0,\\
 \tr_{q/p}(\frac{b^2}{4a})\neq 0,\\
 \eta(a)\eta'\left(-\tr_{q/p}(\frac{b^2}{4a})\right)=1,\\
\tr_{q/p}(a)\neq 0,\\
\tr_{q/p}(b)=  0,\\
\end{array} \right.$
 then $\wt(\bc(a,b))=p^{m-1}(p-1)-p^{\frac{m-1}{2}}(-1)^{\frac{(p-1)(m+1)}{4}}+1$. Its frequency is $N_7+N_8=p^{m-2}(p-1)^{2}(p^{m-1}-1)$ by Lemma \ref{lem-N5-N12}.
 \item $\mbox{ If }\left\{\begin{array}{l}
a\neq 0,\\
 \tr_{q/p}(\frac{b^2}{4a})\neq 0,\\
 \eta(a)\eta'\left(-\tr_{q/p}(\frac{b^2}{4a})\right)=1,\\
\tr_{q/p}(a)\neq 0,\\
\tr_{q/p}(b)\neq  0,\\
\end{array} \right.$
 then $\wt(\bc(a,b))=p^{m-1}(p-1)-p^{\frac{m-1}{2}}(-1)^{\frac{(p-1)(m+1)}{4}}+2$. Its frequency is $N_9=\frac{p^{m-2}(p-1)^{2}(p^m-p^{m-1}+1)+(-1)^{\frac{(p-1)(m+1)}{4}}p^{\frac{3(m-1)}{2}}(p-1)^{2}}{2}$ by Lemma \ref{lem-N5-N12}.
  \item $\mbox{ If }\left\{\begin{array}{l}
a\neq 0,\\
 \tr_{q/p}(\frac{b^2}{4a})\neq 0,\\
 \eta(a)\eta'\left(-\tr_{q/p}(\frac{b^2}{4a})\right)=-1,\\
\tr_{q/p}(a)= 0,\\
\tr_{q/p}(b)= 0,\\
\end{array} \right.$
 then $\wt(\bc(a,b))=p^{m-1}(p-1)+p^{\frac{m-1}{2}}(-1)^{\frac{(p-1)(m+1)}{4}}$. Its frequency is $N_6=\frac{(p-1)(p^{m-1}-1)(p^{m-2}+(-1)^{\frac{(p-1)(m+1)+4}{4}}p^{\frac{m-1}{2}})}{2}$ by Lemma \ref{lem-N5-N12}.
 \item $\mbox{ If }\left\{\begin{array}{l}
a\neq 0,\\
 \tr_{q/p}(\frac{b^2}{4a})\neq 0,\\
 \eta(a)\eta'\left(-\tr_{q/p}(\frac{b^2}{4a})\right)=-1,\\
\tr_{q/p}(a)= 0,\\
\tr_{q/p}(b)\neq  0,\\
\end{array} \right.$
$\mbox{ or }\left\{\begin{array}{l}
a\neq 0,\\
 \tr_{q/p}(\frac{b^2}{4a})\neq 0,\\
 \eta(a)\eta'\left(-\tr_{q/p}(\frac{b^2}{4a})\right)=-1,\\
\tr_{q/p}(a)\neq 0,\\
\tr_{q/p}(b)=  0,\\
\end{array} \right.$
 then $\wt(\bc(a,b))=p^{m-1}(p-1)+p^{\frac{m-1}{2}}(-1)^{\frac{(p-1)(m+1)}{4}}+1$. Its frequency is $N_{10}+N_{11}=p^{m-2}(p-1)^{2}(p^{m-1}-1)$ by Lemma \ref{lem-N5-N12}.
 \item $\mbox{ If }\left\{\begin{array}{l}
a\neq 0,\\
 \tr_{q/p}(\frac{b^2}{4a})\neq 0,\\
 \eta(a)\eta'\left(-\tr_{q/p}(\frac{b^2}{4a})\right)=-1,\\
\tr_{q/p}(a)\neq 0,\\
\tr_{q/p}(b)\neq  0,\\
\end{array} \right.$
 then $\wt(\bc(a,b))=p^{m-1}(p-1)+p^{\frac{m-1}{2}}(-1)^{\frac{(p-1)(m+1)}{4}}+2$. Its frequency is $N_{12}=\frac{p^{m-2}(p-1)^{2}(p^m-p^{m-1}+1)+(-1)^{\frac{(p-1)(m+1)+4}{4}}p^{\frac{3(m-1)}{2}}(p-1)^{2}}{2}$.
\end{enumerate}
Then the weight enumerator of $\C_{(f,q)}^{(p)}$ follows.

The dimension of $\C_{(f,q)}^{(p)}$ is $2m$ as $A_0=1$. By Theorem \ref{th-dualdistance}, the minimal distance $d^{(p)\perp}$ of $\C_{(f, q)}^{(p)\perp}$ satisfies $d^{(p)\perp}\geq 3$ as the dual of $\C_{(f, q)}$ has minimal distance 3.  From the weight distribution of $\C_{(f,q)}^{(p)}$ and the first four Pless power moments in \cite[Page 131]{HP}, we can prove that $A_3^{(p)\perp}>0$, where $A_3^{(p)\perp}$ denotes the frequency of the codewords with weight 3 in $\C_{(f, q)}^{(p)\perp}$. Then the parameters of $\C_{(f, q)}^{(p)\perp}$ follow. By the sphere-packing bound, one can deduce that $d^{(p)\perp}\leq 4$. Hence the dual of $\C_{(f, q)}^{(p)}$ is nearly optimal with respect to the sphere-packing bound.
\end{proof}

If $m$ is even, we can similarly prove that the subfield code $\C_{(f,q)}^{(p)}$ has parameters $[p^m+1,2m,(p-1)(p^{m-1}-p^{\frac{m-2}{2}})]$. We omit the proof here.
\begin{example}Let $f(x)=x^{2}$ with $q=p^m$ and $m$ odd.
\begin{enumerate}
\item Let $p=3$ and $m=3$. Then the set $\C_{(f, q)}^{(p)}$ in Theorem \ref{th-x2} is a $[28,6,15]$ ternary code which has the best known parameters, and its dual a $[28,22,3]$ code, while the corresponding best known parameters are $[28,22,4]$ according to the Code Tables at http://www.codetables.de/.
\item Let $p=5$ and $m=3$. Then the set $\C_{(f, q)}^{(p)}$ in Theorem \ref{th-x2} is a $[126,6,95]$  code whose dual is a $[126,120,3]$ code, while the corresponding best known parameters are $[126,6,95]$ and $[126,120,4]$ according to the Code Tables at http://www.codetables.de/.
\end{enumerate}
\end{example}

\subsection{The subfield code $\C_{(f,q)}^{(p)}$ for $f(x)=x^3$ and $p=2$}

Let $f(x)=x^3$ and $p=2$. Then $\gcd(q-1,3-1)=1$ and $\C_{(x^3, q)}$ is a $[q+1, 2, q]$ MDS code  by Lemma \ref{lem-monomial}. By Theorem \ref{thm-tracerepresent},
the binary subfield code of $\C_{(x^3, q)}$ is given by
$$\C_{(x^3,q)}^{(2)}=\left\{{\bc_{(x^3,q)}}^{(2)}=\left(\left(\tr_{q/p}(ax^3+bx)\right)_{x\in \gf(q)^*},\tr_{q/p}(a),\tr_{q/p}(b)\right):\substack{a\in \gf(q)\\b\in \gf(q)}\right\}.$$
In the following, we only investigate the parameters of $\C_{(x^3, q)}^{(2)}$ for odd $m$. For even $m$, the parameters of $\C_{(x^3, q)}^{(2)}$ can be discussed in a similar way.

Let $\chi$ be the canonical additive character of $\gf(q)$. Define a class of exponential sums as
$$S(a,b)=\sum_{x\in \gf(q)}\chi(ax^3+bx),\ a,b\in \gf(q).$$
Since $m$ is odd, $\gcd(3,q-1)=1$. If $a\neq 0$, then there exists exactly one element $c\in \gf(q)^*$ such that $a=c^3$. Hence
$$S(a,b)=\sum_{x\in \gf(q)}\chi(c^3x^3+bx)=\sum_{x\in \gf(q)}\chi(x^3+bc^{-1}x),\ b\in \gf(q),c\in \gf(q^*).$$
As a direct consequence of \cite[Theorem 2]{Carlitz79}, the following lemma can be derived.
\begin{lemma}\label{lem-Sab}
Let $q=2^m$ with $m$ odd and $a,b\in \gf(q)^*$. Then
\begin{eqnarray*}
S(a,b)=\left\{\begin{array}{ll}
0  & \mbox{ if }\tr_{q/2}(bc^{-1})=0,\\
(-1)^{\frac{m^2-1}{8}}2^{\frac{m+1}{2}} & \mbox{ if }bc^{-1}=t^4+t+1\mbox{ and }\tr_{q/2}(t^3+t)=0,\\
-(-1)^{\frac{m^2-1}{8}}2^{\frac{m+1}{2}} & \mbox{ if }bc^{-1}=t^4+t+1\mbox{ and }\tr_{q/2}(t^3+t)=1,\\
\end{array} \right.\\
\end{eqnarray*}
where $a=c^3$. Specially, $S(1,1)=(-1)^{\frac{m^2-1}{8}}2^{\frac{m+1}{2}}$.
\end{lemma}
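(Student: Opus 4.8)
The plan is to reduce $S(a,b)$ to a single-parameter cubic exponential sum and then invoke \cite[Theorem 2]{Carlitz79}, so that the real content of the proof is the translation of Carlitz's parametrization into the stated trace conditions. First I would record that, since $m$ is odd, $2^m\equiv -1\pmod 3$ and hence $3\nmid q-1$; consequently $x\mapsto x^3$ is a bijection of $\gf(q)$, so the element $c\in\gf(q)^*$ with $a=c^3$ is unique. Substituting $x\mapsto c^{-1}x$ in $S(a,b)=\sum_{x\in\gf(q)}\chi(ax^3+bx)$ and using $\chi\bigl(c^3(c^{-1}x)^3\bigr)=\chi(x^3)$ turns the sum into $S(a,b)=\sum_{x\in\gf(q)}\chi(x^3+bc^{-1}x)$, which is precisely the normalized form displayed just before the lemma. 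Thus it suffices to evaluate $\sum_{x\in\gf(q)}\chi(x^3+\lambda x)$ with $\lambda=bc^{-1}$, and this is exactly the quantity whose value Carlitz's theorem supplies: it is $0$ or $\pm(-1)^{(m^2-1)/8}2^{(m+1)/2}$, with the nonzero cases occurring when $\lambda=t^4+t+1$ is solvable and the sign governed by $\tr_{q/2}(t^3+t)$.

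Second, I would verify that the three listed cases are exhaustive and that the sign is well defined, which is where the additive polynomial $L(t)=t^4+t$ must be analyzed. Since $L$ is $\gf(2)$-linear with kernel $\gf(4)\cap\gf(q)=\gf(2)$ (as $\gcd(2,m)=1$), it is two-to-one onto a hyperplane; computing the adjoint operator $z\mapsto z^{2^{m-2}}+z$ shows that $\image(L)=\{z:\tr_{q/2}(z)=0\}$. Hence $\lambda=t^4+t+1$ is solvable if and only if $\tr_{q/2}(\lambda+1)=0$, i.e. if and only if $\tr_{q/2}(\lambda)=1$ (using $\tr_{q/2}(1)=m\equiv 1$). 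This is exactly the dichotomy in the statement: $\tr_{q/2}(bc^{-1})=0$ forces $S(a,b)=0$, while $\tr_{q/2}(bc^{-1})=1$ produces the two $\pm$ values. For the sign to be well defined I would check that the two preimages $t$ and $t+1$ of $\lambda+1$ give the same value of $\tr_{q/2}(t^3+t)$: expanding $(t+1)^3+(t+1)=t^3+t^2$ in characteristic $2$, the difference equals $\tr_{q/2}(t^2+t)=\tr_{q/2}(t^2)+\tr_{q/2}(t)=0$.

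Finally, I would specialize to $S(1,1)$. Here $a=1=1^3$ gives $c=1$, so $\lambda=bc^{-1}=1$, and writing $1=0^4+0+1$ puts us in the representable case with $t=0$; since $\tr_{q/2}(0^3+0)=0$ we land in the middle case and obtain the $+$ sign, namely $S(1,1)=(-1)^{(m^2-1)/8}2^{(m+1)/2}$.

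I expect the main obstacle to be matching Carlitz's original normalization to the exact form above: his Theorem~2 is stated with its own parameter conventions, and in particular the constant $(-1)^{(m^2-1)/8}$ arises from a quartic Gauss sum evaluation, so care is needed to confirm that his sign and scaling coincide with those claimed here and that his hypotheses cover all $\lambda\in\gf(q)^*$. The reduction in the first paragraph and the hyperplane/kernel bookkeeping in the second are routine once this identification is pinned down.
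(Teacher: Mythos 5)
Your proposal is correct and follows essentially the same route as the paper, which likewise reduces $S(a,b)=\sum_{x}\chi(c^3x^3+bx)$ to $\sum_{x}\chi(x^3+bc^{-1}x)$ via the bijectivity of cubing for odd $m$ and then simply cites \cite[Theorem 2]{Carlitz79} without further argument. Your additional checks --- that $\image(t\mapsto t^4+t)$ is the trace-zero hyperplane (so the three cases are exhaustive), that the two preimages $t$ and $t+1$ give the same value of $\tr_{q/2}(t^3+t)$ (so the sign is well defined), and the $t=0$ specialization for $S(1,1)$ --- are all accurate and supply bookkeeping the paper leaves implicit.
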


\begin{theorem}\label{th-x3}
Let $p=2$ and $m\geq 5$ be odd. Then the binary subfield code $\C_{(x^3, q)}^{(2)}$ has parameters $[2^m+1,2m, 2^{m-1}- 2^{(m-1)/2}]$ if $\frac{m^2-1}{8}$ is odd, and $[2^m+1,2m, d^{(2)}\geq2^{m-1}- 2^{(m-1)/2}]$ if $\frac{m^2-1}{8}$ is even.
\end{theorem}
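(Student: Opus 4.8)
The plan is to convert the weight of each codeword into the cubic exponential sum $S(a,b)$ and then read off the extreme weights from Lemma~\ref{lem-Sab}. First I would note that for $p=2$ and $f(x)=x^3$ the Walsh transform appearing in the corollary following Theorem~\ref{thm-tracerepresent} is precisely $W_f(a,b)=\sum_{x\in\gf(q)}(-1)^{\tr_{q/2}(ax^3+bx)}=S(a,b)$. That corollary then gives, for every $(a,b)\in\gf(q)\times\gf(q)$,
\begin{eqnarray*}
\wt(\bc_{(x^3,q)}^{(2)})=2^{m-1}-\tfrac12\,S(a,b)+\delta(a,b),
\end{eqnarray*}
where $\delta(a,b)\in\{0,1,2\}$ is the number of nonzero entries among $\tr_{q/2}(a),\tr_{q/2}(b)$. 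Hence minimizing the weight amounts to \emph{maximizing} $S(a,b)$ while forcing $\delta(a,b)=0$.

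I would next pin down the range of $S(a,b)$. If $b=0$ and $a\neq0$ then $S(a,0)=\sum_{x}(-1)^{\tr_{q/2}(ax^3)}=0$, because $x\mapsto x^3$ permutes $\gf(q)$ for odd $m$; if $a=0$ and $b\neq0$ then $S(0,b)=0$ by orthogonality; and if $a,b\neq0$ then Lemma~\ref{lem-Sab} restricts $S(a,b)$ to $\{0,\;2^{(m+1)/2},\;-2^{(m+1)/2}\}$. Since $\delta\geq0$ and the maximal value $2^{(m+1)/2}$ occurs only when $a,b\neq0$, every nonzero codeword obeys $\wt\geq 2^{m-1}-2^{(m-1)/2}$; this already yields the stated lower bound, which is all that is claimed when $(m^2-1)/8$ is even. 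The same list shows the encoding is injective: a zero codeword forces $\tr_{q/2}(ax^3+bx)=0$ for all $x$, hence $S(a,b)=q$, which is impossible for $(a,b)\neq(0,0)$, so $\dim\C_{(x^3,q)}^{(2)}=2m$.

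It remains to realize the bound with equality when $(m^2-1)/8$ is odd, i.e.\ when $(-1)^{(m^2-1)/8}=-1$; here I must produce $(a,b)$ with $a,b\neq0$, $\tr_{q/2}(a)=\tr_{q/2}(b)=0$ and $S(a,b)=+2^{(m+1)/2}$. The main tool is the scaling identity $S(\mu^3 a,\mu b)=S(a,b)$ for $\mu\in\gf(q)^*$, obtained from the substitution $x\mapsto x/\mu$. I would fix a base pair $(a_0,b_0)=(1,\,t^4+t+1)$ with $\tr_{q/2}(t^3+t)=1$ and $t^4+t+1\neq0$; such $t$ exists since $\sharp\{t:\tr_{q/2}(t^3+t)=1\}=2^{m-1}+2^{(m-1)/2}$ while $t^4+t+1=0$ has at most four roots. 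By Lemma~\ref{lem-Sab} this gives $S(a_0,b_0)=-(-1)^{(m^2-1)/8}2^{(m+1)/2}=+2^{(m+1)/2}$. A character-sum count of the admissible scalings then reads
\begin{eqnarray*}
\sharp\{\mu\in\gf(q):\tr_{q/2}(\mu^3a_0)=\tr_{q/2}(\mu b_0)=0\}=\tfrac14\bigl(q+S(a_0,b_0)\bigr)=2^{m-2}+2^{(m-3)/2},
\end{eqnarray*}
using $\sum_\mu(-1)^{\tr_{q/2}(\mu^3 a_0)}=\sum_\mu(-1)^{\tr_{q/2}(\mu b_0)}=0$ together with $\sum_\mu(-1)^{\tr_{q/2}(\mu^3 a_0+\mu b_0)}=S(a_0,b_0)$. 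For $m\geq5$ this count exceeds $1$, so some nonzero $\mu$ qualifies, and $(\mu^3a_0,\mu b_0)$ is a codeword of weight exactly $2^{m-1}-2^{(m-1)/2}$.

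The case analysis for the lower bound is routine; the \textbf{main obstacle} is the upper bound, namely exhibiting a codeword that simultaneously attains the extreme sum $S=+2^{(m+1)/2}$ and kills both traces — a single cubic constraint entangled with two linear ones. The scaling-plus-character-sum step dissolves this entanglement once a base pair of the correct sign is available, and I expect the only delicate points to be the existence of that base pair (governed by the sign $(-1)^{(m^2-1)/8}$) and the positivity of the scaling count (ensured by $m\geq5$). I observe in passing that taking instead $\tr_{q/2}(t^3+t)=0$ produces a base pair with $S=+2^{(m+1)/2}$ also when $(m^2-1)/8$ is even, so the identical argument would in fact upgrade the lower bound to an equality in that case as well.
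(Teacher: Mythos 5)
Your proposal is correct, and up to the final step it follows the same skeleton as the paper's proof: write the weight as $2^{m-1}-\tfrac12 S(a,b)$ plus the two trace corrections, evaluate $S$ in the degenerate cases (your observation that $S(a,0)=0$ because cubing permutes $\gf(q)$ for odd $m$ is slicker than the paper's squaring trick, but equivalent), invoke Lemma~\ref{lem-Sab} for $a,b\neq 0$, and deduce the dimension $2m$ and the lower bound on $d^{(2)}$. Where you genuinely diverge is in exhibiting a codeword of weight exactly $2^{m-1}-2^{(m-1)/2}$. The paper restricts to the diagonal family $a=b^3$, so that $bc^{-1}=1=t^4+t+1$ with $t\in\{0,1\}$, which forces the branch $\tr_{q/2}(t^3+t)=0$ of Lemma~\ref{lem-Sab} and hence $S=(-1)^{(m^2-1)/8}2^{(m+1)/2}$; it then counts $b$ with $\tr_{q/2}(b^3)=\tr_{q/2}(b)=0$. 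You instead start from a base pair lying in whichever branch yields $S=+2^{(m+1)/2}$ --- for $(m^2-1)/8$ odd this is the branch $\tr_{q/2}(t^3+t)=1$ --- and use the invariance $S(\mu^3a,\mu b)=S(a,b)$ together with a two-character count to find a nonzero scaling that annihilates both traces. This is the more reliable route: since the diagonal family has $S$ of sign $(-1)^{(m^2-1)/8}$, it sits at the \emph{larger} of the two extreme weights precisely when $(m^2-1)/8$ is odd (the paper's weight table appears to carry a sign slip relative to its Equation~(\ref{eqn-3}) at exactly this point), whereas your construction targets the correct branch directly. Your closing remark is also right: running the identical scaling argument from a base pair with $\tr_{q/2}(t^3+t)=0$ (e.g.\ $(a_0,b_0)=(1,1)$) attains the minimum when $(m^2-1)/8$ is even as well, so your method actually yields $d^{(2)}=2^{m-1}-2^{(m-1)/2}$ in both cases and strengthens the theorem. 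One cosmetic point: your count $\sharp\{t:\tr_{q/2}(t^3+t)=1\}=2^{m-1}+2^{(m-1)/2}$ already presupposes the parity hypothesis (in general it equals $2^{m-1}-(-1)^{(m^2-1)/8}2^{(m-1)/2}$), but all you need is that it exceeds the at most four roots of $t^4+t+1$, which holds for $m\geq5$ in either case.
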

\begin{proof}
Let $\chi$ be the canonical additive character of $\gf(q)$.
Denote $N_0(a,b)=\sharp \{x\in \gf(q):\tr_{q/2}(ax^3+bx)=0\}$. By the orthogonality relation of additive characters, we have
\begin{eqnarray}\label{eqn-x3-1}
\nonumber N_0(a,b)&=&\frac{1}{2}\sum_{z\in \gf(2)}\sum_{x\in \gf(q)}(-1)^{z\tr_{q/2}(ax^3+bx)}\\
\nonumber &=&2^{m-1}+\frac{1}{2}\sum_{x\in \gf(q)}\chi(ax^3+bx)\\
&=&2^{m-1}+\frac{1}{2}S(a,b).
\end{eqnarray}
 We discuss the value of $S(a,b)$ in the  cases below.
\begin{enumerate}
\item Let $a=b=0$. Then $S(a,b)=q$.
\item Let $a=0,b\neq 0$. Then $S(0,b)=\sum_{x\in \gf(q)}\chi(bx)=0$.
\item Let $a\neq 0,b=0$. Then $S(a,0)=\sum_{x\in \gf(q)}\chi(ax^3)$. By Lemma \ref{lem-p-polynomial} we have
\begin{eqnarray*}
S(a,0)^2&=&\sum_{x\in \gf(q)}\chi(ax^3)\sum_{x_1\in \gf(q)}\chi(ax_{1}^{3})\\
&=&\sum_{x\in \gf(q)}\chi(ax^3)\sum_{y\in \gf(q)}\chi(a(x+y)^{3})\\
&=&\sum_{x,y\in \gf(q)}\chi\left(a(x^2+y^2)(x+y)+ax^3\right)\\
&=&\sum_{y\in \gf(q)}\chi(ay^{3})\sum_{x\in \gf(q)}\chi(ayx^2+ay^2x)\\
&=&q+q\sum_{\myatop{y\in \gf(q)^*}{ay(1+ay^3)=0}}\chi(ay^{3})\\
&=&q+q\sum_{ay^3=1}\chi(1)=q-q=0,
\end{eqnarray*}
where we used the variable transformation $x_1=x+y$ in the second equality. Then $S(a,0)=0$.
\item Let $a,b\neq 0$. The value of $S(a,b)$ is given in Lemma \ref{lem-Sab}.
\end{enumerate}
By Equation (\ref{eqn-x3-1}) and the discussions above, we have
\begin{eqnarray}\label{eqn-3}
 N_0(a,b)=\left\{\begin{array}{ll}
 2^m & \mbox{ if }a=b=0,\\
 2^{m-1} & \myatop{\mbox{if $\tr_{q/2}(bc^{-1})=0,a\neq 0,$}}{\mbox{or $a=0,b\neq 0,$}}\\
2^{m-1}+(-1)^{\frac{m^2-1}{8}}2^{\frac{m-1}{2}} & \mbox{ if }bc^{-1}=t^4+t+1\mbox{ and }\tr_{q/2}(t^3+t)=0,\\
2^{m-1}-(-1)^{\frac{m^2-1}{8}}2^{\frac{m-1}{2}} & \mbox{ if }bc^{-1}=t^4+t+1\mbox{ and }\tr_{q/2}(t^3+t)=1,\\
\end{array} \right.
\end{eqnarray}
where $a=c^3$ if $a\neq 0$.

For any codeword $\bc_{(x^3,q)}^{(2)}=\left((\tr_{q/2}(ax^3+bx))_{x\in \gf(q)^*},\tr_{q/2}(a),\tr_{q/2}(b)\right)\in \C_{(x^3,q)}^{(2)}$, by Equation (\ref{eqn-3}) we deduce that
\begin{eqnarray*}
\lefteqn{ \wt(\bc(a,b)) } \\
&=&\left\{\begin{array}{ll}
 0 & \mbox{if }a=b=0,\\
 2^{m-1} & \mbox{if }a=0,\ b\neq 0,\ \tr_{q/2}(b)=0,\\
 2^{m-1}+1 & \mbox{if }a=0,\ b\neq 0,\ \tr_{q/2}(b)\neq0,\\
 2^{m-1} & \mbox{if $\tr_{q/2}(bc^{-1})=0,\ a\neq 0,\ \tr_{q/2}(a)=\tr_{q/2}(b)=0$},\\
 2^{m-1}+1 & \mbox{if $\tr_{q/2}(bc^{-1})=0,\ a\neq 0,\ \tr_{q/2}(a)=1,\ \tr_{q/2}(b)=0$},\\
 2^{m-1}+1 & \mbox{if $\tr_{q/2}(bc^{-1})=0,\ a\neq 0,\ \tr_{q/2}(a)=0,\ \tr_{q/2}(b)=1$},\\
 2^{m-1}+2 & \mbox{if $\tr_{q/2}(bc^{-1})=0,\ a\neq 0,\ \tr_{q/2}(a)=1,\ \tr_{q/2}(b)=1$},\\
 2^{m-1}+(-1)^{\frac{m^2-1}{8}}2^{\frac{m-1}{2}} & \substack{\mbox{if $bc^{-1}=t^4+t+1,\tr_{q/2}(t^3+t)=0,$}\\ \mbox{ and $\tr_{q/2}(a)=\tr_{q/2}(b)=0,$}}\\
2^{m-1}-(-1)^{\frac{m^2-1}{8}}2^{\frac{m-1}{2}} & \substack{\mbox{if $bc^{-1}=t^4+t+1,\tr_{q/2}(t^3+t)=1,$}\\ \mbox{ and $\tr_{q/2}(a)=\tr_{q/2}(b)=0,$}}\\
2^{m-1}+(-1)^{\frac{m^2-1}{8}}2^{\frac{m-1}{2}}+1 & \substack{\mbox{if $bc^{-1}=t^4+t+1, \tr_{q/2}(t^3+t)=0$ and exactly}\\ \mbox{ one of $\tr_{q/2}(a),\tr_{q/2}(b)$ equals 0},}\\
2^{m-1}-(-1)^{\frac{m^2-1}{8}}2^{\frac{m-1}{2}}+1 & \substack{\mbox{if $bc^{-1}=t^4+t+1, \tr_{q/2}(t^3+t)=1$ and exactly}\\ \mbox{ one of $\tr_{q/2}(a),\tr_{q/2}(b)$ equals 0,}}\\
 2^{m-1}+(-1)^{\frac{m^2-1}{8}}2^{\frac{m-1}{2}}+2 & \substack{\mbox{if $bc^{-1}=t^4+t+1,\tr_{q/2}(t^3+t)=0,$}\\ \mbox{ and $\tr_{q/2}(a)=\tr_{q/2}(b)=1,$}}\\
2^{m-1}-(-1)^{\frac{m^2-1}{8}}2^{\frac{m-1}{2}}+2 & \substack{\mbox{if $bc^{-1}=t^4+t+1,\tr_{q/2}(t^3+t)=1,$}\\ \mbox{ and $\tr_{q/2}(a)=\tr_{q/2}(b)=1,$}}\\
\end{array} \right.\\
\end{eqnarray*}
where $a=c^3$ if $a\neq 0$. The dimension is $2m$ as $\wt(\bc(a,b))=0$ if and only if $a=b=0$.  The minimal distance
$$d^{(2)}\geq \min\left\{2^{m-1}+(-1)^{\frac{m^2-1}{8}}2^{\frac{m-1}{2}},2^{m-1}-(-1)^{\frac{m^2-1}{8}}2^{\frac{m-1}{2}}\right\}=2^{m-1}-2^{\frac{m-1}{2}}.$$ Observe that
$$A_{2^{m-1}+(-1)^{\frac{m^2-1}{8}}2^{\frac{m-1}{2}}}=\sharp \left\{(a,b)\in \gf(q)^*\times \gf(q)^*:\substack{\mbox{$bc^{-1}=t^4+t+1,\tr_{q/2}(t^3+t)=0,$}\\ \mbox{ and $\tr_{q/2}(a)=\tr_{q/2}(b)=0$}}\right\}.$$
If $a=b^3$, then $b=c$ and $t=0,1$ implying  $\tr_{q/2}(t^3+t)=0$. Hence
\begin{eqnarray*}A_{2^{m-1}+(-1)^{\frac{m^2-1}{8}}2^{\frac{m-1}{2}}}
&\geq & \sharp \{b\in \gf(q)^*:\tr_{q/2}(b^3)=\tr_{q/2}(b)=0\}\\
&=&\frac{1}{4}\sum_{b\in \gf(q)^*}\sum_{y\in \gf(2)}(-1)^{y\tr_{q/2}(b^3)}\sum_{z\in \gf(2)}(-1)^{z\tr_{q/2}(b)}\\
&=&\frac{q-1}{4}+\frac{1}{4}\sum_{b\in \gf(q)^*}\chi(b^3)+\frac{1}{4}\sum_{b\in \gf(q)^*}\chi(b)+\frac{1}{4}\sum_{b\in \gf(q)^*}\chi(b^3+b)\\
&=&\frac{q-1}{4}+\frac{1}{4}(S(1,0)-1)-\frac{1}{4}+\frac{1}{4}(S(1,1)-1)\\
&=&2^{m-2}-1+(-1)^{\frac{m^2-1}{8}}2^{\frac{m-3}{2}}\\
&>& 0
\end{eqnarray*}
by Lemma \ref{lem-Sab}, where $m\geq 5$. Then the desired conclusion follows.
\end{proof}

In Theorem \ref{th-x3}, we were unable to obtain the minimal distance of $\C_{(x^3, q)}^{(2)}$ if $\frac{m^2-1}{8}$ is even. The weight distribution of $\C_{(x^3, q)}^{(2)}$ is even more difficult to compute.  Our Magma experiments lead to the following conjecture.

\begin{conj}
Let $m \geq 5$ be odd. Then the set $\C_{(x^3, q)}^{(2)}$ in Theorem \ref{th-x3} is a nine-weight code with parameters
$[2^m+1, 2m, 2^{m-1}-2^{(m-1)/2}]$. Its dual has parameters $[2^m+1, 2^m+1-2m, 3]$.
\end{conj}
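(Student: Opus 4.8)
The plan is to upgrade the weight analysis already begun in Theorem \ref{th-x3} into a full determination of the weight distribution, from which both the nine-weight claim and the dual parameters follow. Writing $W=2^{(m+1)/2}$, Theorem \ref{th-x3} and Lemma \ref{lem-Sab} show that for $(a,b)\neq(0,0)$ the exponential sum $S(a,b)$ takes only the three values $0,\pm W$, and that the weight of the corresponding codeword is $\wt(\bc(a,b))=2^{m-1}-\frac12 S(a,b)+\delta_a+\delta_b$, where $\delta_a=1$ if $\tr_{q/2}(a)\neq 0$ and $\delta_a=0$ otherwise (and similarly $\delta_b$). Since $S=+W$ gives base weight $2^{m-1}-2^{(m-1)/2}$ and $S=-W$ gives $2^{m-1}+2^{(m-1)/2}$, the nine candidate weights are $2^{m-1}+\varepsilon+\theta$ with $\varepsilon\in\{-2^{(m-1)/2},0,2^{(m-1)/2}\}$ and $\theta\in\{0,1,2\}$; these are pairwise distinct exactly when $2^{(m-1)/2}\geq 3$, i.e. when $m\geq 5$, which explains the hypothesis. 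Thus it suffices to compute, for each pair $(\varepsilon_1,\varepsilon_2)\in\gf(2)^2$, the joint frequencies of $(S(a,b),\tr_{q/2}(a),\tr_{q/2}(b))$.

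The key idea is to avoid the awkward direct count of the condition $bc^{-1}=t^4+t+1$ split by the traces, and instead recover these frequencies from moments. Fix $(\varepsilon_1,\varepsilon_2)$ and let $P,Q,R$ denote the number of pairs in this trace class with $S=W,-W,0$ respectively. Because $S$ takes only these three values, $P,Q,R$ satisfy the linear system $R+P+Q=T_0$, $W(P-Q)=T_1$, $W^2(P+Q)=T_2$, where $T_k(\varepsilon_1,\varepsilon_2)=\sum_{\tr_{q/2}(a)=\varepsilon_1,\,\tr_{q/2}(b)=\varepsilon_2}S(a,b)^k$; this $3\times 3$ system is invertible since its determinant is $2W^3\neq 0$. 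Each moment is computed by opening the trace indicators through $\mathbf{1}[\tr_{q/2}(a)=\varepsilon_1]=\frac12\sum_{z_1\in\gf(2)}(-1)^{z_1(\tr_{q/2}(a)+\varepsilon_1)}$ and similarly for $b$, which reduces $T_1$ and $T_2$ to the auxiliary sums $U(z_1,z_2)=\sum_{a,b}(-1)^{z_1\tr_{q/2}(a)+z_2\tr_{q/2}(b)}S(a,b)$ and $V(z_1,z_2)=\sum_{a,b}(-1)^{z_1\tr_{q/2}(a)+z_2\tr_{q/2}(b)}S(a,b)^2$. Interchanging summation and using orthogonality of additive characters collapses $U$ to counting $x$ with $x^3=z_1$ and $x=z_2$, and $V$ to counting $(x,y)$ with $x^3+y^3=z_1$ and $x+y=z_2$. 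Since $\gcd(3,q-1)=1$ and $x^2+x+1$ is irreducible over $\gf(2^m)$ for odd $m$, both counts are immediate, yielding closed forms such as $U(z_1,z_2)=q^2$ if $z_1=z_2$ and $0$ otherwise.

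Carrying this out, I would obtain $T_0=q^2/4$ for every class, $T_1=\frac{q^2}{2}$ or $0$ according as $\varepsilon_1=\varepsilon_2$ or not, and $T_2=\frac{q^3}{4}\pm\frac{q^2}{2}$, and then solve for $P,Q,R$ in each class. The only point requiring care is that $S(0,0)=q$ rather than $0$, so the pair $(0,0)$ (which lies in the class $\varepsilon_1=\varepsilon_2=0$) must be removed by subtracting $q$ from $T_1(0,0)$ and $q^2$ from $T_2(0,0)$ before solving; the remaining edge pairs $a=0,b\neq 0$ and $a\neq 0,b=0$ already satisfy $S=0$ and are automatically absorbed into $R$. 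Substituting the explicit values of $N_0(a,b)$ from \eqref{eqn-3} then assigns each of the twelve refined counts to one of the nine weights (the two classes with $\delta_a+\delta_b=1$ merging), producing the weight enumerator. Verifying that the frequency of the minimum weight $2^{m-1}-2^{(m-1)/2}$, which equals the count $P$ in the class $\varepsilon_1=\varepsilon_2=0$, is strictly positive for all odd $m\geq 5$ settles the minimum distance even in the case where $\frac{m^2-1}{8}$ is even, which is precisely the gap left open in Theorem \ref{th-x3}.

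Finally, for the dual I would argue exactly as in Theorems \ref{th-f=1} and \ref{th-x2}: the dimension is $2^m+1-2m$ since $\dim\C_{(x^3,q)}^{(2)}=2m$, Theorem \ref{th-dualdistance} gives $d^{(2)\perp}\geq 3$ because $\C_{(x^3,q)}^\perp$ has minimum distance $3$, and feeding the now-explicit weight distribution into the first four Pless power moments shows $A_3^{(2)\perp}>0$, whence $d^{(2)\perp}=3$. The main obstacle I anticipate is not any single hard character sum but the bookkeeping of the preceding paragraph: correctly isolating the $(0,0)$ outlier, tracking which of the twelve trace-refined counts collapse together, and checking that the nine resulting frequencies are positive integers consistent with the asserted weights. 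A secondary subtlety is that the numerical sign of $S(a,b)$, and hence which base weight is smallest, is governed by $(-1)^{(m^2-1)/8}$, so one must phrase the final enumerator in terms of the numerical values $\pm W$ rather than the parity-dependent description in Lemma \ref{lem-Sab}.
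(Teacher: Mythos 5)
You should first be aware that the paper does not prove this statement: it is stated explicitly as a conjecture, supported only by Magma experiments, and the authors invite the reader to settle it. So there is no ``paper's own proof'' to compare against; your proposal has to stand on its own, and on inspection it essentially does. Your route is also genuinely different from the technique the paper uses for the theorems it does prove (Lemmas \ref{lem-N1}, \ref{lem-N2-N6}, \ref{lem-N5-N12} count solutions of explicit trace conditions one case at a time): you instead exploit the fact that $S(a,b)$ is three-valued, so within each trace class $(\tr_{q/2}(a),\tr_{q/2}(b))=(\varepsilon_1,\varepsilon_2)$ the counts $P,Q,R$ of $S=W,-W,0$ are pinned down by the restricted power moments $T_0,T_1,T_2$. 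I checked the key computations: $U(z_1,z_2)=q^2$ iff $z_1=z_2$ and $0$ otherwise; $V(0,0)=q^3$, $V(1,1)=2q^2$ (from $x^3+(x+1)^3=x^2+x+1$ and the irreducibility of $x^2+x+1$ for odd $m$), $V(0,1)=V(1,0)=0$; hence $T_1=q^2/2$ or $0$ and $T_2=q^3/4\pm q^2/2$ as you claim. Solving (after subtracting the $(a,b)=(0,0)$ outlier from the class $(0,0)$) gives, e.g., $P(0,0)=2^{2m-4}-2^{m-3}+(2^{m-1}-1)2^{(m-3)/2}$ and $Q(0,0)=2^{2m-4}-2^{m-3}-(2^{m-1}-1)2^{(m-3)/2}$, and all twelve counts are strictly positive precisely when $m\ge 5$; the global consistency check $P_{\mathrm{tot}}=(2^m-1)(2^{m-2}+2^{(m-3)/2})$ also comes out right. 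This confirms the nine-weight claim, the minimum distance $2^{m-1}-2^{(m-1)/2}$ in both sign cases of $(-1)^{(m^2-1)/8}$ (your moment method neatly sidesteps that sign, which is exactly what blocked the paper), and hence the primal parameters. What buys you this simplification is that you never need to resolve the condition $bc^{-1}=t^4+t+1$ against the traces of $a$ and $b$ directly.

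The one piece that is still a plan rather than a proof is the dual distance: Theorem \ref{th-dualdistance} gives $d^{(2)\perp}\ge 3$ and the sphere-packing bound gives $d^{(2)\perp}\le 4$, but concluding $A_3^{(2)\perp}>0$ requires actually feeding your explicit frequencies into the first four Pless power moments and checking the resulting expression is nonzero for all odd $m\ge 5$. This is routine but not automatic, and it is the same step the paper only asserts in its proved theorems; you should carry the computation out rather than cite it by analogy. Subject to that, your proposal settles the conjecture.
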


\begin{example} Let $f(x)=x^{3}$ with $q=2^m$ and $m$ odd.
\begin{enumerate}
\item Let $m=3$. Then the set $\C_{(x^3, q)}^{(2)}$ in Theorem \ref{th-x3} is a $[9,6,2]$ binary code which has the best known parameters according to the Code Tables at http://www.codetables.de/.
\item Let $m=5$. Then the set $\C_{(x^3, q)}^{(2)}$ in Theorem \ref{th-x3} is a $[33,10,12]$  binary code which has the best known parameters according to the Code Tables at http://www.codetables.de/.
\item Let $m=7$. Then the set $\C_{(x^3, q)}^{(2)}$ in Theorem \ref{th-x3} is a $[129,14,56]$  binary code with the best  parameters known  according to the Code Tables at http://www.codetables.de/.
\end{enumerate}
\end{example}

\section{Families of $[2^m+1, 2, 2^m]$ MDS codes from oval polynomials and their subfield codes}

Let $p=2$ and $q=2^m$ throughout this subsection. We first define oval polynomials $f(x)$ on $\gf(q)$, and then
investigate their codes $\C_{(f, q)}$ and $\C_{(f, q)}^{(2)}$.

An oval polynomial $f$ over $\gf(q)$ is a polynomial such that
\begin{enumerate}
\item $f$ is a permutation polynomial of $\gf(q)$ with $\deg(f)<q$ and $f(0)=0$, $f(1)=1$;  and
\item for each $a \in \gf(q)$, $g_a(x):=(f(x+a)+f(a))x^{q-2}$ is also a permutation polynomial
      of $\gf(q)$.
\end{enumerate}

The following is a list of known infinite families of oval polynomials in the literature.

\begin{theorem}\label{thm-knownopolys}
Let $m \geq 2$ be an integer. The following are oval polynomials of $\gf(q)$, where $q=2^m$.
\begin{itemize}
\item The translation polynomial $f(x)=x^{2^h}$, where $\gcd(h, m)=1$.
\item The Segre polynomial $f(x)=x^6$, where $m$ is odd.
\item The Glynn oval polynomial $f(x)=x^{3 \times 2^{(m+1)/2} +4}$, where $m$ is odd.
\item The Glynn oval polynomial $f(x)=x^{ 2^{(m+1)/2} + 2^{(m+1)/4} }$ for $m \equiv 3 \pmod{4}$.
\item The Glynn oval polynomial $f(x)=x^{ 2^{(m+1)/2} + 2^{(3m+1)/4} }$ for $m \equiv 1 \pmod{4}$.
\item The Cherowitzo oval polynomial $f(x)=x^{2^e}+x^{2^e+2}+x^{3 \times 2^e+4},$ where $e=(m+1)/2$ and $m$ is odd.
\item The Payne oval polynomial $f(x)=x^{\frac{2^{m-1}+2}{3}} + x^{2^{m-1}} + x^{\frac{3 \times 2^{m-1}-2}{3}}$,
        where $m$ is odd.
\item The Subiaco polynomial
$$
f_a(x)=((a^2(x^4+x)+a^2(1+a+a^2)(x^3+x^2)) (x^4 + a^2 x^2+1)^{2^m-2}+x^{2^{m-1}},
$$
where $\tr_{q/2}(1/a)=1$ and $d \not\in \gf(4)$ if $m \equiv 2 \bmod{4}$.
\item The Adelaide oval polynomial
$$
f(x)=\frac{T(\beta^m)(x+1)}{T(\beta)} + \frac{T((\beta x + \beta^q)^m)}{T(\beta) (x+T(\beta)x^{2^{m-1}} +1)^{m-1}} + x^{2^{m-1}},
$$
where $m \geq 4$ is even, $\beta \in \gf(q^2) \setminus \{1\}$ with $\beta^{q+1}=1$, $m \equiv \pm (q-1)/3 \pmod{q+1}$,
and $T(x)=x+x^q$.
\end{itemize}
\end{theorem}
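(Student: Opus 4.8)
The statement is a compendium of results about hyperovals in $\PG(2,\gf(2^m))$, so the plan is to verify, family by family, the two defining conditions of an oval polynomial, handling the monomial cases directly and citing the original constructions for the harder polynomial families. The first condition---that $f$ is a permutation polynomial of $\gf(q)$ with $\deg(f)<q$, $f(0)=0$ and $f(1)=1$---is routine. For a monomial $f(x)=x^k$ the normalization $f(0)=0$, $f(1)=1$ holds automatically, and $x^k$ permutes $\gf(q)$ if and only if $\gcd(k,q-1)=1$; for the Segre exponent $k=6$ with $m$ odd one has $\gcd(6,2^m-1)=\gcd(3,2^m-1)=1$, and for each Glynn exponent the same coprimality check settles condition~1. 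For the Cherowitzo, Payne, Subiaco and Adelaide families the normalization and permutation property are part of the original constructions and I would take them from there.

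The substantive requirement is condition~2, namely that $g_a(x)=(f(x+a)+f(a))x^{q-2}$ permutes $\gf(q)$ for every $a\in\gf(q)$. The cleanest case is the translation polynomial: since squaring is additive over $\gf(2^m)$, for $f(x)=x^{2^h}$ one has $f(x+a)+f(a)=(x+a)^{2^h}+a^{2^h}=x^{2^h}$, so that $g_a(x)=x^{2^h}x^{q-2}=x^{2^h-1}$ on $\gf(q)^*$ and $g_a(0)=0$. This is a permutation precisely when $\gcd(2^h-1,\,2^m-1)=2^{\gcd(h,m)}-1=1$, i.e.\ $\gcd(h,m)=1$, which gives the first item outright. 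The Segre case $f(x)=x^6$ reduces, after the substitution $t=x+a$, to checking that the rational map $t\mapsto (f(t)+f(a))/(t+a)$ is a bijection for each $a$; this is classical but already requires a genuine computation with the resulting low-degree equation over $\gf(2^m)$ with $m$ odd.

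The conceptual backbone I would use for the remaining families is the standard equivalence: condition~2 says exactly that for each $a$ the secant slopes $(f(t)+f(a))/(t+a)$ are pairwise distinct, which is the \emph{no-three-collinear} property for the point set $\{(1,t,f(t)):t\in\gf(q)\}\cup\{(0,1,0),(0,0,1)\}$, i.e.\ that this set is a hyperoval of $\PG(2,\gf(q))$. For the Glynn, Cherowitzo and Payne polynomials I would invoke the discovery papers, where this collinearity condition is verified through delicate trace-condition arguments; for the Subiaco and Adelaide polynomials the verification rests on evaluations of Gauss and Kloosterman sums over $\gf(q)$ or $\gf(q^2)$, together with the constraints $\tr_{q/2}(1/a)=1$ and $\beta^{q+1}=1$, $m\equiv\pm(q-1)/3\pmod{q+1}$ stated in the theorem.

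The main obstacle is precisely this last step. The permutation property of $g_a$ for the Subiaco and Adelaide families is genuinely deep and does not yield to the short algebraic reductions that dispose of the translation and Segre monomials; reproducing it would amount to reproving the original hyperoval constructions. Consequently my proof would be a direct verification for the translation polynomial, a short computational argument for the Segre and Glynn monomials, and references to the original sources for the Cherowitzo, Payne, Subiaco and Adelaide polynomials, so that the overall proof is a hybrid of self-contained checks in the easy monomial cases and citations in the hard polynomial cases.
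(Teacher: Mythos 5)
Your proposal is sound, but it is worth noting that the paper itself offers no proof of this theorem at all: it is introduced with the sentence ``The following is a list of known infinite families of oval polynomials in the literature'' and is simply a compendium of results due to Segre, Glynn, Cherowitzo, Payne, and the Subiaco/Adelaide constructions. So your hybrid strategy --- self-contained verification where feasible, citation of the discovery papers elsewhere --- is in fact more than the paper provides, and it is the right way to treat such a statement. Your direct check for the translation polynomial is correct: additivity of the Frobenius gives $f(x+a)+f(a)=x^{2^h}$, so $g_a(x)=x^{2^h-1}$ on $\gf(q)^*$, and $\gcd(2^h-1,2^m-1)=2^{\gcd(h,m)}-1$ yields the condition $\gcd(h,m)=1$. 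Your coprimality checks for the Segre and Glynn exponents are also the standard way to dispose of condition 1, and your identification of condition 2 with the pairwise-distinct-secant-slope (no-three-collinear) property is exactly the characterisation the paper records separately as the slope condition in Theorem \ref{thm-J22220}. The only caveat is that for the Segre and Glynn monomials your proof is still a sketch --- the ``genuine computation'' you allude to is the hard content of those cases and would have to be carried out or cited just as for the Cherowitzo, Payne, Subiaco and Adelaide families --- but since the paper cites all of these without proof, deferring to the original sources is entirely consistent with its treatment.
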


The next theorem gives a characterisation of oval polynomials, where the conditions are called
the slope condition, 
and will be needed later.

\begin{theorem}\label{thm-J22220}
$f$ is an oval polynomial over $\gf(q)$ if and only if
\begin{enumerate}
\item $f$ is a permutation of $\gf(q)$; and
\item
$$
\frac{f(x)+f(y)}{x+y} \neq \frac{f(x)+f(z)}{x+z}
$$
for all pairwise-distinct $x, y, z$ in $\gf(q)$.
\end{enumerate}
\end{theorem}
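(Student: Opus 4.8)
The plan is to exploit the identity $x^{q-2}=x^{-1}$ on $\gf(q)^{*}$ together with $0^{q-2}=0$ in order to rewrite the auxiliary polynomials $g_a$ as difference quotients of $f$, thereby translating the requirement ``every $g_a$ is a permutation'' into the slope condition. Throughout one works in characteristic $2$, so $+$ and $-$ coincide. Condition 1 (that $f$ is a permutation of $\gf(q)$) is common to the definition and to the claimed characterisation, so the substance of the argument is the equivalence between condition 2 of the definition of an oval polynomial and condition 2 of the theorem, carried out under the standing assumption that $f$ is a permutation.

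First I would record that for every $a$ one has $g_a(0)=(f(a)+f(a))\,0^{q-2}=0$, while for $x\in\gf(q)^{*}$,
\[
g_a(x)=\frac{f(x+a)+f(a)}{x}.
\]
Since $f$ is a permutation, the numerator $f(x+a)+f(a)$ vanishes only when $x+a=a$, i.e. $x=0$; hence $g_a$ fixes $0$ and maps $\gf(q)^{*}$ into $\gf(q)^{*}$. Because $\gf(q)$ is finite, $g_a$ is a permutation polynomial if and only if it is injective on $\gf(q)^{*}$. Establishing this nonvanishing before the injectivity analysis is essential, as it is what legitimises the passage from ``permutation polynomial'' to ``injective function.'' Next I would unwind the injectivity: for distinct $x_1,x_2\in\gf(q)^{*}$, the inequality $g_a(x_1)\ne g_a(x_2)$ reads
\[
\frac{f(x_1+a)+f(a)}{x_1}\ne\frac{f(x_2+a)+f(a)}{x_2}.
\]
Substituting $y=x_1+a$ and $z=x_2+a$ (so $x_1=y+a$, $x_2=z+a$), the constraints $x_1\ne 0$, $x_2\ne 0$, $x_1\ne x_2$ become precisely $y\ne a$, $z\ne a$, $y\ne z$, i.e. $a,y,z$ pairwise distinct, and the inequality becomes
\[
\frac{f(y)+f(a)}{y+a}\ne\frac{f(z)+f(a)}{z+a}.
\]
Imposing this for every $a$ and all admissible $x_1,x_2$ is exactly condition 2 of the theorem after renaming, which yields both implications at once.

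The computation is elementary, so there is no deep analytic obstacle; the delicate point is purely logical bookkeeping. One must verify that as $a$ ranges over all of $\gf(q)$ and $(x_1,x_2)$ over all distinct nonzero pairs, the triples $(a,y,z)=(a,x_1+a,x_2+a)$ sweep out exactly the pairwise-distinct triples, with no omissions and no double counting, and that the universal quantifiers are preserved under the substitution (which is a bijection for each fixed $a$). A subtlety to flag explicitly is the normalisation: the definition also demands $f(0)=0$, $f(1)=1$ and $\deg f<q$, whereas conditions 1--2 of the theorem are invariant under replacing $f$ by $\lambda f(\mu x+\nu)+\tau$ with $\lambda,\mu\in\gf(q)^{*}$; these normalisations are therefore conventions fixing a representative of $f$ as a function (reduction modulo $x^{q}-x$ handles the degree) and can be arranged by such affine changes without affecting the permutation property or the slope condition, so I would read them as part of the standing conventions rather than as extra content. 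Finally, if a geometric formulation is preferred, I would remark that the slope condition is equivalent to the nonvanishing of the $3\times 3$ determinant with rows $(1,t_i,f(t_i))$, i.e. to the points $(1:t:f(t))$ being in general position, recovering the classical ``no three points of the oval are collinear'' picture alluded to earlier in the paper.
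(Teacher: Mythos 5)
Your argument is correct: the paper actually states Theorem \ref{thm-J22220} without proof (it is quoted as the known ``slope condition'' characterisation of oval polynomials), and your computation is the standard derivation --- rewriting $g_a(x)=(f(x+a)+f(a))x^{q-2}$ as the difference quotient $\frac{f(x+a)+f(a)}{x}$ on $\gf(q)^*$, checking it fixes $0$ and is nonvanishing on $\gf(q)^*$ (which is the step that reduces ``permutation'' to ``injective on $\gf(q)^*$''), and then translating injectivity via $y=x_1+a$, $z=x_2+a$ into the slope condition over pairwise-distinct triples. Your remark that the normalisations $f(0)=0$, $f(1)=1$, $\deg f<q$ must be read as conventions (arrangeable by an affine change that preserves both the permutation property and the slope condition) is the right way to reconcile the definition with the theorem as literally stated.
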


All oval polynomials on $\gf(q)$ can be used to construct $[q+1, 2, q]$ MDS code over $\gf(q)$ in the framework
of this paper. Specifically, we have the following result.

\begin{theorem}
Let $f$ be an oval polynomial over $\gf(q)$. Then $\C_{(f,q)}$ is a $[q+1, 2, q]$ MDS code over $\gf(q)$.
\end{theorem}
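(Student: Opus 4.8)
The plan is to reduce the claim to the two numerical conditions of Theorem~\ref{thm-mainpolycode} and then discharge each one using the defining properties of an oval polynomial, in the convenient slope form supplied by Theorem~\ref{thm-J22220}. By Theorem~\ref{thm-mainpolycode}, it suffices to verify that (1) $f(x)\neq 0$ for all $x\in\gf(q)^*$, and (2) $yf(x)-xf(y)\neq 0$ for all distinct $x,y\in\gf(q)^*$. Throughout I will use that $q=2^m$ has characteristic $2$, so that subtraction and addition coincide and all sign bookkeeping disappears.

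First I would dispose of Condition~(1). Since an oval polynomial is by definition a permutation of $\gf(q)$ with $f(0)=0$, the only preimage of $0$ is $0$ itself; hence $f(x)\neq 0$ for every $x\in\gf(q)^*$. This step is immediate and needs nothing beyond the permutation property.

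The substance lies in Condition~(2). I would rewrite $yf(x)-xf(y)$ as $yf(x)+xf(y)$ (characteristic $2$) and observe that, for nonzero $x,y$, the vanishing $yf(x)+xf(y)=0$ is equivalent to the slope equality $f(x)/x=f(y)/y$. The key move is to read both sides as slopes measured from the origin: because $f(0)=0$, we have $f(x)/x=(f(x)+f(0))/(x+0)$ and likewise for $y$. Since $x,y$ are distinct and nonzero, the three points $0,x,y$ are pairwise distinct, so applying the slope characterisation of Theorem~\ref{thm-J22220} to them --- taking $0$ as the common point --- gives exactly $(f(0)+f(x))/(0+x)\neq(f(0)+f(y))/(0+y)$, i.e.\ $f(x)/x\neq f(y)/y$. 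This contradicts the assumed equality, so Condition~(2) holds.

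The only place requiring care --- and the step I would flag as the main obstacle --- is matching the variables in Theorem~\ref{thm-J22220} correctly: its inequality is stated with a distinguished pivot variable appearing on both sides, so I must specialise that pivot to $0$ (not to $x$ or $y$) in order to turn the ``no three collinear'' condition into the origin-slope inequality that Condition~(2) demands. Once this identification is made, both conditions of Theorem~\ref{thm-mainpolycode} are satisfied, and the conclusion that $\C_{(f,q)}$ is a $[q+1,2,q]$ MDS code follows at once. One could alternatively bypass Theorem~\ref{thm-J22220} and argue directly from the definition via the permutation polynomial $g_a$, but the slope form makes Condition~(2) transparent and is the route I would take.
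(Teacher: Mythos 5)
Your proof is correct and follows essentially the same route as the paper: Condition 1) from the permutation property together with $f(0)=0$, and Condition 2) from the slope characterisation in Theorem~\ref{thm-J22220}. The paper merely asserts that Condition 2) ``follows from Theorem~\ref{thm-J22220}''; your specialisation of the pivot to $0$ is exactly the intended (and correct) way to fill in that step.
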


\begin{proof}
By definition, $f(a) \neq 0$ for all $a \in \gf(q)^*$. So Condition 1) in Theorem \ref{thm-mainpolycode} is satisfied.
 Condition 2) in Theorem \ref{thm-mainpolycode} follows from Theorem \ref{thm-J22220}. The desired conclusion
 then follows from Theorem \ref{thm-mainpolycode}.
\end{proof}

The subfield code $\C_{(f,q)}^{(2)}$ differs from oval polynomial to oval polynomial. We are able to
settle the parameters of  subfield code $\C_{(f,q)}^{(2)}$ for a few oval polynomials.

By Equation (\ref{matrixG}) and Theorem \ref{th-tracerepresentation}, the trace representation of $\C_{(f,q)}^{(2)}$ is given as
\begin{eqnarray}\label{eqn-trace}
\C_{(f,q)}^{(2)}=\left\{\bc_{(f,q)}^{(2)}=\left(\left(\tr_{q/2}(af(x)+bx)\right)_{x\in \gf(q)^*}, \tr_{q/2}(a), \tr_{q/2}(b)\right):a,b\in \gf(q)\right\}.
\end{eqnarray}

\subsection{The subfield code $\C_{(f,q)}^{(2)}$ for $f(x)=x^2$}
In this subsection, let $f(x)=x^2$ which is an oval polynomial over $\gf(q)$. Then
\begin{eqnarray*}
\C_{(x^2,q)}^{(2)}=\left\{\bc_{(x^2,q)}^{(2)}=\left(\left(\tr_{q/2}(ax^2+bx)\right)_{x\in \gf(q)^*}, \tr_{q/2}(a), \tr_{q/2}(b)\right):a,b\in \gf(q)\right\}
\end{eqnarray*}
by Equation (\ref{eqn-trace}).

\begin{theorem}\label{th-ovalcode1}
Let $m \geq 2$. Then $\C_{(x^2,q)}^{(2)}$ has parameters $[2^m+1, m+1, 2]$ and weight enumerator
$$1+z^2+(2^{m-1}-1)z^{2^{m-1}}+2^mz^{2^{m-1}+1}+(2^{m-1}-1)z^{2^{m-1}+2}.$$
 $(\C_{(x^2,q)}^{(2)})^\perp$ has parameters $[2^m+1, 2^m-m, 3]$ and is dimension-optimal with respect to the sphere-packing bound.
\end{theorem}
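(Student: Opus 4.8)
The plan is to exploit the fact that over $\gf(q)$ with $q=2^m$ squaring is linear at the level of the trace. Since $\tr_{q/2}(y)=\tr_{q/2}(y^2)$ for every $y\in\gf(q)$, we have $\tr_{q/2}(ax^2)=\tr_{q/2}(a^{2^{m-1}}x)$, so that for any codeword
$$\bc_{(x^2,q)}^{(2)}=\left((\tr_{q/2}(ax^2+bx))_{x\in\gf(q)^*},\tr_{q/2}(a),\tr_{q/2}(b)\right)$$
the first $q-1$ coordinates collapse to the single linear form $\tr_{q/2}(cx)$ with $c=a^{2^{m-1}}+b\in\gf(q)$. This is the key simplification: it replaces the quadratic Weil-sum analysis of Theorem \ref{th-x2} by an elementary count for a linear functional.

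First I would compute $\sharp\{x\in\gf(q):\tr_{q/2}(cx)=0\}$, which equals $q$ if $c=0$ and $q/2$ otherwise; subtracting the contribution of $x=0$ shows that the first $q-1$ coordinates vanish identically when $c=0$ and contain exactly $2^{m-1}$ nonzero entries when $c\neq 0$. The two remaining coordinates are $\tr_{q/2}(a)$ and $\tr_{q/2}(b)$, and writing $b=c+a^{2^{m-1}}$ gives $\tr_{q/2}(b)=\tr_{q/2}(c)+\tr_{q/2}(a)$.

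The next step is the bookkeeping, which I expect to be the main obstacle: the assignment $(a,b)\mapsto\bc_{(x^2,q)}^{(2)}$ is $2^{m-1}$-to-one, so the weight frequencies cannot simply be read off by counting pairs $(a,b)$. I would reparametrize the codewords by $(c,\varepsilon)\in\gf(q)\times\gf(2)$ with $\varepsilon=\tr_{q/2}(a)$, check that distinct pairs give distinct codewords (the first block determines $c$ and the penultimate coordinate determines $\varepsilon$, while the last coordinate is $\tr_{q/2}(c)+\varepsilon$), and conclude that there are exactly $2^{m+1}$ codewords, so $\dim\C_{(x^2,q)}^{(2)}=m+1$. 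Splitting into the cases $c=0$; $c\neq 0$ with $\tr_{q/2}(c)=0$; and $c\neq 0$ with $\tr_{q/2}(c)=1$, and using $\sharp\{c\neq0:\tr_{q/2}(c)=0\}=2^{m-1}-1$, then yields the five weights $0,\,2,\,2^{m-1},\,2^{m-1}+1,\,2^{m-1}+2$ with frequencies $1,\,1,\,2^{m-1}-1,\,2^m,\,2^{m-1}-1$, whose sum is $2^{m+1}$; in particular the minimum distance is $2$.

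Finally, for the dual I would argue directly rather than through power moments. Theorem \ref{th-dualdistance} gives $d^{(2)\perp}\geq 3$, since the dual of the MDS code $\C_{(x^2,q)}$ has minimum distance $3$. A binary vector $((v_x)_{x\in\gf(q)^*},v_{n-1},v_n)$ lies in the dual iff $\sum_x v_x x^2=v_{n-1}$ and $\sum_x v_x x=v_n$ in $\gf(q)$; because $(\sum_x v_x x)^2=\sum_x v_x x^2$ over $\gf(2)$, both conditions reduce to $S:=\sum_x v_x x\in\gf(2)$ together with $v_{n-1}=v_n=S$. This presents the dual as a code of dimension $(q-1)-(m-1)=2^m-m$, and any three distinct nonzero elements of $\gf(q)$ summing to $0$ (which exist for $m\geq 2$) give a weight-$3$ dual codeword, so $d^{(2)\perp}=3$ and the dual has parameters $[2^m+1,2^m-m,3]$. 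Dimension-optimality then follows from the sphere-packing bound: a binary $[2^m+1,2^m-m+1,3]$ code would require $2^{\,2^m-m+1}(2^m+2)\leq 2^{\,2^m+1}$, i.e.\ $2^m+2\leq 2^m$, which is false.
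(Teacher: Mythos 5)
Your proof is correct, and it takes a genuinely different route from the paper's at each of the three stages. For the weights, the paper invokes the Weil-sum evaluation for quadratic polynomials in even characteristic (Lemma \ref{lem-charactersum-evenq}) to compute $N_0(a,b)=\sharp\{x:\tr_{q/2}(ax^2+bx)=0\}$, obtaining the dichotomy $a=b^2$ versus $a\neq b^2$; you instead use $\tr_{q/2}(y)=\tr_{q/2}(y^2)$ to linearize the exponent, collapsing the first block to $(\tr_{q/2}(cx))_{x}$ with $c=a^{2^{m-1}}+b$, so only the orthogonality relation for additive characters is needed (your case $c=0$ is exactly the paper's $a=b^2$). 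For the multiplicity bookkeeping, the paper observes that the zero codeword arises from $2^{m-1}$ pairs $(a,b)$ and hence every codeword repeats $2^{m-1}$ times; you reparametrize the codewords injectively by $(c,\varepsilon)\in\gf(q)\times\gf(2)$, which makes the frequency count and the dimension $m+1$ immediate. For the dual, the paper deduces $A_3^{(2)\perp}>0$ from the first four Pless power moments, whereas you write down the parity-check conditions $\sum_x v_xx^2=v_{n-1}$, $\sum_x v_xx=v_n$, reduce them (via $(\sum_x v_xx)^2=\sum_x v_xx^2$ over $\gf(2)$) to $v_{n-1}=v_n=S\in\gf(2)$, and exhibit a weight-$3$ codeword from any triple $x_1+x_2+x_3=0$ of distinct nonzero elements; this also reconfirms the dual dimension independently. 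Your argument is more elementary and self-contained, at the cost of being special to this particular $f$; the paper's template (Weil sums plus power moments) is the one that generalizes to the other subfield codes treated there. The sphere-packing verification at the end matches the paper's claim.
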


\begin{proof}
Let $\chi$ be the canonical additive character of $\gf(q)$.
Denote
$$N_0(a,b)=\sharp \{x\in \gf(q):\tr_{q/2}(ax^2+bx)=0\}.$$ By the orthogonality relation of additive characters and Lemma \ref{lem-charactersum-evenq}, we have
\begin{eqnarray*}\label{eqn-1a}
\nonumber 2N_0(a,b)&=&\sum_{z\in \gf(2)}\sum_{x\in \gf(q)}(-1)^{z\tr_{q/2}(ax^2+bx)}\\
\nonumber&=&q+\sum_{x\in \gf(q)}\chi(ax^2+bx)\\
&=&\left\{\begin{array}{ll}
2q   &   \mbox{ if }a=b^{2},\\
q    &   \mbox{ otherwise. }\\
\end{array} \right.
\end{eqnarray*}
Note that $\tr_{q/2}(b^2)=\tr_{q/2}(b)$. For any codeword $$\bc_{(x^2,q)}^{(2)}=\left((\tr_{q/2}(ax^2+bx))_{x\in \gf(q)},\tr_{q/2}(a),\tr_{q/2}(b)\right)\in \C_{(x^2,q)}^{(2)},$$
Then we have
\begin{eqnarray*}
\wt(\bc(a,b))&=&
\left\{\begin{array}{ll}
q-N_0(a,b)   &   \mbox{ for }a=b^{2},\ \tr_{q/2}(a)=\tr_{q/2}(b)=0 \\
q-N_0(a,b)+2    &   \mbox{ for }a=b^{2},\ \tr_{q/2}(a)=\tr_{q/2}(b)\neq0 \\
q-N_0(a,b)   &   \mbox{ for }a\neq b^{2},\ \tr_{q/2}(a)=\tr_{q/2}(b)=0 \\
q-N_0(a,b)+1 & \myatop{\mbox{ for $a\neq b^{2},\ \tr_{q/2}(a)=0,\ \tr_{q/2}(b)\neq0,$}}{\mbox{ or $a\neq b^{2},\ \tr_{q/2}(a)\neq 0,\ \tr_{q/2}(b)=0$}} \\
q-N_0(a,b)+2    &   \mbox{ for }a\neq b^{2},\ \tr_{q/2}(a)\neq 0,\ \tr_{q/2}(b)\neq0 \\
\end{array} \right.\\
&=&\left\{\begin{array}{ll}
0 &\mbox{ for }a=b^{2},\ \tr_{q/2}(a)=\tr_{q/2}(b)=0,\\
2 &\mbox{ for }a=b^{2},\ \tr_{q/2}(a)=\tr_{q/2}(b)\neq0,\\
2^{m-1}  &\mbox{ for }a\neq b^{2},\ \tr_{q/2}(a)=\tr_{q/2}(b)=0,\\
2^{m-1}+1 &\myatop{\mbox{ for $a\neq b^{2},\ \tr_{q/2}(a)=0,\ \tr_{q/2}(b)\neq0,$}}{\mbox{ or $a\neq b^{2},\ \tr_{q/2}(a)\neq 0,\ \tr_{q/2}(b)=0,$}}\\
2^{m-1}+2 &\mbox{ for }a\neq b^{2},\ \tr_{q/2}(a)\neq 0,\ \tr_{q/2}(b)\neq0.
\end{array} \right.\\
\end{eqnarray*}
Observe that the Hamming weight 0 occurs $2^{m-1}$ times if $(a,b)$ runs through $\gf(q)\times \gf(q)$. Thus every codeword in $\C_{(x^2,q)}^{(2)}$ repeats $2^{m-1}$ times. Based on the discussions above, we easily deduce the weight enumerator of $\C_{(x^2,q)}^{(2)}$.

By Theorem \ref{th-dualdistance}, the minimal distance $d^{(p)\perp}$ of $\C_{(x^2,q)}^{(2)\perp}$ satisfies $d^{(2)\perp}\geq 3$ as the dual of $\C_{(x^2, q)}$ has minimal distance 3.  From the weight distribution of $\C_{(x^2,q)}^{(2)}$ and the first four Pless power moments in \cite[Page 131]{HP}, we can prove that $A_3^{(2)\perp}>0$, where $A_3^{(2)\perp}$ denotes the number of the codewords with weight 3 in $\C_{(x^2,q)}^{(2)\perp}$. Then the parameters of $\C_{(x^2,q)}^{(2)\perp}$  follow.  It is easily verified that
$\C_{(x^2,q)}^{(2)\perp}$ is dimension-optimal with respect to the sphere-packing bound.
\end{proof}

Although the code $\C_{(x^2,q)}^{(2)}$ in Theorem \ref{th-ovalcode1} has bad parameters, its dual code
$(\C_{(x^2,q)}^{(2)})^\perp$ is dimension-optimal. Hence, it is still valuable to study the subfield code $\C_{(x^2,q)}^{(2)}$.

\begin{example}\label{exa-1}
Let $m=2$. Then the set $\C_{(x^2,q)}^{(2)}$ in Theorem \ref{th-ovalcode1} is a $[5,3,2]$ binary linear code and its dual has parameters $[5,2,3]$. Hence $\C_{(x^2,q)}^{(2)}$ is a near MDS code in this case. Both of $\C_{(x^2,q)}^{(2)}$ and its dual has the best known parameters according to the Code Tables at http://www.codetables.de.
\end{example}

\subsection{The subfield code $\C_{(f,q)}^{(2)}$ for $f(x)=x^{2^i+2^j}$ $(i>j\geq 0)$}
Let $f(x)=x^{2^i+2^j}$ $(i>j\geq 0)$. By Theorem \ref{thm-knownopolys}, $f(x)$ is an oval polynomial in the following cases:
\begin{enumerate}
\item $(i,j)=(2,1)$ and $m$ is odd;
\item $(i,j)=((m+1)/2,(m+1)/4)$ and $m \equiv 3 \pmod{4}$;
\item $(i,j)=((3m+1)/4,(m+1)/2 )$ and $m \equiv 1 \pmod{4}$.
\end{enumerate}
 The trace representation of $\C_{(f,q)}^{(2)}$ is given as
\begin{eqnarray*}
\C_{(f,q)}^{(2)}=\left\{\bc_{(f,q)}^{(2)}=\left(\left(\tr_{q/2}(ax^{2^i+2^j}+bx)\right)_{x\in \gf(q)^*},\tr_{q/2}(a),\tr_{q/2}(b)\right):a,b\in \gf(q)\right\}.
\end{eqnarray*}
By \cite[Lemma 5]{WZ}, it is easy to deduce that $\C_{(f,q)}^{(2)}$ is a $[2^m+1, 2m, d^{(2)}\geq 2^{m-1}-2^{(m-1)/2}]$ code if $f(x)$ is one of the above three oval polynomials. However, we were unable to determine its minimal distance and weight distribution. We have the following conjectures according to our Magma experiments.
\begin{conj}
Let $m \geq 5$ be odd. Then $\C_{(x^6,q)}^{(2)}$ has parameters $[2^m+1, 2m, 2^{m-1}-2^{(m-1)/2}]$ and nine nonzero weights.
 $(\C_{(x^6,q)}^{(2)})^\perp$ has parameters $[2^m+1, 2^m-2m+1, 3]$.
\end{conj}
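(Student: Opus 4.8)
The plan is to avoid a fresh Weil-sum computation and instead reduce the statement to the cubic case already analysed in Theorem~\ref{th-x3}, by exhibiting a permutation equivalence between $\C_{(x^6,q)}^{(2)}$ and $\C_{(x^3,q)}^{(2)}$. The point is that over $\gf(q)$ with $q=2^m$ the Frobenius map $z\mapsto z^2$ is a bijection of $\gf(q)^*$, and for any $b\in\gf(q)$ one has $\tr_{q/2}(bz^2)=\tr_{q/2}((bz^2)^{2^{m-1}})=\tr_{q/2}(b^{2^{m-1}}z)$ because $z^{2^m}=z$. First I would reindex the first $q-1$ coordinates of a codeword of $\C_{(x^3,q)}^{(2)}$ through $x=z^2$, which sends the entry $\tr_{q/2}(ax^3+bx)$ at position $x$ to $\tr_{q/2}(az^6+bz^2)=\tr_{q/2}(az^6+b^{2^{m-1}}z)$ at position $z$, while fixing the two trailing coordinates $\tr_{q/2}(a)$ and $\tr_{q/2}(b)$; since $\tr_{q/2}(b^{2^{m-1}})=\tr_{q/2}(b)$, this is exactly the codeword of $\C_{(x^6,q)}^{(2)}$ indexed by $(a,b^{2^{m-1}})$. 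As $(a,b)$ runs over $\gf(q)^2$ so does $(a,b^{2^{m-1}})$, hence this fixed permutation of the $2^m+1$ coordinates carries $\C_{(x^3,q)}^{(2)}$ onto $\C_{(x^6,q)}^{(2)}$.

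Once the permutation equivalence is in place, the two codes have identical weight distributions and their duals are again permutation equivalent (as recorded in Section~\ref{sec-subfieldcodes}). Consequently every assertion already proved for the cubic case in Theorem~\ref{th-x3} transfers verbatim: the dimension is $2m$, so $\dim(\C_{(x^6,q)}^{(2)})^\perp=2^m+1-2m$, and the minimum distance equals $2^{m-1}-2^{(m-1)/2}$ whenever $\frac{m^2-1}{8}$ is odd, with the bound $d^{(2)}\geq 2^{m-1}-2^{(m-1)/2}$ in general. In particular the nine-weight claim for $\C_{(x^6,q)}^{(2)}$ is logically equivalent to the (still open) nine-weight claim for $\C_{(x^3,q)}^{(2)}$, so the two conjectures stand or fall together.

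For the dual parameters I would argue directly rather than through the full weight distribution. By Theorem~\ref{th-dualdistance} together with the fact that $\C_{(x^6,q)}$ has dual distance $3$ (Theorem~\ref{thm-mainpolycode}), one gets $d^{(2)\perp}\geq 3$; it then remains to show $A_3^{(2)\perp}>0$. I would extract this from the first few Pless power moments of $\C_{(x^6,q)}^{(2)}$, observing that these require only the low-order moments $\sum_{a,b}S(a,b)^j$ of the Weil sum $S(a,b)=\sum_{x}\chi(ax^6+bx)$ for $j\leq 3$, possibly twisted by the indicators $[\tr_{q/2}(a)=0]=\tfrac12(1+\chi(a))$ and $[\tr_{q/2}(b)=0]$. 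Each such twisted moment collapses, after summing out $a$ and $b$, to a point count of a low-degree system; for instance $\sum_{a,b}S(a,b)^3=q^2\,\#\{(x,y):x=0\text{ or }y=0\text{ or }x=y\}$ since $x^6+y^6+(x+y)^6=x^2y^2(x+y)^2$, which is elementary to evaluate. Hence $A_3^{(2)\perp}>0$ and $d^{(2)\perp}=3$ should follow without knowing the individual $A_i$.

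The hard part, and the reason the statement is only conjectural, is the exact nine-weight distribution together with the minimum distance in the remaining case $\frac{m^2-1}{8}$ even. Using the substitution above one reduces $S(a,b)$ to the cubic sum, $S(a,b)=\sum_{x}\chi(ax^3+(b^2)x)$, which is three-valued by Lemma~\ref{lem-Sab}; writing the weight as $2^{m-1}-\tfrac12 S(a,b)+[\tr_{q/2}(a)\neq0]+[\tr_{q/2}(b)\neq0]$ and noting $2^{(m-1)/2}>2$ for $m\geq5$ shows the nine candidate weights are distinct. What is missing is the joint distribution of the three-valued sum $S(a,b)$ with the two independent linear conditions $\tr_{q/2}(a)=\epsilon_1$, $\tr_{q/2}(b)=\epsilon_2$; equivalently one must count $(a,b)$ with $b^2c^{-1}=t^4+t+1$ (where $a=c^3$), a prescribed value of $\tr_{q/2}(t^3+t)$, and prescribed $\tr_{q/2}(a),\tr_{q/2}(b)$. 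This does not reduce to the standard quadratic Gauss sums used in the planar and norm cases treated earlier, and exhibiting even a single minimum-weight codeword when $\frac{m^2-1}{8}$ is even is exactly where the argument stalls: the natural family $a=b^6$ forces $c=b^2$, hence $t^4+t=0$, so $t\in\{0,1\}$ and $\tr_{q/2}(t^3+t)=0$, which only produces the companion weight $2^{m-1}+2^{(m-1)/2}$ rather than the claimed minimum.
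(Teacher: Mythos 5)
This statement is labelled a \emph{conjecture} in the paper: the authors offer no proof, only the remark that \cite[Lemma 5]{WZ} yields a $[2^m+1,2m,d^{(2)}\geq 2^{m-1}-2^{(m-1)/2}]$ code together with Magma evidence. So there is no paper proof to compare against, and your proposal does not close the conjecture either --- but your central reduction is correct and is genuinely more informative than what the paper records. The substitution $x=z^2$ together with $\tr_{q/2}(bz^2)=\tr_{q/2}(b^{2^{m-1}}z)$ and $\tr_{q/2}(b^{2^{m-1}})=\tr_{q/2}(b)$ does exhibit a coordinate permutation carrying $\C_{(x^3,q)}^{(2)}$ onto $\C_{(x^6,q)}^{(2)}$ (the parameter map $(a,b)\mapsto(a,b^{2^{m-1}})$ is a bijection of $\gf(q)^2$), and by the equivalence facts in Section \ref{sec-subfieldcodes} the two codes and their duals share all parameters and weight distributions. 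This transfers Theorem \ref{th-x3} verbatim: dimension $2m$, the lower bound $d^{(2)}\geq 2^{m-1}-2^{(m-1)/2}$, and the \emph{exact} minimum distance when $\frac{m^2-1}{8}$ is odd --- the last point going beyond the paper's citation of \cite{WZ}. It also shows this conjecture and the paper's first conjecture (on $\C_{(x^3,q)}^{(2)}$) are equivalent, which the authors do not observe. Your identity $S_{x^6}(a,b)=S_{x^3}(a,b^2)$ and the computation $x^6+y^6+(x+y)^6=x^2y^2(x+y)^2$ both check out in characteristic $2$.

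The genuine gaps, which you candidly flag, are exactly the parts that make the statement a conjecture. First, the nine-weight claim requires the joint distribution of the three-valued sum $S(a,b)$ of Lemma \ref{lem-Sab} with the conditions $\tr_{q/2}(a)=\epsilon_1$, $\tr_{q/2}(b)=\epsilon_2$; neither you nor the paper computes this, and it does not reduce to the quadratic Gauss-sum machinery used elsewhere in the paper. Second, when $\frac{m^2-1}{8}$ is even you still lack a single codeword of weight $2^{m-1}-2^{(m-1)/2}$; as you note, the family $a=b^6$ forces $b^2c^{-1}=1$, hence $t\in\{0,1\}$ and $\tr_{q/2}(t^3+t)=0$, which lands on the companion weight $2^{m-1}+2^{(m-1)/2}$. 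Third, the dual-distance claim $A_3^{(2)\perp}>0$ is only sketched: the Pless power-moment route is the one the paper uses in its proved theorems, and your plan to evaluate the twisted moments $\sum_{a,b}S(a,b)^j$ for $j\leq 3$ is plausible, but you have not carried out the computation, so $d^{(2)\perp}=3$ remains unestablished. In short: a correct and useful equivalence that consolidates two of the paper's conjectures into one, but not a proof of either.
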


\begin{conj}
Let $m \equiv 3 \pmod{4}\mbox{ with } m\geq 5$  and $f(x)=x^{2^{(m+1)/2}+2^{(m+1)/4}}$. Then $\C_{(f,q)}^{(2)}$ has parameters $[2^m+1, 2m, 2^{m-1}-2^{(m-1)/2}]$ and nine nonzero weights.
 $(\C_{(f,q)}^{(2)})^\perp$ has parameters $[2^m+1, 2^m-2m+1, 3]$.
\end{conj}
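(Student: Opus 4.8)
The plan is to reduce the governing Weil sum to a Gold-type sum and then invoke the theory of almost bent functions, following the strategy that succeeded for $\C_{(x^3,q)}^{(2)}$ in Theorem~\ref{th-x3}. Write $k=(m+1)/4$, so that $m=4k-1$ and the exponent is $d=2^{(m+1)/2}+2^{(m+1)/4}=2^{2k}+2^{k}=2^{k}(2^{k}+1)$. For a codeword $\bc_{(f,q)}^{(2)}$ the weight is controlled by $N_0(a,b)=\sharp\{x\in\gf(q):\tr_{q/2}(ax^{d}+bx)=0\}$ together with the two extra trace coordinates, and $2N_0(a,b)=2^m+S(a,b)$ with $S(a,b)=\sum_{x\in\gf(q)}\chi(ax^{d}+bx)$, where $\chi$ is the canonical additive character of $\gf(q)$. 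Since $x^{d}=(x^{2^{k}+1})^{2^{k}}$ and the trace is Frobenius-invariant, one has $\tr_{q/2}(ax^{d})=\tr_{q/2}(a^{2^{m-k}}x^{2^{k}+1})$, whence $S(a,b)=\sum_{x\in\gf(q)}\chi(Ax^{2^{k}+1}+bx)$ with $A=a^{2^{m-k}}$. As $a$ runs through $\gf(q)$ so does $A$, so the multiset of values $S(a,b)$ is exactly the Walsh spectrum of the Gold power function $x^{2^{k}+1}$.

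Next I would exploit the Gold structure. Because $m\equiv 3\pmod 4$ is odd and $\gcd(k,m)=\gcd(k,4k-1)=1$, the monomial $x^{2^{k}+1}$ is almost bent, so its Walsh transform takes only the three values $0$ and $\pm 2^{(m+1)/2}$. More precisely, for each fixed $A\in\gf(q)^*$ the component $x\mapsto\tr_{q/2}(Ax^{2^{k}+1})$ is a quadratic Boolean form whose associated symplectic form has radical of dimension $\gcd(2k,m)=1$; hence the form has rank $m-1$, and the map $b\mapsto S(a,b)$ takes the value $0$ exactly $2^{m-1}$ times and the values $\pm 2^{(m+1)/2}$ a total of $2^{m-1}$ times, split as $2^{m-2}\pm 2^{(m-3)/2}$ according to the hyperbolic or elliptic type of the form. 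Substituting $2N_0=2^m+S$ back in and then adjusting by the corrections $0,1,2$ coming from $\tr_{q/2}(a)$ and $\tr_{q/2}(b)$, exactly as in Theorems~\ref{th-x2} and \ref{th-x3}, one obtains the three base weights $2^{m-1}-2^{(m-1)/2}$, $2^{m-1}$, and $2^{m-1}+2^{(m-1)/2}$, each shifted by $\{0,1,2\}$, giving the predicted nine nonzero weights. The dimension is $2m$ since $\bc_{(f,q)}^{(2)}=\bzero$ forces $a=b=0$, and the minimum distance equals $2^{m-1}-2^{(m-1)/2}$ once one checks that this smallest weight is actually attained, refining the bound $d^{(2)}\ge 2^{m-1}-2^{(m-1)/2}$ already available from \cite[Lemma 5]{WZ}.

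The hard part is the counting that upgrades "three Walsh values" to exact frequencies. Although almost bentness pins down the spectrum pointwise, the full weight distribution additionally needs the number of $A\in\gf(q)^*$ for which $\tr_{q/2}(Ax^{2^{k}+1})$ is of hyperbolic versus elliptic type, equivalently the split of $+2^{(m+1)/2}$ against $-2^{(m+1)/2}$ over all $(A,b)$, and this type count has no clean closed form for the Glynn exponent in the way the Carlitz evaluation supplied it for $x^3$. This is precisely the gap that keeps the statement a conjecture: one would need a supplementary evaluation, either via Gauss sums attached to $x^{2^{k}+1}$ or by solving for the unknown frequencies through the first several Pless power moments of \cite[Page 131]{HP}, to fix the two type-counts (and in particular to verify that the minimum weight occurs). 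Once the nine frequencies are secured, the dual parameters follow routinely: Theorem~\ref{th-dualdistance} gives $d^{(2)\perp}\ge 3$, the first four Pless power moments show $A_3^{(2)\perp}>0$, and the primal dimension $2m$ then forces $(\C_{(f,q)}^{(2)})^\perp$ to have parameters $[2^m+1,\,2^m+1-2m,\,3]$.
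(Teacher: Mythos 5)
First, note that the paper offers no proof of this statement: it is posed explicitly as a conjecture, supported only by Magma experiments and by the bound $d^{(2)}\geq 2^{m-1}-2^{(m-1)/2}$ quoted from \cite[Lemma 5]{WZ}, and the authors say they were unable to determine the minimum distance or the weight distribution. So there is no proof in the paper to measure your argument against. Your reduction, however, is correct and is a genuine structural observation that the paper does not record: writing $k=(m+1)/4$, so that $m=4k-1$ and $\gcd(k,m)=\gcd(k,4k-1)=1$, the identity $\tr_{q/2}(ax^{2^{2k}+2^{k}})=\tr_{q/2}(a^{2^{m-k}}x^{2^{k}+1})$ together with $\tr_{q/2}(a)=\tr_{q/2}(a^{2^{m-k}})$ shows that $\C_{(f,q)}^{(2)}$ coincides, codeword for codeword, with the subfield code attached to the Gold monomial $x^{2^{k}+1}$. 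The almost bent Walsh spectrum $\{0,\pm 2^{(m+1)/2}\}$ of that monomial (for $m$ odd, $\gcd(k,m)=1$) then forces the nine candidate weights $2^{m-1}+\epsilon\, 2^{(m-1)/2}+\delta$ with $\epsilon\in\{0,\pm 1\}$ and $\delta\in\{0,1,2\}$, the dimension $2m$, and the bound $d^{(2)}\geq 2^{m-1}-2^{(m-1)/2}$; your rank-$(m-1)$ quadratic-form bookkeeping for the value distribution of $b\mapsto S(a,b)$ at fixed $a\neq 0$ is also standard and correct.

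What you have is nonetheless not a proof, as you yourself acknowledge, and the missing step is exactly the one you name; it is worth stating precisely. One needs the twelve joint frequencies of the sign of $S(a,b)$ against the pair $\left(\tr_{q/2}(a),\tr_{q/2}(b)\right)$. Without them you cannot (i) exhibit a pair with $S(a,b)=+2^{(m+1)/2}$ and $\tr_{q/2}(a)=\tr_{q/2}(b)=0$, which is what upgrades $d^{(2)}\geq 2^{m-1}-2^{(m-1)/2}$ to equality; (ii) confirm that all nine weights actually occur; or (iii) feed the weight distribution into the first four Pless power moments to obtain $A_3^{(2)\perp}>0$ and hence the claimed dual parameters. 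For $k=1$ this joint count is precisely what Carlitz's evaluation and the parametrization $bc^{-1}=t^4+t+1$ supplied in Theorem \ref{th-x3}, and even there the paper could only settle the minimum distance for $\frac{m^2-1}{8}$ odd, not the full weight distribution. For a general Gold exponent $2^k+1$ no analogous closed form is invoked in your argument, so the statement remains a conjecture. Your reduction does at least replace the Glynn exponent by a Gold exponent (and, as a by-product, identifies the $x^6$ code with the $x^3$ code of Theorem \ref{th-x3}), which is where any eventual proof should start.
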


\begin{conj}
Let $m \equiv 1 \pmod{4}\mbox{ with } m \geq 5$  and $f(x)=x^{2^{(m+1)/2}+2^{(3m+1)/4}}$. Then $\C_{(f,q)}^{(2)}$ has parameters $[2^m+1, 2m, 2^{m-1}-2^{(m-1)/2}]$ and nine nonzero weights.
 $(\C_{(f,q)}^{(2)})^\perp$ has parameters $[2^m+1, 2^m-2m+1, 3]$.
\end{conj}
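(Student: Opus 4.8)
The plan is to follow the template of the proof of Theorem~\ref{th-x3}, since $f(x)=x^{2^{(m+1)/2}+2^{(3m+1)/4}}$ is, like $x^3$, a Dembowski--Ostrom monomial: writing $d=2^i+2^j$ with $i=(3m+1)/4$ and $j=(m+1)/2$, the map $x\mapsto \tr_{q/2}(ax^d)$ is an $\gf(2)$-quadratic form on $\gf(q)$ for each $a$. By the trace representation~(\ref{eqn-trace}), the weight of a codeword $\bc_{(f,q)}^{(2)}$ is $q-N_0(a,b)+[\tr_{q/2}(a)\neq 0]+[\tr_{q/2}(b)\neq 0]$, where $N_0(a,b)=\sharp\{x\in\gf(q):\tr_{q/2}(ax^d+bx)=0\}$; the orthogonality relation gives $2N_0(a,b)=q+S(a,b)$ with $S(a,b)=\sum_{x\in\gf(q)}\chi(ax^d+bx)$ and $\chi$ the canonical additive character. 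So everything reduces to the Walsh spectrum $\{S(a,b)\}$.

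First I would evaluate $|S(a,b)|$. For $a=0$ one has $S(0,b)=0$ for $b\neq0$ and $S(0,0)=q$, as in the earlier proofs. For $a\neq0$, I would compute $S(a,b)^2$ exactly as in Theorem~\ref{th-x3}: expanding and applying Lemma~\ref{lem-p-polynomial} to the inner affine $2$-polynomial collapses the double sum to a sum over the kernel of the linearized map $L_a(x)=a^{2^{m-j}}x^{2^{i-j}}+a^{2^{m-i}}x^{2^{m-(i-j)}}$, which is the radical of the alternating form associated with $\tr_{q/2}(ax^d)$. Raising $L_a(x)=0$ to the power $2^{i-j}$ turns it into $x\bigl(a^{2^{m+i-2j}}x^{2^{2(i-j)}-1}+a^{2^{m-j}}\bigr)=0$, whose number of nonzero roots is $0$ or $\gcd(2^{2(i-j)}-1,q-1)=2^{\gcd(2(i-j),m)}-1$. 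Since $m$ is odd, $\gcd(2(i-j),m)=\gcd(i-j,m)=\gcd((m-1)/4,m)=1$, so the radical has at most one nonzero element; because $m$ is odd the alternating form has even rank and hence odd-dimensional radical, forcing the radical to have dimension exactly $1$ for every $a\neq0$. Consequently $\tr_{q/2}(ax^d)$ has rank $m-1$ and $S(a,b)\in\{0,\pm 2^{(m+1)/2}\}$, so $q-N_0(a,b)\in\{2^{m-1},\,2^{m-1}\pm 2^{(m-1)/2}\}$. Combined with the two trailing trace coordinates this yields the nine candidate nonzero weights $2^{m-1}+\varepsilon\, 2^{(m-1)/2}+\delta$ with $\varepsilon\in\{-1,0,1\}$ and $\delta\in\{0,1,2\}$, recovering the lower bound $2^{m-1}-2^{(m-1)/2}$ already obtained from \cite{WZ}.

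Next I would establish that each of these nine weights actually occurs (so that the code is genuinely nine-weight with minimum distance $2^{m-1}-2^{(m-1)/2}$), and then handle the dual. For the minimum distance this means exhibiting a pair $(a,b)$ with $a\neq0$, $S(a,b)=+2^{(m+1)/2}$, and $\tr_{q/2}(a)=\tr_{q/2}(b)=0$; as in the final display of the proof of Theorem~\ref{th-x3}, I would restrict to a convenient subfamily on which the quadratic form has the $+$ type and count solutions by orthogonality to show the frequency is positive for $m\geq5$. The dual statement follows the pattern of all earlier theorems: Theorem~\ref{th-dualdistance} gives $d^{(2)\perp}\geq3$ because $\C_{(f,q)}$ has dual distance $3$, so $\C_{(f,q)}^{(2)}$ is projective; exhibiting one weight-$3$ dual codeword (equivalently, three binary columns of the generator matrix that are $\gf(2)$-dependent) gives $d^{(2)\perp}=3$, and the dimension $2^m+1-2m$ follows from $\dim\C_{(f,q)}^{(2)}=2m$.

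The main obstacle is precisely the reason the statement is only conjectured: pinning down the \emph{sign} of $S(a,b)$, i.e.\ the \emph{type} of the quadratic form $\tr_{q/2}(ax^d)$ as $a$ varies, together with its interaction with the linear conditions $\tr_{q/2}(a)=0$ and $\tr_{q/2}(b)=0$. For $x^3$ this sign information was supplied by the explicit Carlitz evaluation used in Lemma~\ref{lem-Sab}; for the Glynn exponent $2^{(m+1)/2}+2^{(3m+1)/4}$ no comparable closed form for the sign distribution is available, which blocks the computation of the nine frequencies and hence the full weight enumerator. Establishing the positivity of all nine frequencies (and thus the nine-weight claim in full generality) therefore seems to require a genuinely new handle on the type distribution of this family of quadratic forms; the magnitude $|S(a,b)|$, the minimum distance, and the dual parameters, by contrast, are within reach of the argument sketched above.
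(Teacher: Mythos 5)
This statement is one of the paper's \emph{conjectures}: the authors explicitly state that they were unable to determine the minimum distance and weight distribution for the Glynn exponents, offer only the bound $d^{(2)}\geq 2^{m-1}-2^{(m-1)/2}$ via \cite[Lemma 5]{WZ}, and support the rest by Magma experiments. There is therefore no proof in the paper to compare against, and your proposal is not a proof either --- you say so yourself in the final paragraph. What you have written is an accurate roadmap: the reduction to the Walsh spectrum $S(a,b)$, the computation that the radical of the quadratic form $\tr_{q/2}(ax^{2^i+2^j})$ has dimension exactly $1$ (using $\gcd(i-j,m)=\gcd((m-1)/4,m)=1$ and the parity argument), hence $S(a,b)\in\{0,\pm 2^{(m+1)/2}\}$ and the nine candidate weights $2^{m-1}+\varepsilon 2^{(m-1)/2}+\delta$. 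This is exactly the skeleton the paper uses for $x^3$ in Theorem \ref{th-x3}, and it correctly recovers everything the paper actually knows about this code. You also correctly identify the genuine obstruction: without an analogue of Carlitz's explicit evaluation (Lemma \ref{lem-Sab}) there is no handle on the \emph{sign} of $S(a,b)$, i.e.\ on the type distribution of the quadratic forms, jointly with the conditions $\tr_{q/2}(a)=0$, $\tr_{q/2}(b)=0$.

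The one place where I would push back is your closing claim that the minimum distance and the dual parameters are ``within reach.'' Pinning the minimum distance at exactly $2^{m-1}-2^{(m-1)/2}$ requires exhibiting a pair $(a,b)$ with $S(a,b)=+2^{(m+1)/2}$ \emph{and} $\tr_{q/2}(a)=\tr_{q/2}(b)=0$; in the $x^3$ case this was done by evaluating $S(1,1)$ explicitly via Carlitz, which is precisely the sign information you concede is unavailable here. (Knowing only that the rank is $m-1$ guarantees both signs occur for each fixed $a\neq 0$ as $b$ varies, but not their interaction with the two trace hyperplanes.) Likewise, in every theorem of the paper the equality $d^{(p)\perp}=3$ is extracted from the first four Pless power moments, which require the full weight distribution --- again blocked by the missing frequencies; your alternative of exhibiting three $\gf(2)$-dependent columns, i.e.\ solving $x+y+z=0$ and $f(x)+f(y)+f(z)=0$ with $x,y,z$ distinct, is a plausible workaround but is asserted rather than carried out. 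So the proposal should be read as a correct partial analysis plus an honest identification of the open core of the conjecture, not as a proof of it.
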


By Theorem \ref{thm-knownopolys}, there exist oval polynomials $f$ which are not monomials. It will be very interesting if the parameters of $\C_{(f,q)}^{(2)}$ can be determined with these polynomials.

\section{Summary and concluding remarks}

 In this paper, we first presented a general construction of $[q+1, 2, q]$ MDS codes $\C_{(f,q)}$ over $\gf(q)$ from functions $f$ under certain conditions. Then we  studied the $p$-ary subfield codes of some of the $[q+1, 2,q]$ MDS codes over $\gf(q)$ by selecting some special $f$. These subfield codes and their duals are summarised as follows:
 \begin{enumerate}
 \item A family of three-weight nearly optimal $[p^m+1, m+1, (p-1)p^{m-1}]$ codes with respect to the Griesmer bound whose duals have parameters $[p^m+1, p^m-m, 3]$ and are dimension-optimal with respect to the sphere-packing bound  for $m\geq2$ and any prime $p$ (see Theorem \ref{th-f=1}).

 \item A family of eight-weight $[p^{2l}+1,3l,p^{l-1}(p^{l+1}-p^{l}-1)]$ codes whose duals have parameters
 $[p^{2l}+1,p^{2l}+1-3l,3]$  and are nearly optimal with respect to
 the sphere-packing bound for $l\geq 2$ and  any prime $p$  (see Theorem \ref{th-code1}).

 \item A family of nine-weight $[p^m+1,2m,p^{m-1}(p-1)-p^{\frac{m-1}{2}}]$ codes whose duals have parameters $[p^{m}+1,p^{m}+1-2m,3]$  and are nearly optimal with respect to the sphere-packing bound for odd $m\geq 3$ and odd prime $p$ (see Theorem \ref{th-x2}).

 \item A family of binary $[2^m+1,2m, 2^{m-1}- 2^{(m-1)/2}]$ code for odd $\frac{m^2-1}{8}$ and odd $m\geq 3$, and $[2^m+1,2m, d^{(2)}\geq2^{m-1}- 2^{(m-1)/2}]$ code for even $\frac{m^2-1}{8}$ and odd $m\geq 3$ (see Theorem \ref{th-x3}).

 \item A family of four-weight binary $[2^m+1, m+1, 2]$ codes whose duals have parameters $[2^m+1, 2^m-m, 3]$
 and are dimension-optimal  with respect to the sphere-packing bound  for $m\geq 2$ (see Theorem \ref{th-ovalcode1}).
 \end{enumerate}
Two families of these codes over $\gf(p)$ are dimension-optimal and some families are nearly optimal.  Examples in this paper showed that these
codes over $\gf(p)$ are optimal in some cases.

As pointed out earlier, it is even very hard to determine the parameters of the subfield code of the Simplex code which is a one-weight code over $\gf(p^m)$ and very simple. It is in general very difficult to determine the parameters of subfield codes of linear codes. In this paper, we presented several conjectures about the parameters of some subfield codes.  The reader is cordially invited to settle them.

 Finally, we point out that the subfield codes presented in this paper have various parameters and weight distributions,
 though all of them are constructed from $[q+1, 2, q]$ MDS codes over $\gf(q)$. Although all $[q+1, 2, q]$ MDS codes over
 $\gf(q)$ are monomially equivalent and may not be interesting in many senses, they are very attractive for constructing
 very good linear codes over small fields.  A contribution of this paper is the proof of the fact that $[q+1, 2, q]$ MDS codes over  $\gf(q)$ are very useful and interesting in coding theory.

%


\begin{thebibliography}{99}

\bibitem{BM72}
L. D. Baumert, R. J. McEliece, ``Weights of irreducible cyclic codes,"
\emph{Information and Control}, vol. 20, 158--175, 1972.


\bibitem{Carlitz79}
L. Carlitz, ``Explicit evaluation of certain exponential sums,'' \emph{Math. Scand.}, vol. 44, pp. 5--16, 1979.

\bibitem{Dels}
P. Delsarte, ``On subfield subcodes of modified Reed-Solomon codes,"
\emph{IEEE Trans. Inf. Theory}, vol. 21, pp. 575--576, 1975.

\bibitem{Dingbk19}
C. Ding, \emph{Designs from Linear Codes}, World Scientific, Singapore, 2019.

\bibitem{DH19}
C. Ding, Z. Heng, ``The subfield codes of ovoid codes," \emph{IEEE Trans. Inf. Theory}, vol. 65, no. 8, pp. 4715--4729, August 2019.

\bibitem{DY13}
C. Ding, J. Yang, ``Hamming weights in irreducible cyclic codes,"
{\it Discrete Mathematics}, vol. 313, 434--446, 2013.



\bibitem{HD19}
Z. Heng, C. Ding, ``The subfield codes of hyperoval and conic codes," \emph{Finite Fields and Their Applications},
vol. 56, pp. 308--331, 2019.

\bibitem{HDW20}
 Z. Heng, C. Ding, W. Wang, ``Optimal binary linear codes from maximal arcs,'' \emph{IEEE Trans. Inf. Theory}, DOI: 10.1109/TIT.2020.2970405.

 \bibitem{HWD}
 Z. Heng, Q. Wang, C. Ding, ``Two families of optimal linear codes and their subfield codes,'' \emph{IEEE Trans. Inf. Theory}, DOI: 10.1109/TIT.2020.3006846.

\bibitem{HP}
 W. C. Huffman and V. Pless, Fundamentals of Error-Correcting Codes. Cambridge, U.K.: Cambridge Univ. Press (2003).



\bibitem{LN83} R. Lidl and H. Niederreiter, \emph{Finite Fields,} Cambridge University Press,
Cambridge, 1997



\bibitem{MS77}
F. J. MacWilliams, N. J. A. Sloane,  \emph{The Theory of Error-Correcting Codes}, North-Holland, Amsterdam, 1977.





\bibitem{WZ}
X. Wang and D. Zheng, ``The subfield codes of several classes of linear codes,"
\emph{Cryptography and Communications}, https://doi.org/10.1007/s12095-020-00432-4.

\end{thebibliography}
\end{document}